\newcommand\bdf{\boldsymbol{f}}
\newcommand\bx{\boldsymbol{x}}
\newcommand\bv{\boldsymbol{v}}
\newcommand\bu{\boldsymbol{u}}
\newcommand\bn{\boldsymbol{n}}
\newcommand\bw{\boldsymbol{w}}
\newcommand\lu{\overline{\boldsymbol{u}}}
\newcommand\lth{\overline{\theta}}
\newcommand\mm{\mathfrak{m}}
\newcommand\Kn{\mathit{Kn}}
\DeclareMathOperator\re{Re}
\DeclareMathOperator\im{Im}
\newtheorem{theorem}{Theorem}
\newtheorem{lemma}{Lemma}
\theoremstyle{remark}
\newtheorem*{remark}{Remark}
\title{Burnett Spectral Method for High-Speed Rarefied Gas Flows}
\author{Zhicheng Hu}
\address[Zhicheng Hu]{Department of Mathematics, College of Science,
    Nanjing University of Aeronautics and Astronautics, Nanjing
    210016, China}
\email{huzhicheng@nuaa.edu.cn}
\author{Zhenning Cai}
\address[Zhenning Cai]{Department of Mathematics, National University of Singapore,
  Level 4, Block S17, 10 Lower Kent Ridge Road, Singapore 119076}
\email{matcz@nus.edu.sg}
\thanks{Zhicheng Hu's work is partially supported by the National
  Natural Science Foundation of China (11601229), and the Natural
  Science Foundation of Jiangsu Province of China
  (BK20160784). Zhenning Cai's work was supported by National
  University of Singapore Startup Fund under grant
  No. R-146-000-241-133. The computational resources are supported by
  the High-performance Computing Platform of Peking University, China}
\keywords{Boltzmann equation, Burnett spectral method, steady-state preserving}
\begin{document}
\maketitle

\begin{abstract}
We introduce a numerical solver for the spatially inhomogeneous Boltzmann
equation using the Burnett spectral method. The modelling and discretization of
the collision operator are based on the previous work \cite{Cai2019}, which is
the hybridization of the BGK operator for higher moments and the quadratic
collision operator for lower moments. To ensure the preservation of the
equilibrium state, we introduce an additional term to the discrete collision
operator, which equals zero when the number of degrees of freedom tends to
infinity. Compared with the previous work \cite{Hu2019}, the computational cost
is reduced by one order. Numerical experiments such as shock structure
calculation and Fourier flows are carried out to show the efficiency and
accuracy of our numerical method.
\end{abstract}

\section{Introduction}
In rarefied gas dynamics, models based on continuum hypothesis such as Euler
equations and Navier-Stokes equations cannot provide accurate prediction of the
flow structure. To correctly describe the motion of fluids, one needs to employ
gas kinetic theory to capture the rarefaction effects. However, switching from
continuum models to kinetic models may greatly increase the computational
difficulty, since the kinetic theory uses the distribution function to describe
the velocity distribution of gas molecules, which doubles the dimensions of the
unknown function in the differential equations. Thus, solving kinetic models
deterministically has long been considered intractable, and the DSMC (direct
simulation of Monte Carlo) method has been playing an important role in the
simulation of rarefied gases \cite{Bird1963, Bird1994}.

Nowadays, due to the fast improvement of the CPU performance, researchers have
tried to solve the six-dimensional kinetic equations deterministically
\cite{Huang2012, Dechriste2016, Dimarco2018}. At the same time, a number of new
ideas have been proposed to accelerate the numerical solver \cite{Filbet2011,
Xu2011, Dimarco2019}. In particular, for the Boltzmann equation,
which has an additional difficulty due to its highly complicated binary
collision term, a significant progress on fast algorithms has been made in
recent years \cite{Wu2013, Alekseenko2014, Gamba2017, Gamba2018, Alekseenko2019,
Kitzler2019}. These works have shown great promise for practical applications of
these deterministic Boltzmann solvers in the near future. Our work also
contributes to this research field. In this paper, we are going to introduce a
new numerical solver for the spatially inhomogeneous Boltzmann equation.

Currently, the most popular numerical solver for the binary collision term of
the Boltzmann equation is likely to be the Fourier spectral method
\cite{Pareschi1996, Bobylev1997} and its variations \cite{Gamba2009, Mouhot2006,
Filbet2015}. However, for general gas molecules, the time complexity for
evaluating the collision term is quadratic in the number of degrees of freedom
in the velocity space. Thus, when a large number of Fourier modes are needed to
resolve the distribution function (e.g., when the distribution function is
discontinuous), this still introduces a large computational cost. In
\cite{Cai2015, Wang2019, Cai2019}, the authors have been trying to reduce the
computational cost by restricting the evaluation of the expensive collision term
only to a small number of degrees of freedom, so that the computational of the
collision term may still be affordable even for a relatively large number of
degrees of freedom in the velocity space. To achieve such a goal,
Hermite/Burnett polynomials are used instead of Fourier basis functions, so that
the idea of BGK-type modeling can be applied. Such a strategy has been verified
in the spatially inhomogeneous problems \cite{Cai2018, Hu2019}. The price to pay
is a higher time complexity for the expensive part compared with the Fourier
spectral method. In \cite{Cai2019}, it has been observed in the numerical
solution of spatially homogeneous Boltzmann equation that the implementation
using Burnett polynomials (orthogonal polynomials based on spherical
coordinates) is much faster than Hermite polynomials (orthogonal polynomials
based on Cartesian coordinates), despite their mathematical equivalence. This
work can be considered as a test of its performance with the presence of spatial
variables.

For the special implementation of the collision term \cite{Wang2019, Cai2019},
the transition from spatially homogeneous case to spatially inhomogeneous case
is not as straightforward as the Fourier spectral method (as discussed in
\cite{Hu2019}). Such a discrete collision term automatically conserves mass,
momentum and energy, while it does not preserve the equilibrium state without
additional numerical tricks. In this work, we propose a remedy of such a
problem, which is an improved version of \cite{Hu2019} with lower time
complexity. The method will be tested in several examples in one spatial
dimension. The results show both good efficiency and high accuracy.

The rest of this paper is organized as follows. Section \ref{sec:Boltzmann} is
a review of the Boltzmann equation and Burnett's expansion of the distribution
function. Our numerical method will be introduced in detail in Section
\ref{sec:algorithm}. Numerical tests will be given in Section
\ref{sec:numerical}, and we conclude this paper by a brief summary in Section
\ref{sec:conclusion}.

\section{Boltzmann equation and the Burnett spectral method}
\label{sec:Boltzmann}
To better describe our numerical algorithm, we would like to first clarify the
mathematical model to be solved and the framework of the numerical method. Some
relevant previous research works will also be reviewed in this section.

\subsection{Boltzmann equation}
The governing equation for the rarefied gas flow is the Boltzmann equation:
\begin{displaymath}
\frac{\partial f}{\partial t} + \bv \cdot \nabla_{\bx} f = Q[f,f],
\end{displaymath}
where $\bx = (x,y,z)^T \in \mathbb{R}^3$ is the spatial variable, $\bv \in (v_x,
v_y, v_z)^T \in \mathbb{R}^3$ is the velocity variable, and $f(t,\bx,\bv)$ is
the distribution function. The right-hand side $Q[f,f]$ is the collision term,
which will be detailed later. In this paper, we restrict ourselves to one
spatial dimension, so that the Boltzmann equation can be written as
\begin{equation} \label{eq:1D_Boltzmann}
\frac{\partial f}{\partial t} + v_x \frac{\partial f}{\partial x} = Q[f,f].
\end{equation}
With the initial condition
\begin{equation} \label{eq:init}
f(0,x,\bv) = f_0(x,\bv),
\end{equation}
the equation \eqref{eq:1D_Boltzmann} holds for any $t > 0$, $x \in I \subset
\mathbb{R}$, and $\bv \in \mathbb{R}^3$, where the interval $I = (a,b)$ can
either be finite or infinite. When $a > -\infty$, we consider Maxwell's wall
boundary condition at $x=a$. The solid wall at $x = a$ may have a velocity
$\bu_a^W(t)$. In this paper, we assume that the $x$-component of $\bu_a^W(t)$
is zero so that the computational domain does not change. Maxwell's wall
boundary condition assumes that among all the particles that hit the wall, some
particles undergo specular reflection, and others undergo diffusive reflection.
The proportion of the particles that undergo diffusive reflection is called the
accommodation coefficient $\chi_a$. For a solid wall with temperature
$T_a^W(t)$ at time $t$, the boundary condition can be formulated as
\begin{equation} \label{eq:Maxwell}
f(t,a,\bv) = \chi_a f_a^W(t,\bv) + (1-\chi_a) f(t,a,\overline{\bv})
  \qquad \text{for} \quad v_x > 0,
\end{equation}
where $\overline{\bv} = (-v_x, v_y, v_z)^T$ and
\begin{equation} \label{eq:wall_Maxwellian}
f_a^W(t,\bv) = \frac{n_a^W(t)}{[2\pi \theta_a^W(t)]^{3/2}}
  \exp \left( -\frac{|\bv - \bu_a^W(t)|^2}{2\theta_a^W(t)} \right).
\end{equation}
In \eqref{eq:wall_Maxwellian}, the quantity $\theta_a^W(t)$ is the temperature
of the gas represented in the unit of specific energy, defined by
\begin{displaymath}
\theta_a^W(t) = k_B T_a^W(t) / \mm,
\end{displaymath}
where $k_B$ is the Boltzmann constant, and $\mm$ is the mass of a single
molecule.  The quantity $n_a^W(t)$ is chosen such that the ``no mass flux''
boundary condition is satisfied. Its precise expression is
\begin{equation} \label{eq:nW}
n_a^W(t) = \sqrt{\frac{2\pi}{\theta_a^W(t)}}
  \int_{-\infty}^{+\infty} \int_{-\infty}^{+\infty} \int_{-\infty}^0
    v_x f(t,a,\bv) \,\mathrm{d}v_x \,\mathrm{d}v_y \,\mathrm{d}v_z.
\end{equation}
Similarly, if $b <
+\infty$ and the solid wall at $x=b$ has velocity $\bu_b^W(t)$, temperature
$\theta_b^W(t)$, and accommodation coefficient $\chi_b$, then the boundary
condition at $x=b$ is
\begin{displaymath}
f(t,b,\bv) = \frac{\chi_b n_b^W(t)}{[2\pi \theta_b^W(t)]^{3/2}}
  \exp \left( -\frac{|\bv - \bu_b^W(t)|^2}{2\theta_b^W(t)} \right)
  + (1-\chi_b) f(t,b,\overline{\bv}) \qquad \text{for} \quad v_x < 0,
\end{displaymath}
where $n_b^W(t)$ is given by
\begin{displaymath}
n_b^W(t) = \sqrt{\frac{2\pi}{\theta_b^W(t)}}
  \int_{-\infty}^{+\infty} \int_{-\infty}^{+\infty} \int_0^{+\infty}
    v_x f(t,b,\bv) \,\mathrm{d}v_x \,\mathrm{d}v_y \,\mathrm{d}v_z.
\end{displaymath}
Here we remind the readers that the boundary conditions need to be prescribed
only for a half of the distribution function which describes the particles
moving into the domain $I$.

The collision term $Q[f,f]$ is the most complicated part in the Boltzmann
equation, whose general form is
\begin{equation} \label{eq:collision}
Q[f,f](t,x,\bv) = \int_{\mathbb{R}^3} \int_{\mathbb{S}_+^2}
  B(\bv - \bv_1, \bn) [f(t,x,\bv_1') f(t,x,\bv') - f(t,x,\bv_1) f(t,x,\bv)]
  \,\mathrm{d}\bn \,\mathrm{d}\bv_1.
\end{equation}
Here $\bv_1'$ and $\bv'$ are post-collisional velocities:
\begin{displaymath}
\bv_1' = \bv_1 - [(\bv_1 - \bv) \cdot \bn] \bn, \qquad
\bv' = \bv - [(\bv - \bv_1) \cdot \bn] \bn,
\end{displaymath}
and $B(\cdot,\cdot)$ is the collision kernel determined by the potential
function between gas molecules. It can be seen that for any function $g(\bv)$
defined on the velocity space, the collision term $Q[g,g](\bv)$ can be defined
similar to \eqref{eq:collision} with $t$ and $x$ removed. For any distribution
function $g(\bv)$, the corresponding collision term satisfies the following
conservation property:
\begin{displaymath}
\int_{\mathbb{R}^3}
  \begin{pmatrix} 1 \\ \bv \\ \frac{1}{2} |\bv|^2 \end{pmatrix}
  Q[g,g](\bv) \,\mathrm{d}\bv = 0,
\end{displaymath}
which indicates the local conservation of mass, momentum and energy. Another
important property is
\begin{displaymath}
Q[\mathcal{M}, \mathcal{M}] = 0  \quad \Longleftrightarrow \quad
  \mathcal{M}(\bv) = \exp(\alpha + \boldsymbol{\beta} \cdot \bv + \gamma |\bv|^2)
  \text{ and } \gamma < 0.
\end{displaymath}
It shows that the manifold of local equilibrium states are formed by isotropic
Gaussian distribution functions, which are also called Maxwellians.

In this paper, we will mainly focus on the specific collision models induced by
the inverse power laws, in which the force between two molecules is always
repulsive, and the magnitude is proportional to a certain negative power of
the distance between them ($F = \kappa r^{-\eta}$ with $F$, $r$, $\eta$
and $\kappa$ being the force, distance, power index and the intensity constant,
respectively). By choosing a reference velocity $\lu$ and reference temperature
$\overline{T}$, it has been demonstrated in \cite{Hu2019} that the collision
term for inverse power laws can be nondimensionalized by
\begin{equation} \label{eq:nondim}
Q[g,g](\lu + \sqrt{\lth} \bv^*) = \frac{\rho^2}{\mm^2 \lth^{\frac{3}{2}}}
  \left( \frac{2\kappa}{\mm} \right)^{\frac{2}{\eta-1}}
  \lth^{\frac{\eta-5}{2(\eta-1)}} Q^*[g^*,g^*](\bv^*),
\end{equation}
where $\lth = k_B \overline{T} / \mm$, and the density of the gas $\rho$ as
well as the nondimensionalized distribution function $g^*$ are given by
\begin{equation} \label{eq:g_star}
\rho = \mm \int_{\mathbb{R}^3} g(\bv) \,\mathrm{d}\bv, \qquad
g^*(\bv^*) = \frac{\mm \lth^{3/2}}{\rho} g(\lu + \sqrt{\lth} \bv^*).
\end{equation}
In \eqref{eq:nondim}, the dimensionless collision operator $Q^*$ is independent
of $\kappa$ and $\mm$. The only parameter in $Q^*$ is the power index $\eta$.
This formula can be further simplified by introducing the reference viscosity
coefficient (see \eqref{eq:A_2} in the appendix for the definition of
$A_2(\eta)$):
\begin{equation} \label{eq:mu}
\overline{\mu} = \frac{5\mm(\lth/\pi)^{1/2} (2\mm\lth/\kappa)^{2/(\eta-1)}}%
  {8 A_2(\eta) \Gamma(4-2(\eta-1))},
\end{equation}
by which we find that
\begin{displaymath}
Q[g,g](\lu + \sqrt{\lth} \bv^*) = \frac{\rho^2}{\mm \overline{\mu} \sqrt{\lth}}
  \,\hat{Q}[g^*,g^*](\bv^*),
\end{displaymath}
where $\hat{Q}$ is the new dimensionless collision operator given by
\begin{displaymath}
\hat{Q}[g^*,g^*] = \frac{5}%
  {4^{\frac{3}{2} - \frac{2}{\eta-1}} \sqrt{\pi} A_2(\eta)\Gamma(4-2(\eta-1))}
  Q^*[g^*,g^*].
\end{displaymath}
As will be seen later, such a nondimensionalization is closely related to the
choice of parameters in our numerical scheme.

\begin{remark}
Here we have introduced two dimensionless collision terms $Q^*[f^*,f^*]$
and $\hat{Q}[f^*,f^*]$, which differ only by a constant. In \cite{Cai2019,
Wang2019}, the dimensionless collision term $Q^*[f^*,f^*]$ is used, while in
this work, we are going to use $\hat{Q}[f^*,f^*]$ in our further discussion. Since
our numerical method is built based on the work \cite{Cai2019}, we point out
the difference here to avoid confusion.
\end{remark}

\subsection{Burnett spectral method for the spatially homogeneous Boltzmann
equation} \label{sec:spectral}
For the spatially homogeneous Boltzmann equation, the Burnett spectral method
has been introduced in \cite{Cai2019}, where the Burnett method is introduced in
the dimensionless setting. In what follows, we will provide a brief review of
this method. Since the flow is assumed to be spatially homogeneous, the
variable $x$ will be omitted temporarily in this subsection.

We will present the method based on the dimensionless collision term
$\hat{Q}[f^*,f^*]$, where $f^*$ is the dimensionless distribution function
defined similar to \eqref{eq:g_star}:
\begin{equation} \label{eq:nondim_f}
f^*(t^*,\bv^*) = \frac{\mm \lth^{3/2}}{\rho}
  f \left( \frac{\overline{\mu} t^*}{\rho \lth}, \lu + \sqrt{\lth} \bv^* \right),
  \qquad \rho = \mm \int_{\mathbb{R}^3} f(t,\bv) \,\mathrm{d}\bv,
\end{equation}
Note that $\rho$ is independent of $t$ since collision does not change the
number density. To write down the spectral expansion, we first define the
Burnett polynomials \cite{Burnett1936}:
\begin{equation} \label{eq:p_lmn}
p_{lmn}(\bv^*) = \sqrt{\frac{2^{1-l} \pi^{3/2} n!}{\Gamma(n+l+3/2)}}
  L_n^{(l+1/2)} \left( \frac{|\bv^*|^2}{2} \right) |\bv^*|^l
  Y_l^m \left( \frac{\bv^*}{|\bv^*|} \right),
\qquad l,n = 0,1,\cdots,\quad m = -l,\cdots,l.
\end{equation}
where $L_n^{(\alpha)}(\cdot)$ and $Y_l^m(\cdot)$ are, respectively, the Laguerre
polynomials and spherical harmonics, whose definitions are given in detail in
the appendix (see \eqref{eq:LY}). Let $\omega(\bv^*)$ be the three-dimensional
standard normal distribution
\begin{displaymath}
\omega(\bv^*) = \frac{1}{(2\pi)^{3/2}}
  \exp \left( -\frac{|\bv^*|^2}{2} \right).
\end{displaymath}
Then the following orthogonality holds:
\begin{displaymath}
\int_{\mathbb{R}^3} [p_{l_1 m_1 n_1}(\bv^*)]^{\dagger}
  p_{l_2 m_2 n_2}(\bv^*) \omega(\bv^*) \,\mathrm{d}\bv^*
  = \delta_{l_1 l_2} \delta_{m_1 m_2} \delta_{n_1 n_2},
\end{displaymath}
where $\dagger$ denotes the complex conjugate. The Petrov-Galerkin spectral
method can be derived by approximating the dimensionless distribution function
$f^*(t^*,\bv^*)$ by
\begin{displaymath}
f^*_M(t^*,\bv^*) =
  \sum_{l=0}^M \sum_{m=-l}^l \sum_{n=0}^{\lfloor (M-l)/2 \rfloor}
  \tilde{f}_{lmn}^*(t^*) p_{lmn}(\bv^*) \omega(\bv^*),
\end{displaymath}
and then obtaining equations for the coefficients by calculating
\begin{multline} \label{eq:spectral}
\int_{\mathbb{R}^3} [p_{lmn}(\bv^*)]^{\dagger}
  \, \frac{\partial f^*_M(t^*,\bv^*)}{\partial t^*} \,\mathrm{d}\bv^* =
\int_{\mathbb{R}^3} [p_{lmn}(\bv^*)]^{\dagger}
  \, \hat{Q}[f^*_M, f^*_M](t^*,\bv^*) \,\mathrm{d}\bv^*, \\
l = 0,\cdots,M, \qquad m = -l,\cdots,l, \qquad
  n = 0,\cdots,\lfloor (M-l)/2 \rfloor.
\end{multline}
The general result is
\begin{equation} \label{eq:ode}
\frac{\mathrm{d} \tilde{f}_{lmn}^*}{\mathrm{d}t^*} =
  \sum_{l_1=0}^M \sum_{m_1=-l_1}^{l_1} \sum_{n_1=0}^{\lfloor (M-l_1)/2 \rfloor}
  \sum_{l_2=0}^M \sum_{m_2=-l_2}^{l_2} \sum_{n_2=0}^{\lfloor (M-l_2)/2 \rfloor}
  A_{lmn}^{l_1 m_1 n_1, l_2 m_2 n_2}
  \tilde{f}_{l_1 m_1 n_1}^* \tilde{f}_{l_2 m_2 n_2}^*,
\end{equation}
where the constant coefficients $A_{lmn}^{l_1 m_1 n_1, l_2 m_2 n_2}$ depend on
the collision model. In \cite{Wang2019, Cai2019}, the authors introduced an
algorithm computing these coefficients for all inverse power law models, which
is briefly described in the appendix. The computational cost of \eqref{eq:ode}
with all $l,m,n$ is $O(M^8)$ since $A_{lmn}^{l_1 m_1 n_1, l_2 m_2 n_2} = 0$ when
$m \neq m_1 + m_2$; and the discretization \eqref{eq:ode} automatically
conserves mass, momentum and energy.

Our numerical method is based on the discretization \eqref{eq:ode}. However,
such discretization is not readily applicable for the spatially inhomogeneous
case. The main reason is that \eqref{eq:ode} does not preserve the equilibrium
state. In detail, when $f^*(0,\bv^*) = \exp(\alpha + \boldsymbol{\beta} \cdot
\bv^* + \gamma |\bv^*|^2)$, after computing $f_M^*(0,\bv^*)$ by projection, the
right-hand side of \eqref{eq:ode} is nonzero. As mentioned in the introduction,
one of the main contributions of this paper is to fix such a problem.

\section{Numerical method} \label{sec:algorithm}
In this section, our numerical method will be provided in detail. To begin with,
we will resume the discussion at the end of Section \ref{sec:spectral}, and
develop an algorithm which preserves the steady state.

\subsection{Restoring the dimension and preserving the steady state}
In this section, we will still keep the spatial variable $x$ omitted, and focus
on the homogeneous Boltzmann equation. From \eqref{eq:spectral}, it can be seen
that the spectral method in Section \ref{sec:spectral} works only when $f^*(t^*, \cdot)
\in L^2(\mathbb{R}^3; \omega^{-1} \,\mathrm{d}\bv^*)$ for all $t^*$, i.e.,
\begin{displaymath}
\int_{\mathbb{R}^3} |f^*(t^*, \bv^*)|^2 [\omega(\bv^*)]^{-1} \,\mathrm{d}\bv^*
  < +\infty, \qquad \forall t^* \geqslant 0.
\end{displaymath}
By \eqref{eq:nondim_f}, it can be seen that the original distribution function
$f(t,\bv)$ must satisfy
\begin{equation} \label{eq:scaling}
\int_{\mathbb{R}^3} |f(t, \bv)|^2
  \left[ \omega \left( \frac{\bv - \lu}{\sqrt{\lth}} \right) \right]^{-1}
\,\mathrm{d}\bv < +\infty, \qquad \forall t \geqslant 0.
\end{equation}
This equation shows that the parameter $\lth$ is not only a parameter in the
nondimensionalization, but also playing the role of the scaling factor in the
spectral method for problems on unbounded domains \cite{Tang1993}. It is easy to
see that when $\lth$ is larger, the equation \eqref{eq:scaling} allows more
distribution functions. Therefore, we need to choose a sufficiently large $\lth$
to include all possible distribution functions. The existence of such $\lth$ has
been theoretically guaranteed in \cite{Alonso2017}. To emphasize the role of
$\lth$ in our algorithm, we will present our algorithm using the original
distribution function $f(t,\bv)$. Thus the approximate distribution function is
\begin{equation} \label{eq:f_M}
f_M(t,\bv) = \sum_{l=0}^M \sum_{m=-l}^l \sum_{n=0}^{\lfloor (M-l)/2 \rfloor}
  \tilde{f}_{lmn}(t) p_{lmn}^{[\lu,\lth]}(\bv) \omega^{[\lu,\lth]}(\bv),
\end{equation}
where
\begin{equation} \label{eq:p_omega}
p_{lmn}^{[\lu,\lth]}(\bv) = \lth^{-(l+2n)/2}
  p_{lmn} \left( \frac{\bv - \lu}{\sqrt{\lth}} \right), \qquad
\omega^{[\lu,\lth]}(\bv) = \frac{1}{\mm (2\pi \lth)^{3/2}}
  \exp \left( -\frac{|\bv - \lu|^2}{2\lth} \right).
\end{equation}
The discretization \eqref{eq:ode} becomes
\begin{displaymath}
\frac{\mathrm{d} \tilde{f}_{lmn}}{\mathrm{d}t} = \frac{\,\lth\,}{\overline{\mu}}
  \sum_{l_1=0}^M \sum_{m_1=-l_1}^{l_1} \sum_{n_1=0}^{\lfloor (M-l_1)/2 \rfloor}
  \sum_{l_2=0}^M \sum_{m=-l_2}^{l_2} \sum_{n_2=0}^{\lfloor (M-l_2)/2 \rfloor}
    A_{lmn}^{l_1 m_1 n_1, l_2 m_2 n_2} \lth^{n - n_1 - n_2 + (l - l_1 - l_2)/2}
    \tilde{f}_{l_1 m_1 n_1} \tilde{f}_{l_2 m_2 n_2}.
\end{displaymath}
For simplicity, the right-hand side of the above equation will be named
$\tilde{Q}_{lmn}[f_M]$ hereafter.

Now we are going to change the right-hand side of the above scheme such that the
method preserves the steady state. When the distribution function $f(t,\bv)$ is
a Maxwellian $\mathcal{M}(t,\bv)$, the Maxwellian can be determined by the first
few coefficients in the series expansion:
\begin{displaymath}
\mathcal{M}(t,\bv) = \frac{\rho(t)}{\mm [2\pi \theta(t)]^{3/2}}
  \exp \left( -\frac{|\bv - \bu(t)|^2}{2 \theta(t)} \right),
\end{displaymath}
where
\begin{gather*}
\rho(t) = \tilde{f}_{000}(t), \quad
\bu(t) = \lu(t) + \left( \frac{\tilde{f}_{100}(t)}{\tilde{f}_{000}(t)}, ~
  - \sqrt{2} \re \frac{\tilde{f}_{110}(t)}{\tilde{f}_{000}(t)}, ~
  \sqrt{2} \im \frac{\tilde{f}_{110}(t)}{\tilde{f}_{000}(t)} \right)^T, \\
\theta(t) = \lth - \sqrt{\frac{2}{3}} \frac{\tilde{f}_{001}(t)}{\tilde{f}_{000}(t)}
  - \frac{|\tilde{f}_{100}(t)|^2 + 2|\tilde{f}_{110}(t)|^2}{3|\tilde{f}_{000}(t)|^2}.
\end{gather*}
Since $Q[\mathcal{M},\mathcal{M}]$ always equals zero, one can rewrite
the collision term as $Q[f,f] - Q[\mathcal{M},\mathcal{M}]$.
Thereby, the corresponding discretization turns out to be
\begin{displaymath}
\frac{\mathrm{d} \tilde{f}_{lmn}}{\mathrm{d}t} =
  \tilde{Q}_{lmn}[f_M] - \tilde{Q}_{lmn}[\mathcal{M}_M],
\end{displaymath}
where $\mathcal{M}_M$ is the projection of the Maxwellian:
\begin{displaymath}
\mathcal{M}_M(t,\bv) =
  \sum_{l=0}^M \sum_{m=-l}^l \sum_{n=0}^{\lfloor (M-l)/2 \rfloor}
  \tilde{\mathcal{M}}_{lmn}(t) p_{lmn}^{[\lu,\lth]}(\bv) \omega^{[\lu,\lth]}(\bv),
\end{displaymath}
and the coefficients $\tilde{\mathcal{M}}_{lmn}(t)$ can be obtained by the
following theorem:
\begin{theorem} \label{thm:Maxwellian}
For $\rho, \theta > 0$ and $\bu = (u_x, u_y, u_z) \in \mathbb{R}^3$, let
\begin{equation} \label{eq:general_Maxwellian}
\mathcal{M}(\bv) = \frac{\rho}{\mm(2\pi \theta)^{3/2}}
  \exp \left( -\frac{|\bv - \bu|^2}{2\theta} \right).
\end{equation}
If $\lth > \theta/2$, then for any $\lu = (\bar{u}_x, \bar{u}_y, \bar{u}_z) \in
\mathbb{R}^3$, there exist coefficients $\tilde{\mathcal{M}}_{lmn}$ such that
\begin{equation} \label{eq:M_expansion}
\mathcal{M}(\bv) = \sum_{l=0}^{+\infty} \sum_{m=-l}^l \sum_{n=0}^{+\infty}
  \tilde{\mathcal{M}}_{lmn} p_{lmn}^{[\lu,\lth]}(\bv) \omega^{[\lu,\lth]}(\bv)
\end{equation}
holds in $L^2(\mathbb{R}^3, [\omega^{[\lu,\lth]}(\bv)]^{-1} \,\mathrm{d}\bv)$,
and when $n \geqslant 1$, the coefficients satisfy the following recursive
formula:
\begin{equation} \label{eq:recur}
\begin{split}
\tilde{\mathcal{M}}_{lmn} &= \frac{1}{\sqrt{n} \gamma_{l+1,m}} \Big(
  \sqrt{n+l+3/2} \gamma_{l+2,m} \tilde{\mathcal{M}}_{l+2,m,n-1} +
  (\bar{u}_x - u_x) \tilde{\mathcal{M}}_{l+1,m,n-1} \\
& \qquad + \sqrt{n+l+1/2} \gamma_{l+1,m} (\lth - \theta) \tilde{\mathcal{M}}_{l,m,n-1}
  - \sqrt{n-1} \gamma_{l+2,m} (\lth - \theta) \tilde{\mathcal{M}}_{l+2,m,n-2}
\big),
\end{split}
\end{equation}
where the last term $\tilde{\mathcal{M}}_{l+2,m,n-2}$ is regarded as zero when
$n=1$, and the $\gamma$ symbol is defined by
\begin{equation} \label{eq:gamma}
\gamma_{lm} = \sqrt{\frac{2(l-m)(l+m)}{(2l+1)(2l-1)}}.
\end{equation}
When $n = 0$ and $|m| < l$, the recurrence formula is
\begin{equation} \label{eq:tildeM0}
\tilde{\mathcal{M}}_{lm0} = \frac{1}{\sqrt{l+1/2} \gamma_{lm}} \left(
  (u_x - \bar{u}_x) \tilde{\mathcal{M}}_{l-1,m,0} -
  \sqrt{\frac{1}{l-1/2}} \gamma_{l-1,m}
    \frac{|\bu - \lu|^2}{2} \tilde{\mathcal{M}}_{l-2,m,0}
\right),
\end{equation}
where $\tilde{\mathcal{M}}_{l-2,m,0}$ is regarded as zero if $|m| = l-1$. When
$n=0$ and $m = \pm l$, we have
\begin{equation} \label{eq:tildeMll0}
\tilde{\mathcal{M}}_{ll0} = \sqrt{\frac{1}{2^l l!}} \rho
  [(\bar{u}_y - u_y) - \mathrm{i} (\bar{u}_z - u_z)]^l, \qquad
\tilde{\mathcal{M}}_{l,-l,0} = (-1)^l \tilde{\mathcal{M}}_{ll0}^{\dagger}.
\end{equation}
\end{theorem}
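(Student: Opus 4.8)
The plan is to split the statement into two essentially independent parts: the existence of the $L^2$-convergent expansion \eqref{eq:M_expansion}, and the three recurrence relations for its coefficients. For the first part I would verify the threshold $\lth > \theta/2$ by a pure integrability computation. Up to a positive constant, $|\mathcal{M}(\bv)|^2 [\omega^{[\lu,\lth]}(\bv)]^{-1}$ equals $\exp(-|\bv - \bu|^2/\theta + |\bv - \lu|^2/(2\lth))$, whose coefficient of $|\bv|^2$ in the exponent is $-1/\theta + 1/(2\lth)$; this is negative precisely when $\lth > \theta/2$, so $\mathcal{M} \in L^2(\mathbb{R}^3, [\omega^{[\lu,\lth]}]^{-1}\,\mathrm{d}\bv)$ exactly under the stated hypothesis. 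Since the polynomials $\{p_{lmn}\}$ are orthonormal and dense in the Gaussian-weighted space $L^2(\mathbb{R}^3, \omega\,\mathrm{d}\bv^*)$, the affine change of variables $\bv = \lu + \sqrt{\lth}\,\bv^*$ turns $\{p_{lmn}^{[\lu,\lth]}\}$ into a complete orthogonal system for the $[\omega^{[\lu,\lth]}]^{-1}$-weighted space, so \eqref{eq:M_expansion} exists and converges in $L^2$. This reduces the coefficient problem to expanding the explicit ratio $\mathcal{M}/\omega^{[\lu,\lth]}$, which is a constant multiple of $\exp(\text{quadratic in }\bv)$.

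The recurrences will come from projecting a first-order differential identity onto the basis. The Maxwellian satisfies $\theta\,\partial_{v_x}\mathcal{M} + (v_x - u_x)\mathcal{M} = 0$, together with the analogues in $v_y, v_z$. I would write $v_x - u_x = (v_x - \bar{u}_x) + (\bar{u}_x - u_x)$ and use $\partial_{v_x}\omega^{[\lu,\lth]} = -(v_x - \bar{u}_x)\omega^{[\lu,\lth]}/\lth$, so that combining $\theta\,\partial_{v_x}$ with multiplication by $(v_x - \bar{u}_x)$ produces exactly the factor $(\lth - \theta)$, while the constant shift contributes the factor $(\bar{u}_x - u_x)$. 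The remaining ingredient is the action of $\partial_{v_x}$ and of multiplication by $(v_x - \bar{u}_x)$ on a single basis function $p_{lmn}^{[\lu,\lth]}\omega^{[\lu,\lth]}$: each is a three-term ladder relation sending $(l,n)$ to neighbours of total degree $l+2n \pm 1$ at fixed $m$, whose coefficients are built from the Laguerre recurrence in the radial index (the $\sqrt{n+l+3/2}$ and $\sqrt{n}$ factors) and the associated-Legendre recurrence along the polar ($x$) direction (the $\gamma_{lm}$ factors).

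Substituting these ladder relations into $\theta\,\partial_{v_x}\mathcal{M} + (v_x - u_x)\mathcal{M} = 0$ and taking the inner product against $p_{l+1,m,n-1}^{[\lu,\lth]}$ should isolate a five-term relation: the two top-degree (degree $l+2n$) contributions carry $\tilde{\mathcal{M}}_{lmn}$ and $\tilde{\mathcal{M}}_{l+2,m,n-1}$, the degree-$(l+2n-1)$ contribution carries $\tilde{\mathcal{M}}_{l+1,m,n-1}$ with the shift factor $(\bar{u}_x - u_x)$, and the two degree-$(l+2n-2)$ contributions carry $\tilde{\mathcal{M}}_{l,m,n-1}$ and $\tilde{\mathcal{M}}_{l+2,m,n-2}$ with the factor $(\lth - \theta)$; solving for $\tilde{\mathcal{M}}_{lmn}$ and dividing by its coefficient $\sqrt{n}\,\gamma_{l+1,m}$ yields \eqref{eq:recur}. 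In the $n=0$ sector the radial-lowering branches are absent, and projecting the same identity within this sector and then iterating the polar recurrence once gives the two-term-in-$l$ relation \eqref{eq:tildeM0}, with $|\bu - \lu|^2$ emerging from the $l \to l-2$ coupling. Finally, the highest-weight coefficients are computed directly: $p_{ll0}$ is proportional to the solid harmonic $(v_y^* + \mathrm{i}\,v_z^*)^l$, so $\tilde{\mathcal{M}}_{ll0}$ is an anti-holomorphic Gaussian moment $\int (v_y^* - \mathrm{i}\,v_z^*)^l(\cdots)\,\mathrm{d}\bv^*$; because the relevant Gaussian is isotropic in the transverse plane, its anti-holomorphic moments equal the corresponding power of the complex mean, and inserting the mean (proportional to $(\bu - \lu)/\sqrt{\lth}$) together with the normalization of $p_{ll0}$ produces \eqref{eq:tildeMll0}.

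I expect the main difficulty to lie in the second part above: pinning down the exact three-term recurrence coefficients for the Burnett polynomials under multiplication and differentiation, that is, the combined Laguerre and associated-Legendre ladder structure, and then carrying the $\lth^{(l+2n)/2}$ scaling factors through the projection so that the raw coefficients collapse to the clean expressions $(\lth - \theta)$, $(\bar{u}_x - u_x)$, $\gamma_{l\pm 1,m}$ and the square roots in \eqref{eq:recur}. A secondary but necessary care is the degenerate boundaries $n=0$ and $n=1$ (where the $\tilde{\mathcal{M}}_{l+2,m,n-2}$ term must be dropped) and $|m| = l-1$ and $|m| = l$ (where the polar recurrence terminates), which must be checked so that \eqref{eq:tildeM0} and \eqref{eq:tildeMll0} hold without spurious terms.
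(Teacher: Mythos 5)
Your treatment of the $L^2$ statement and of the $n\geqslant 1$ recurrence \eqref{eq:recur} is essentially the paper's own argument in a different dress: the paper also checks $\int |\mathcal{M}|^2[\omega^{[\lu,\lth]}]^{-1}\,\mathrm{d}\bv<\infty$ exactly at the threshold $\lth>\theta/2$, and its two working lemmas (the five-term expansion of $v_x p_{lmn}^{[\lu,\lth]}$ and the two-term expansion of $\partial_{v_x}p_{lmn}^{[\lu,\lth]}$, combined with integration by parts against $\mathcal{M}$) are precisely the weak form of your identity $\theta\,\partial_{v_x}\mathcal{M}+(v_x-u_x)\mathcal{M}=0$ projected onto $p_{l+1,m,n-1}^{[\lu,\lth]}$. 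The five coefficients you list and their degrees match \eqref{eq:recur}, and your derivation of \eqref{eq:tildeMll0} via anti-holomorphic Gaussian moments of $(v_y^*\pm\mathrm{i}v_z^*)^l$ is a correct special case of what the paper proves.

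The gap is in the $n=0$, $|m|<l$ recurrence \eqref{eq:tildeM0}. Your claim that ``in the $n=0$ sector the radial-lowering branches are absent'' so that projecting the $x$-direction identity closes within that sector is not right: multiplication by $v_x$ applied to $p_{l-1,m,0}^{[\lu,\lth]}$ contains a radial-\emph{raising} term proportional to $p_{l-2,m,1}^{[\lu,\lth]}$ (the $-\sqrt{n+1}\,\gamma_{-l,m}\,p_{l-1,m,n+1}$ branch of \eqref{eq:vp} does not vanish at $n=0$), so the projection produces the coefficient $\tilde{\mathcal{M}}_{l-2,m,1}$ and leaves the $n=0$ sector. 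More decisively, the factor $|\bu-\lu|^2$ in \eqref{eq:tildeM0} involves all three velocity components, so it cannot ``emerge from the $l\to l-2$ coupling'' of any recurrence obtained from the $v_x$-direction identity alone. The paper closes this with a separate key lemma, \eqref{eq:n=0}: for the harmonic ($n=0$) Burnett polynomials, $\mm\int [p_{lm0}^{[\lu,\lth]}(\bv)]^{\dagger}\mathcal{M}(\bv)\,\mathrm{d}\bv=\rho\,[p_{lm0}^{[\lu,\lth]}(\bu)]^{\dagger}$, proved via a translation formula for solid harmonics; \eqref{eq:tildeM0} then follows by evaluating the pointwise polynomial identity \eqref{eq:vp0} at $\bv=\bu$, which is where $|\bu-\lu|^2$ enters. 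Your plan is missing this evaluation-at-the-mean ingredient (or, equivalently, a scheme using all three directional identities together with the $n\geqslant1$ recurrence to eliminate the stray $n=1$ coefficient), and as written it would not produce \eqref{eq:tildeM0}.
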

The proof of this theorem is to be found in the appendix. By this theorem, we
see that the computational cost for every coefficient is $O(1)$. Therefore
the time complexity for evaluating all the coefficients
$\tilde{\mathcal{M}}_{lmn}$ with $l+2n \leqslant M$ is $O(M^3)$. The detailed
algorithm is as follows:

\begin{algorithm}[!ht]
\renewcommand{\thealgorithm}{ }
\caption{Calculation of $\tilde{\mathcal{M}}_{lmn}$ for $l+2n \leqslant M$}
\begin{algorithmic}[1]
\begin{minipage}{0.8\textwidth}
\For {$m$ from $0$ to $M$}
\State Compute $\tilde{\mathcal{M}}_{mm0}$ from \eqref{eq:tildeMll0}
\For {$l$ from $m+1$ to $M$}
\State Compute $\tilde{\mathcal{M}}_{lm0}$ from \eqref{eq:tildeM0}
\EndFor
\For {$deg$ from $m$ to $M$}
\For {$n$ from $1$ to $\lfloor (deg-m) / 2 \rfloor$}
\State $l \leftarrow deg - 2n$
\State Compute $\tilde{\mathcal{M}}_{lmn}$ from \eqref{eq:recur}
\EndFor
\EndFor
\EndFor

\For {$m$ from $-M$ to $-1$}
\For {$l$ from $|m|$ to $M$}
\For {$n$ from $0$ to $\lfloor (M-l) / 2 \rfloor$}
\State $\tilde{\mathcal{M}}_{lmn} \leftarrow (-1)^m \tilde{\mathcal{M}}_{l,-m,n}^{\dagger}$
\EndFor
\EndFor
\EndFor
\end{minipage}
\end{algorithmic}
\end{algorithm}

The above algorithm gives a working order of computation to ensure that when
the formulas in Theorem \ref{thm:Maxwellian} are applied, no recursion is
needed.

\subsection{Modelling of the collision term}
A complete algorithm has been described in the above subsection for the
discretization of the collision term. However, the computational complexity for
this algorithm is as high as $O(M^8)$, due to the nine indices appearing in the
coefficients $A_{lmn}^{l_1 m_1 n_1, l_2 m_2 n_2}$ and the constraint $m = m_1 +
m_2$. This makes the simulation difficult when $M$ is large. To reduce the
computational cost, it has been proposed in \cite{Cai2015, Wang2019, Cai2019}
to introduce the ``BGK modelling technique'' to the collision, meaning that we
only apply the quadratic collision to lower moments, which are considered to be
important in the numerical computation, while for higher moments, we model
their evolution by letting them decay to their equilibrium values at a constant
rate. For monatomic gases, the famous BGK model \cite{Bhatnagar1954} can be
derived from the linearized collision model using such an idea. As discussed in
\cite[Section 5.2]{Cai2015}, if the linearized collision model is applied only
up to the second moments (stress tensor), and other higher moments are set as
simple convergence to their equilibrium values, we can obtain the BGK model.
Another well-known model is the Shakhov model \cite{Shakhov1968}, which adds
heat fluxes to the part modeled by the linearized collision operator.

In this work, we are going to adopt the same technique when $M$ is large.
Assume that we want to apply quadratic modelling for the first $M_0$th moments,
where $M_0$ is chosen as a constant positive integer less than or equal to $M$.
Thus the spatially homogeneous Boltzmann equation is discretized by
\begin{equation} \label{eq:model}
\frac{\mathrm{d} \tilde{f}_{lmn}}{\mathrm{d}t} = \overline{Q}_{lmn}^{M_0} :=
\left\{ \begin{array}{ll}
  \tilde{Q}_{lmn}^{M_0} [f_{M_0}] - \tilde{Q}_{lmn}^{M_0} [\mathcal{M}_{M_0}],
    & \text{if } l+2n \leqslant M_0, \\[5pt]
  \nu (\tilde{\mathcal{M}}_{lmn} - \tilde{f}_{lmn}),
    & \text{if } l+2n > M_0.
\end{array} \right.
\end{equation}
In the equations for $\tilde{f}_{lmn}$ with $l+2n > M_0$, the coefficient $\nu$
indicates the rate of convergence to the equilibrium value
$\tilde{\mathcal{M}}_{lmn}$, which is chosen following \cite{Cai2015,Hu2019} as
\begin{displaymath}
\nu = \frac{\rho \theta}{\overline{\mu}}
  \left( \frac{\lth}{\theta} \right)^{\frac{1}{2} + \frac{2}{\eta-1}}\varrho_{M_0}.
\end{displaymath}
To define $\varrho_{M_0}$ in the above equation, we first define a sequence of
matrices $A^l = (a_{nn'}^l) \in \mathbb{R}^{(N_l+1) \times (N_l+1)}$ for $l =
0,1,\cdots,M_0$, where
\begin{displaymath}
N_l = \lfloor (M_0-l)/2 \rfloor, \qquad
a_{nn'}^l = A_{l0n}^{000,l0n'} + A_{l0n}^{l0n',000}, \quad n,n'=0,1,\cdots,N_l.
\end{displaymath}
Thus the definition of $\varrho_{M_0}$ is
\begin{displaymath}
\varrho_{M_0} = \max \{ \varrho(A^l) \mid l = 0,1,\cdots,M_0 \},
\end{displaymath}
where $\varrho(\cdot)$ is the spectral radius of the matrix. As is detailed in
\cite{Cai2015}, such a $\nu$ is in fact the spectral radius of truncated
linearized collision operator.

The total computational cost for \eqref{eq:model} is $O(M_0^8 + M^3)$, which
is obviously an improvement of the authors' previous work \cite{Hu2019} using
Hermite polynomials and a different technique to preserve the steady state, where
the computational cost was $O(M_0^9 + M^4)$.

\subsection{Adding back the spatial variable}
From this section, we are going to recover the spatial variable $x$. Thus in
\eqref{eq:f_M}, the function $f_M$ on the left-hand side and the coefficients
$\tilde{f}_{lmn}$ on the right-hand side should contain the parameter $x$. To
discretize the advection term, we just need to compute $v_x \partial_x f_M$:
\begin{equation} \label{eq:adv}
\begin{split}
v_x \frac{\partial f_M(t,x,\bv)}{\partial x} &=
  \sum_{l=0}^M \sum_{m=-l}^l \sum_{n=0}^{\lfloor (M-l)/2 \rfloor}
    \frac{\partial \tilde{f}_{lmn}(t,x)}{\partial x}
    \left[ v_x p_{lmn}^{[\lu,\lth]}(\bv) \right] \omega^{[\lu,\lth]}(\bv),
\end{split}
\end{equation}
where the term in the square bracket can be expanded by
\begin{equation} \label{eq:recursion}
\begin{split}
v_x p_{lmn}^{[\lu,\lth]}(\bv) = \overline{u}_x p_{lmn}^{[\lu,\lth]}(\bv) &+
  \lth \left(
    \sqrt{n+l+3/2} \gamma_{l+1,m} p_{l+1,m,n}^{[\lu,\lth]}(\bv) -
    \sqrt{n+1} \gamma_{-l,m} p_{l-1,m,n+1}^{[\lu,\lth]}(\bv)
  \right) \\
  & +\sqrt{n+l+1/2} \gamma_{-l,m} p_{l-1,m,n}^{[\lu,\lth]}(\bv) -
    \sqrt{n} \gamma_{l+1,m} p_{l+1,m,n-1}^{[\lu,\lth]}(\bv).
\end{split}
\end{equation}
By now, we can combine \eqref{eq:f_M}, \eqref{eq:adv} and \eqref{eq:recursion}
to get the complete semidiscrete equations:
\begin{multline*}
\frac{\partial \tilde{f}_{lmn}}{\partial t} +
  \overline{u}_x \frac{\partial \tilde{f}_{lmn}}{\partial x} + \lth \left(
    \sqrt{n+l+1/2} \gamma_{lm} \frac{\partial \tilde{f}_{l-1,m,n}}{\partial x}
    - \sqrt{n} \gamma_{-l-1,m} \frac{\partial \tilde{f}_{l+1,m,n-1}}{\partial x}
  \right) \\
+ \sqrt{n+l+3/2} \gamma_{-l-1,m}
  \frac{\partial \tilde{f}_{l+1,m,n}}{\partial x} -
  \sqrt{n+1} \gamma_{lm} \frac{\partial \tilde{f}_{l-1,m,n+1}}{\partial x} =
  \overline{Q}_{lmn}^{M_0}, \\
l = 0,1,\cdots,M, \quad m=-l,\cdots,l,
  \quad n = 0,1,\cdots,\lfloor (M-l)/2 \rfloor.
\end{multline*}
Here $\tilde{f}_{l'm'n'}$ is regarded as zero when its indices are not in the
range given by the last line of the above equations. Let $\bdf$ denote the
vector whose components are all the coefficients $\tilde{f}_{lmn}$ appearing in
the expansion of $f_M$. Then the above equations can be written as
\begin{equation} \label{eq:hyp}
\frac{\partial \bdf}{\partial t} + {\bf A} \frac{\partial \bdf}{\partial x} =
  \boldsymbol{Q}(\bdf),
\end{equation}
where $\bf A$ is a sparse constant matrix whose diagonal entries are
$\overline{u}_x$. And each row of $\bf A$ has at most five nonzero entries.

To complete the problem, we need to add initial and boundary conditions for
\eqref{eq:hyp}. The initial condition can be obtained by expanding
\eqref{eq:init} into series. Alternatively, we can use the orthogonality of
basis functions to write down the initial condition as
\begin{displaymath}
\tilde{f}_{lmn}(0,x) = \mm \lth^{l+2n} \int_{\mathbb{R}^3}
  \left[ p_{lmn}^{[\lu,\lth]}(\bv) \right]^{\dagger} f_0(x,\bv) \,\mathrm{d}\bv,
\quad l=0,1,\cdots,M, \quad m = -l,\cdots,l,
  \quad n = 0,1,\cdots, \left\lfloor \frac{M-l}{2} \right\rfloor.
\end{displaymath}
When the solid wall exists in the problem, we need to formulate wall boundary
conditions for \eqref{eq:hyp}, which will be detailed in the next subsection.

\subsection{Wall boundary conditions}
We only consider the wall boundary condition of \eqref{eq:hyp} at $x=a$. The
basic idea is the same as the construction of initial condition. We multiply
\eqref{eq:Maxwell} by $\mm \lth^{l+2n} \left[ p_{lmn}^{[\lu,\lth]}(\bv)
\right]^{\dagger}$ and integrate with respect to $\bv$. Note that
\eqref{eq:Maxwell} holds only for $v_x > 0$, and therefore the integral domain
is the half space:
\begin{equation} \label{eq:bc}
\begin{split}
& \mm \lth^{l+2n} \int_{-\infty}^{+\infty} \int_{-\infty}^{+\infty} \int_0^{+\infty}
  \left[ p_{lmn}^{[\lu,\lth]}(\bv) \right]^{\dagger} f_M(t,a,\bv)
  \,\mathrm{d}v_x \,\mathrm{d}v_y \,\mathrm{d}v_z = \\
& \qquad \mm \lth^{l+2n}
  \int_{-\infty}^{+\infty} \int_{-\infty}^{+\infty} \int_0^{+\infty}
  \left[ p_{lmn}^{[\lu,\lth]}(\bv) \right]^{\dagger}
  \left( \chi_a f_a^W(t,a,\bv) + (1-\chi_a) f_M(t,a,\overline{\bv}) \right)
  \,\mathrm{d}v_x \,\mathrm{d}v_y \,\mathrm{d}v_z.
\end{split}
\end{equation}
Here, the distribution function $f(t,a,\bv)$ in \eqref{eq:Maxwell} has been
replaced by the numerical solution $f_M(t,a,\bv)$, and the ``wall Maxwellian''
$f_a^W(t,\bv)$ is still defined by \eqref{eq:wall_Maxwellian}, while in the
definition of $n_a^W(t)$ \eqref{eq:nW}, the distribution function $f(t,a,\bv)$
should be again replaced by $f_M(t,a,\bv)$. However, \eqref{eq:bc} does not
complete the statement of the boundary conditions, since if \eqref{eq:bc} with
all $l,m,n$ satisfying $l+2n \leqslant M$ are imposed at $x=a$, the number of
boundary conditions will exceed the number required by the hyperbolicity, which
should be the number of characteristics pointing into the domain. To fix the
issue, we first choose $\lu$ such that $\overline{u}_x = 0$. Thus, as in
\cite{Grad1949, Cai2018, Hu2019}, we can get the correct number of boundary
conditions if we only take into account \eqref{eq:bc} for $l,m,n$ satisfying
$p_{lmn}^{[\lu,\lth]}(\bv) = -p_{lmn}^{[\lu,\lth]}(\overline{\bv})$ and $l+2n
\leqslant M$. The symmetry condition requires that $l+m$ must be odd. Below we
are going to omit the spatial variable, which is fixed at $x=a$.

To make the boundary conditions \eqref{eq:bc} more explicit, we adopt the idea
in \cite{Torrilhon2015} to split the distribution function into an odd part and
an even part:
\begin{align}
f_M^{\mathrm{(odd)}}(t,\bv) &= \frac{f_M(\bv) - f_M(\overline{\bv})}{2} =
  \sum_{l=0}^M \sum_{\substack{m=-l\\l+m\text{ odd}}}^l
    \sum_{n=0}^{\lfloor (M-l)/2 \rfloor} \tilde{f}_{lmn}(t)
      p_{lmn}^{[\lu,\lth]}(\bv) \omega^{[\lu,\lth]}(\bv), \\
f_M^{\mathrm{(even)}}(t,\bv) &= \frac{f_M(\bv) + f_M(\overline{\bv})}{2} =
  \sum_{l=0}^M \sum_{\substack{m=-l\\l+m\text{ even}}}^l
    \sum_{n=0}^{\lfloor (M-l)/2 \rfloor} \tilde{f}_{lmn}(t)
      p_{lmn}^{[\lu,\lth]}(\bv) \omega^{[\lu,\lth]}(\bv).
\end{align}
Thus the boundary condition \eqref{eq:bc} can be rewritten as
\begin{equation} \label{eq:odd_even}
\begin{split}
& \mm \lth^{l+2n} \int_{-\infty}^{+\infty} \int_{-\infty}^{+\infty} \int_0^{+\infty}
  \left[ p_{lmn}^{[\lu,\lth]}(\bv) \right]^{\dagger} f_M^{\mathrm{(odd)}}(t,\bv)
  \,\mathrm{d}v_x \,\mathrm{d}v_y \,\mathrm{d}v_z = \\
& \qquad \frac{\chi_a \mm \lth^{l+2n}}{2-\chi_a}
  \int_{-\infty}^{+\infty} \int_{-\infty}^{+\infty} \int_0^{+\infty}
  \left[ p_{lmn}^{[\lu,\lth]}(\bv) \right]^{\dagger}
  \left( f_a^W(t,\bv) - f_M^{\mathrm{(even)}}(t,\bv) \right)
  \,\mathrm{d}v_x \,\mathrm{d}v_y \,\mathrm{d}v_z, \qquad l+m \text{ is odd}.
\end{split}
\end{equation}
Further simplification requires the following result:
\begin{theorem} \label{lem:integral}
Suppose $l+m$ is odd. Then
\begin{displaymath}
\mm \lth^{l+2n} \int_{-\infty}^{+\infty} \int_{-\infty}^{+\infty} \int_0^{+\infty}
  \left[ p_{lmn}^{[\lu,\lth]}(\bv) \right]^{\dagger} f_M^{\mathrm{(odd)}}(t,\bv)
  \,\mathrm{d}v_x \,\mathrm{d}v_y \,\mathrm{d}v_z
= \frac{1}{2} \tilde{f}_{lmn}(t).
\end{displaymath}
\end{theorem}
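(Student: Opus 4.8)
The plan is to reduce the requested half-space integral to a full-space integral by a reflection-symmetry argument, and then to evaluate the full-space integral by orthogonality. Throughout I use the standing assumption $\overline{u}_x = 0$ made just before the statement, together with the parity relation $p_{lmn}^{[\lu,\lth]}(\overline{\bv}) = (-1)^{l+m} p_{lmn}^{[\lu,\lth]}(\bv)$ already recorded there; for $l+m$ odd this reads $p_{lmn}^{[\lu,\lth]}(\overline{\bv}) = -p_{lmn}^{[\lu,\lth]}(\bv)$.

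First I would record how each ingredient behaves under the reflection $\bv \mapsto \overline{\bv} = (-v_x, v_y, v_z)^T$. Because $\overline{u}_x = 0$, one has $|\overline{\bv} - \lu| = |\bv - \lu|$, so the Gaussian weight is invariant: $\omega^{[\lu,\lth]}(\overline{\bv}) = \omega^{[\lu,\lth]}(\bv)$. Consequently every term of $f_M^{\mathrm{(odd)}}$ carries the factor $(-1)^{l'+m'} = -1$ from the polynomial and an invariant weight, so $f_M^{\mathrm{(odd)}}(\overline{\bv}) = -f_M^{\mathrm{(odd)}}(\bv)$. Since complex conjugation does not affect the (real) sign $(-1)^{l+m}$, for $l+m$ odd the integrand
\[
h(\bv) := \left[ p_{lmn}^{[\lu,\lth]}(\bv) \right]^{\dagger} f_M^{\mathrm{(odd)}}(t,\bv)
\]
satisfies $h(\overline{\bv}) = (-1)(-1) h(\bv) = h(\bv)$; that is, $h$ is even under the reflection.

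Next I would exploit this parity. The map $\bv \mapsto \overline{\bv}$ is a measure-preserving bijection between the half-spaces $\{v_x > 0\}$ and $\{v_x < 0\}$, so evenness of $h$ gives $\int_{v_x>0} h \,\mathrm{d}\bv = \int_{v_x<0} h \,\mathrm{d}\bv = \tfrac12 \int_{\mathbb{R}^3} h \,\mathrm{d}\bv$. It then remains to compute the full-space integral. Changing variables to $\bv^* = (\bv - \lu)/\sqrt{\lth}$ in the definitions \eqref{eq:p_omega} and invoking the orthogonality of the $p_{lmn}$, I expect the scaled orthogonality relation
\[
\int_{\mathbb{R}^3} \left[ p_{lmn}^{[\lu,\lth]}(\bv) \right]^{\dagger} p_{l'm'n'}^{[\lu,\lth]}(\bv)\, \omega^{[\lu,\lth]}(\bv)\,\mathrm{d}\bv = \frac{\lth^{-(l+2n)}}{\mm}\, \delta_{ll'}\delta_{mm'}\delta_{nn'},
\]
where the power of $\lth$ collects the two factors $\lth^{-(l+2n)/2}$ from the polynomials together with the $\lth^{3/2}$ Jacobian against the $\lth^{3/2}$ in $\omega^{[\lu,\lth]}$. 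Substituting the expansion of $f_M^{\mathrm{(odd)}}$ and summing, the prefactor $\mm \lth^{l+2n}$ is exactly what is needed to cancel $\lth^{-(l+2n)}/\mm$, leaving $\mm \lth^{l+2n}\int_{\mathbb{R}^3} h\,\mathrm{d}\bv = \tilde{f}_{lmn}$, and hence the half-space integral equals $\tfrac12 \tilde{f}_{lmn}$.

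The argument is essentially bookkeeping once the symmetry is in place, so the only point that requires care---and the main potential obstacle---is the reflection behavior. Two things must be checked honestly: that $\omega^{[\lu,\lth]}$ is genuinely invariant, which relies on $\overline{u}_x = 0$ and would fail otherwise; and that in the scaled orthogonality the exponents of $\lth$ combine to precisely $-(l+2n)$ so that the normalizing prefactor $\mm\lth^{l+2n}$ produces the clean coefficient $\tfrac12 \tilde{f}_{lmn}$ rather than a stray power of $\lth$. If one prefers a self-contained derivation of the parity of $p_{lmn}^{[\lu,\lth]}$, it follows from the invariance of $L_n^{(l+1/2)}(|\bv^*|^2/2)|\bv^*|^l$ under the reflection together with $Y_l^m(\pi-\theta,\phi) = (-1)^{l+m} Y_l^m(\theta,\phi)$ for the spherical harmonics, the only subtlety being to flip the correct (polar) angle consistent with the paper's choice of the $v_x$-axis as polar axis.
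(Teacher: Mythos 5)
Your proposal is correct and follows essentially the same route as the paper: use the odd parity of both $p_{lmn}^{[\lu,\lth]}$ and $f_M^{\mathrm{(odd)}}$ under $\bv \mapsto \overline{\bv}$ to identify the half-space integral with half of the full-space integral, then evaluate the latter by the (scaled) orthogonality of the Burnett basis. Your explicit checks of the weight invariance (which indeed requires $\overline{u}_x = 0$) and of the powers of $\lth$ in the orthogonality relation are consistent with, and slightly more detailed than, the paper's argument.
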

\begin{proof}
Since $l + m$ is odd, we have $p_{lmn}^{[\lu,\lth]}(\bv) =
-p_{lmn}^{[\lu,\lth]}(\overline{\bv})$. Note that $f_M^{\mathrm{(odd)}}(t,\bv)
= -f_M^{\mathrm{(odd)}}(t,\overline{\bv})$, we obtain by change of variables that
\begin{equation} \label{eq:half_mnt}
\begin{split}
& \mm \lth^{l+2n} \int_{-\infty}^{+\infty} \int_{-\infty}^{+\infty} \int_0^{+\infty}
  \left[ p_{lmn}^{[\lu,\lth]}(\bv) \right]^{\dagger} f_M^{\mathrm{(odd)}}(t,\bv)
  \,\mathrm{d}v_x \,\mathrm{d}v_y \,\mathrm{d}v_z \\
={} & \mm \lth^{l+2n} \int_{-\infty}^{+\infty} \int_{-\infty}^{+\infty} \int_{-\infty}^0
  \left[ p_{lmn}^{[\lu,\lth]}(\overline{\bv}) \right]^{\dagger}
  f_M^{\mathrm{(odd)}}(t,\overline{\bv})
  \,\mathrm{d}\overline{v}_x \,\mathrm{d}\overline{v}_y \,\mathrm{d}\overline{v}_z \\
={} & \mm \lth^{l+2n} \int_{-\infty}^{+\infty} \int_{-\infty}^{+\infty} \int_{-\infty}^0
  \left[ p_{lmn}^{[\lu,\lth]}(\bv) \right]^{\dagger} f_M^{\mathrm{(odd)}}(t,\bv)
  \,\mathrm{d}v_x \,\mathrm{d}v_y \,\mathrm{d}v_z.
\end{split}
\end{equation}
Let $J_{lmn}$ be the above quantity. Then by adding up the first and third
lines in \eqref{eq:half_mnt}, we obtain 
\begin{displaymath}
2J_{lmn} = \mm \lth^{l+2n} \int_{\mathbb{R}^3}
  \left[ p_{lmn}^{[\lu,\lth]}(\bv) \right]^{\dagger} f_M^{\mathrm{(odd)}}(t,\bv)
  \,\mathrm{d}\bv = \tilde{f}_{lmn}(t),
\end{displaymath}
which implies the conclusion of the theorem.
\end{proof}
The above theorem gives the left-hand side of \eqref{eq:odd_even}. To proceed, we
first consider a special case $(l,m,n) = (1,0,0)$. In this case,
$p_{100}^{[\lu,\lth]}(\bv) = v_x / \lth$, and the right-hand side of
\eqref{eq:odd_even} can be computed by
\begin{displaymath}
\begin{split}
& \frac{\chi_a \mm}{2-\chi_a}
  \int_{-\infty}^{+\infty} \int_{-\infty}^{+\infty} \int_0^{+\infty}
  v_x \left( f_a^W(t,\bv) - f_M^{\mathrm{(even)}}(t,\bv) \right)
  \,\mathrm{d}v_x \,\mathrm{d}v_y \,\mathrm{d}v_z \\
={} & \frac{\chi_a \mm}{2-\chi_a}
  \int_{-\infty}^{+\infty} \int_{-\infty}^{+\infty} \int_{-\infty}^0
  v_x \left( f_M^{\mathrm{(even)}}(t,\bv) - f_a^W(t,\bv) \right)
  \,\mathrm{d}v_x \,\mathrm{d}v_y \,\mathrm{d}v_z \\
={} & \frac{\chi_a \mm}{2-\chi_a}
  \int_{-\infty}^{+\infty} \int_{-\infty}^{+\infty} \int_{-\infty}^0
  v_x \left( f_M(\bv) - f_M^{\mathrm{(odd)}}(t,\bv) - f_a^W(t,\bv) \right)
  \,\mathrm{d}v_x \,\mathrm{d}v_y \,\mathrm{d}v_z \\
={} & \frac{\chi_a \mm}{2-\chi_a}
  \int_{-\infty}^{+\infty} \int_{-\infty}^{+\infty} \int_{-\infty}^0
  v_x \left( f_M(t,\bv) - f_a^W(t,\bv) \right)
  \,\mathrm{d}v_x \,\mathrm{d}v_y \,\mathrm{d}v_z
  - \frac{\chi_a}{2-\chi_a} \frac{1}{2} \tilde{f}_{100}(t).
\end{split}
\end{displaymath}
Here the first equality uses the symmetry of $f_a^W$ and
$f_M^{\mathrm{(even)}}$; the second equality uses the decomposition of $f_M$;
and the third equality uses Theorem \ref{lem:integral} and the symmetry of
$f_M^{\mathrm{(odd)}}$. Now, by using \eqref{eq:wall_Maxwellian} and
\eqref{eq:nW}, straightforward calculation yields
\begin{displaymath}
\int_{-\infty}^{+\infty} \int_{-\infty}^{+\infty} \int_{-\infty}^0
  v_x \left( f_M(t,\bv) - f_a^W(t,\bv) \right)
  \,\mathrm{d}v_x \,\mathrm{d}v_y \,\mathrm{d}v_z = 0.
\end{displaymath}
Thus the boundary condition \eqref{eq:odd_even} for $(l,m,n) = (1,0,0)$ turns
out to be
\begin{displaymath}
\frac{1}{2} \tilde{f}_{100}(t) = 
  - \frac{\chi_a}{2-\chi_a} \frac{1}{2} \tilde{f}_{100}(t),
\end{displaymath}
which is equivalent to
\begin{displaymath}
\tilde{f}_{100}(t) = 0.
\end{displaymath}
Such a result agrees with the requirement that the perpendicular momentum or
velocity must equal zero. By this result, we also know that
\begin{displaymath}
\begin{split}
n_a^W(t) &= \sqrt{\frac{2\pi}{\theta_a^W(t)}} \left(
  \int_{-\infty}^{+\infty} \int_{-\infty}^{+\infty} \int_{-\infty}^0
    v_x f_M^{\mathrm{(odd)}}(t,\bv) \,\mathrm{d}v_x \,\mathrm{d}v_y \,\mathrm{d}v_z
  + \int_{-\infty}^{+\infty} \int_{-\infty}^{+\infty} \int_{-\infty}^0
    v_x f_M^{\mathrm{(even)}}(t,\bv) \,\mathrm{d}v_x \,\mathrm{d}v_y \,\mathrm{d}v_z
\right) \\
&= \sqrt{\frac{2\pi}{\theta_a^W(t)}} \left(
  \frac{1}{2m} \tilde{f}_{100}(t)
  + \int_{-\infty}^{+\infty} \int_{-\infty}^{+\infty} \int_{-\infty}^0
    v_x f_M^{\mathrm{(even)}}(t,\bv) \,\mathrm{d}v_x \,\mathrm{d}v_y \,\mathrm{d}v_z
\right) \\
&= \sqrt{\frac{2\pi}{\theta_a^W(t)}}
  \int_{-\infty}^{+\infty} \int_{-\infty}^{+\infty} \int_{-\infty}^0
    v_x f_M^{\mathrm{(even)}}(t,\bv) \,\mathrm{d}v_x \,\mathrm{d}v_y \,\mathrm{d}v_z,
\end{split}
\end{displaymath}
which means that the right-hand side of \eqref{eq:odd_even} is completely
independent of $f_M^{\mathrm{(odd)}}$.

By the above results, in general, the equation \eqref{eq:odd_even} can be
written as
\begin{displaymath}
\tilde{f}_{lmn}(t) = \frac{2\chi_a}{2-\chi_a}
  \sum_{l'=0}^M \sum_{\substack{m'=-l'\\ l'+m'\text{ even}}}^{l'}
  \sum_{n'=0}^{\lfloor (M-l')/2 \rfloor} B_{lmn}^{l'm'n'}
  \tilde{f}_{l'm'n'}(t), \qquad l+m \text{ is odd}.
\end{displaymath}
The constants $B_{lmn}^{l'm'n'}$ are given by
\begin{displaymath}
\begin{split}
B_{lmn}^{l'm'n'} &= \tilde{\mathcal{M}}_{lmn}^W \sqrt{\frac{2\pi}{\theta_a^W(t)}}
  \int_{-\infty}^{+\infty} \int_{-\infty}^{+\infty} \int_{-\infty}^0
    v_x p_{l'm'n'}^{[\lu,\lth]} \omega^{[\lu,\lth]}(\bv)
  \,\mathrm{d}v_x \,\mathrm{d}v_y \,\mathrm{d}v_z \\
& \qquad - m \lth^{l+2n}
  \int_{-\infty}^{+\infty} \int_{-\infty}^{+\infty} \int_0^{+\infty}
  \left[ p_{lmn}^{[\lu,\lth]}(\bv) \right]^{\dagger}
  p_{l'm'n'}^{[\lu,\lth]}(\bv) \omega^{[\lu,\lth]}(\bv)
  \,\mathrm{d}v_x \,\mathrm{d}v_y \,\mathrm{d}v_z,
\end{split}
\end{displaymath}
where
\begin{displaymath}
\tilde{\mathcal{M}}_{lmn}^W = \frac{\lth^{l+2n}}{[2\pi \theta_a^W(t)]^{3/2}}
  \int_{-\infty}^{+\infty} \int_{-\infty}^{+\infty} \int_0^{+\infty}
  \left[ p_{lmn}^{[\lu,\lth]}(\bv) \right]^{\dagger}
  \exp \left( -\frac{|\bv - \bu_a^W(t)|^2}{2\theta_a^W(t)} \right)
  \,\mathrm{d}v_x \,\mathrm{d}v_y \,\mathrm{d}v_z.
\end{displaymath}

\section{Numerical examples} \label{sec:numerical}
By now, we are ready to carry out numerical tests to see the performance of the
method. In all our numerical tests, we choose $\eta = 10$, $\mm =
\SI{6.63e-26}{\kg}$, and $\kappa =
\SI{3.46946e-113}{\kg\meter\tothe{11}\per\second\squared}$. To define the
Knudsen number, we employ the variable hard sphere (VHS) model \cite{Bird1994}.
At the reference temperature \SI{273.15}{\kelvin}, if the reference diameter of
the gas molecule is $d = \SI{4.17e-10}{\meter}$, then the viscosity of the VHS
gas matches the viscosity of the IPL gas. Thus the mean free path and the
Knudsen number can be defined by
\begin{equation} \label{eq:mfp_Kn}
\lambda = \frac{\mm}{\sqrt{2} \pi \rho d^2}, \qquad
\mathit{Kn} = \lambda / L,
\end{equation}
where $\rho$ is the reference gas density and $L$ is the characteristic length.
For spatial discretization, we use discontinuous Galerkin method or finite
volume WENO scheme, to be specified below. Both steady-state and unsteady flows
are to be carried out in our numerical tests. Note that although only
$(1+3)$-dimensional flows are simulated, all the examples below are
quite challenging due to the existence of high Mach number or large
temperature ratio, which makes it difficult to capture the profile of
the distribution function over the whole computational domain.

\subsection{Simulation of steady-state flows}
To study the steady-state flows, we start from a given initial state and use
time-stepping to evolve the system for a sufficiently long time. To describe the
stopping criterion, we define
\begin{displaymath}
\|f(t_1, \cdot, \cdot) - f(t_2, \cdot, \cdot)\| := \left(
  \int_a^b \int_{\mathbb{R}^3} |f(t_1, x, \bv) - f(t_2, x, \bv)|^2
    \left[ \omega^{[\lu,\lth]}(\bv) \right]^{-1}
  \,\mathrm{d} \bv \,\mathrm{d}x \right)^{1/2}.
\end{displaymath}
We consider that the steady state is achieved at the $n$th time step if the
numerical solution satisfies
\begin{equation} \label{eq:stop_crit}
\frac{\|f((n+1)\Delta t, \cdot, \cdot) - f(n\Delta t, \cdot, \cdot)\|}
  {\|f(\Delta t, \cdot, \cdot) - f(0, \cdot, \cdot)\|} < \epsilon,
\end{equation}
where $\Delta t$ is the time step.

\subsubsection{Steady shock structure} \label{sec:shock}
The plane wave shock structure is a classical example frequently used to test
the kinetic models or the Boltzmann solver \cite{Xu2011, Timokhin2017}. The
domain is unbounded ($a = -\infty$ and $b = +\infty$) and the boundary
conditions are given by
\begin{align*}
\lim_{x\rightarrow -\infty} f(x,\bv) &= f_a(\bv) :=
  \frac{\rho_a}{\mm (2\pi \theta_a)^{3/2}}
  \exp \left( -\frac{|\bv - \bu_a|^2}{2 \theta_a} \right), \\
\lim_{x\rightarrow +\infty} f(x,\bv) &= f_b(\bv) :=
  \frac{\rho_b}{\mm (2\pi \theta_b)^{3/2}}
  \exp \left( -\frac{|\bv - \bu_b|^2}{2 \theta_b} \right),
\end{align*}
where
\begin{displaymath}
\frac{\rho_b}{\rho_a} = \frac{4\mathit{Ma}^2}{\mathit{Ma}^2+3}, \quad
\bu_a = \left( \sqrt{\frac{5\theta_a}{3}} \mathit{Ma}, 0, 0 \right)^T, \quad
\bu_b = \left( \sqrt{\frac{5\theta_a}{3}}
  \frac{\mathit{Ma}^2+3}{4\mathit{Ma}}, 0, 0 \right)^T, \quad
\frac{\theta_b}{\theta_a} =
  \frac{(5\mathit{Ma}^2 - 1)(\mathit{Ma}^2 + 3)}{16\mathit{Ma}^2},
\end{displaymath}
and $\mathit{Ma}$ is the Mach number of the shock wave. In our numerical tests,
we set
\begin{displaymath}
\rho_a = \SI{9.282e-6}{\kg\per\meter\cubed}, \qquad
\theta_a = k_B / \mm \times \SI{273.15}{\kelvin} =
  \SI{5.688e4}{\meter\squared\per\second\squared}.
\end{displaymath}
To obtain the steady state, we set the initial condition to be
\begin{displaymath}
f(0,x,\bv) = \left\{ \begin{array}{ll}
  f_a(\bv), & \text{if } x < 0, \\
  f_b(\bv), & \text{if } x > 0,
\end{array} \right.
\end{displaymath}
and we evolve the distribution functions until \eqref{eq:stop_crit} is achieved
with $\epsilon = 10^{-6}$. Numerically, the computational domain is set to be
$[-30\lambda, 30\lambda]$, where $\lambda$ is the mean free path defined in
\eqref{eq:mfp_Kn} with $\rho = \rho_a$. The computational domain is divided
into $60$ grid cells, and the fifth-order WENO finite volume method is applied
for the spatial discretization. For the velocity discretization, we use
\begin{displaymath}
\lu = (\bu_a + \bu_b) / 2, \qquad \lth = 0.7\theta_{b}.
\end{displaymath}

Two Mach numbers $\mathit{Ma} = 6.5$ and $\mathit{Ma} = 9.0$ are
considered in our numerical tests. The corresponding solution of
density $\rho$, temperature $T$, normal stress $\sigma_{xx}$ and heat
flux $q_{x}$, obtained by our method with $M_{0}=10$ and $M=20$, are
presented in \figurename~\ref{fig:shock-Ma65-binaryOM20} for
$\mathit{Ma} = 6.5$ and in \figurename~\ref{fig:shock-Ma90-binaryOM20}
for $\mathit{Ma} = 9.0$ respectively. Comparison of them insider the
shock layer are made with the solution obtained by the DSMC method
\cite{Bird1994}. Even for such high Mach number cases, the results
show that the shock structure profiles, including the shock thickness
and the peak values of heat flux and normal stress, coincide perfectly
well between the solution given by our method and the DSMC method.

Both simulations are performed on a cluster with the CPU model Intel
Xeon E5-2697A V4 @ 2.6GHz. Ten threads are used for each
simulation. Details of the simulations, including the number of time
steps and the total elapsed time, are listed in Table
\ref{tab:shock-cputime}, which shows the efficiency of the presented
method.

\begin{figure}[!htb]
  \centering
  {\includegraphics[width=0.49\textwidth,clip]{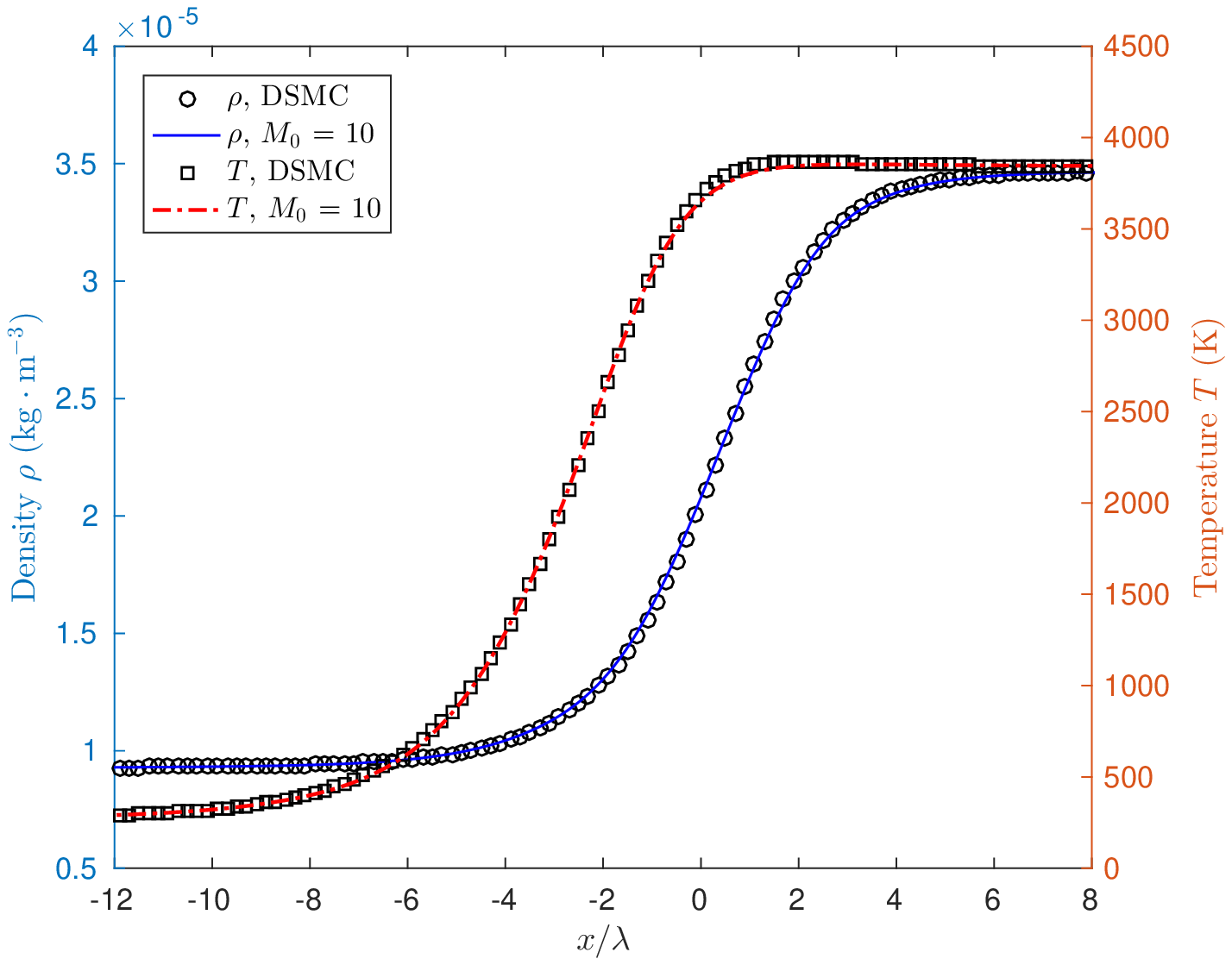}}\hfill
  {\includegraphics[width=0.49\textwidth,clip]{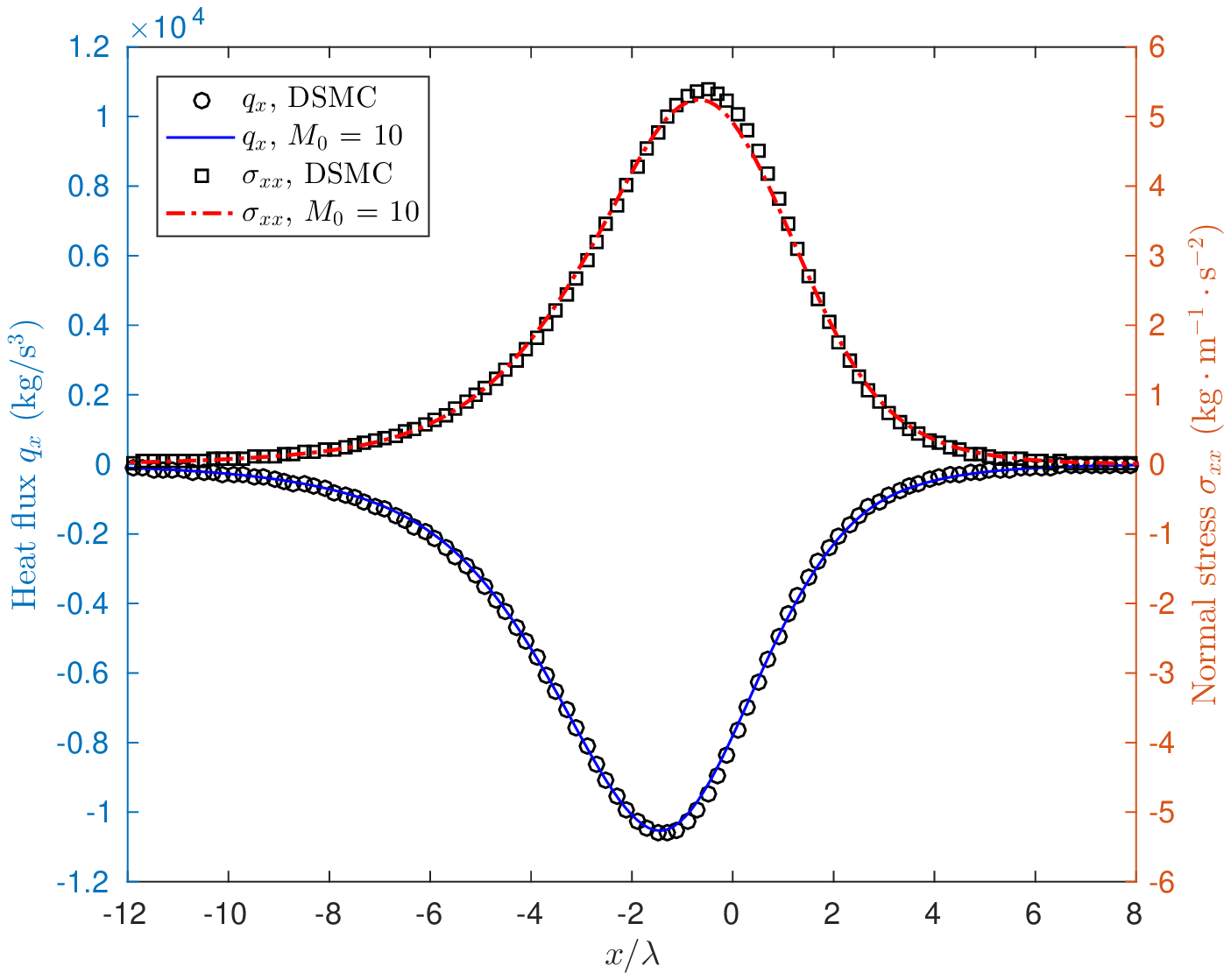}}
  \caption{Solution of the shock structure for $\mathit{Ma} = 6.5$ and $M=20$.}
  \label{fig:shock-Ma65-binaryOM20}
\end{figure}

\begin{figure}[!htb]
  \centering
  {\includegraphics[width=0.49\textwidth,clip]{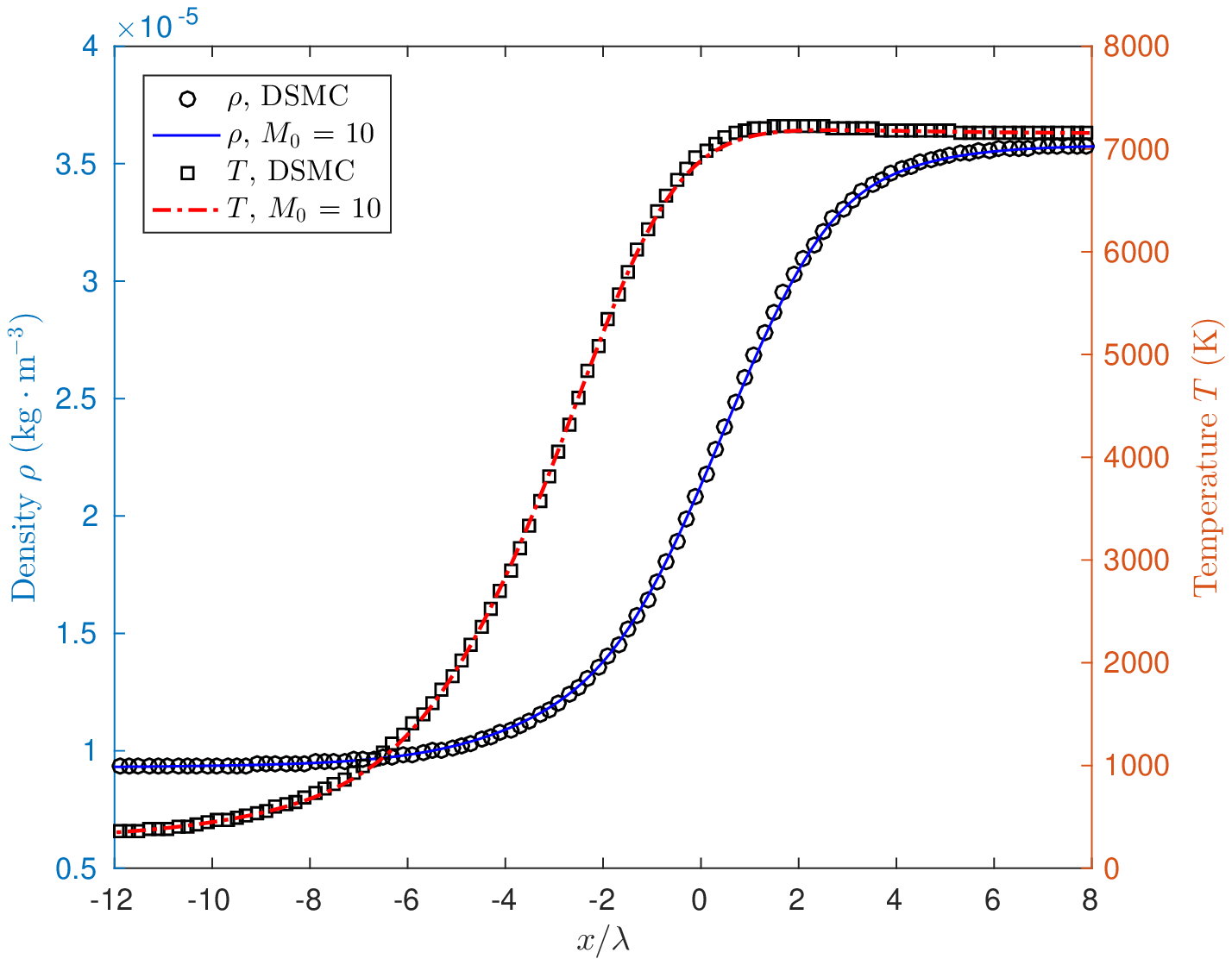}}\hfill
  {\includegraphics[width=0.49\textwidth,clip]{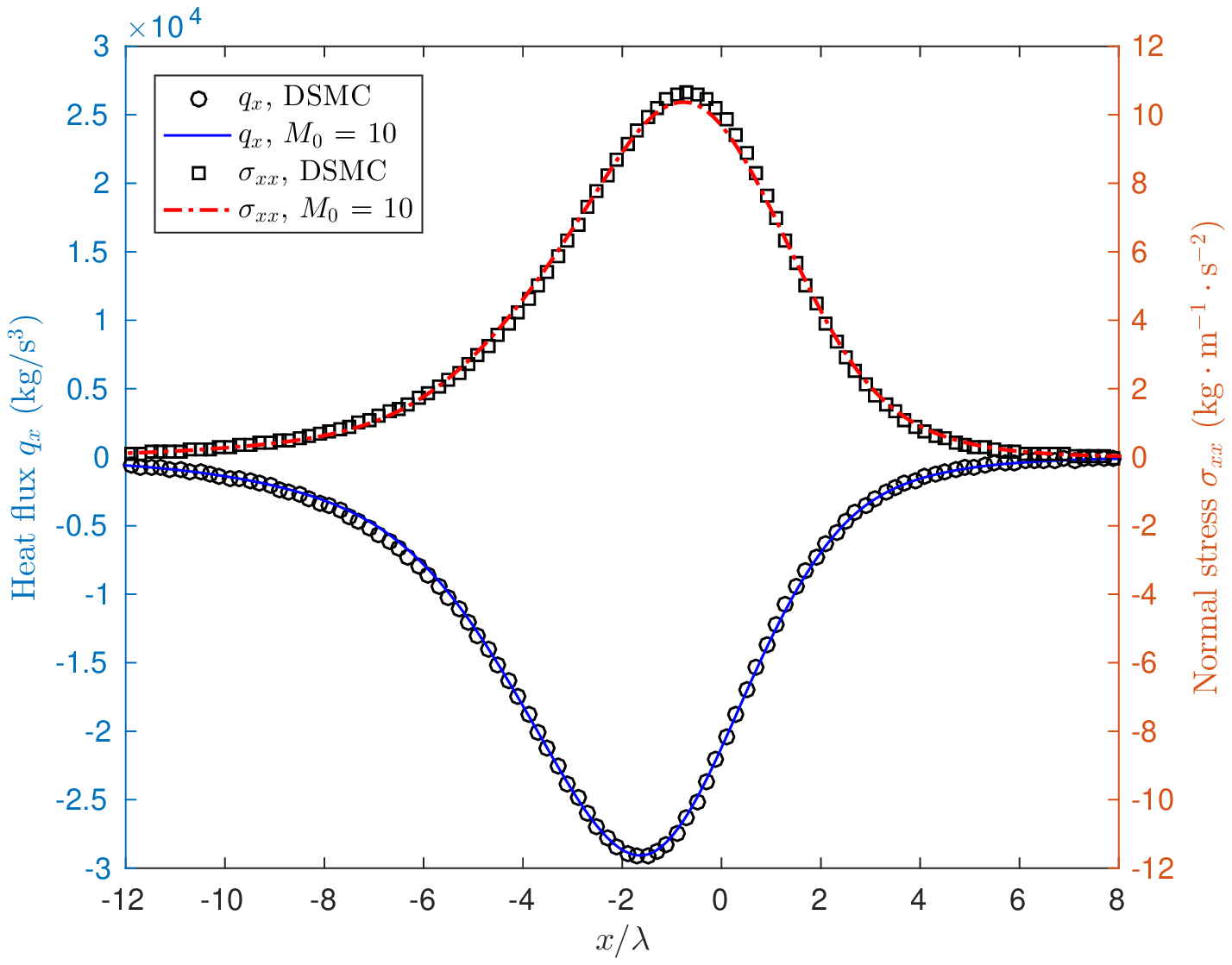}}
  \caption{Solution of the shock structure for $\mathit{Ma} = 9.0$ and $M=20$.}
  \label{fig:shock-Ma90-binaryOM20}
\end{figure}

\begin{table}
\centering
\caption{Run-time data for shock structure simulations with $60$ grid cells, $M_{0}=10$ and $M=20$.}
\label{tab:shock-cputime}
\begin{tabular}{ccc}
\hline
Test case & $\mathit{Ma} = 6.5$ & $\mathit{Ma} = 9.0$ \\
\hline
  Number of coefficients & $1771$ & $1771$ \\ 
  Time step ($\Delta t$) & $\SI{6.47e-7}{\second}$ & $\SI{4.74e-7}{\second}$ \\
  Number of time steps & $2605$ & $2690$ \\ 
  Total elapsed time & $\SI{96.31}{\second}$ & $\SI{99.99}{\second}$ \\ 
  Elapsed time per time step & $\SI{3.70e-2}{\second}$ & $\SI{3.72e-2}{\second}$ \\
\hline
\end{tabular}
\end{table}

\subsubsection{Fourier flow}
This is another benchmark test for problems with boundary conditions
\cite{Hu2019}. The fluid locates between two stationary and infinitely
large parallel plates with different temperature. At the steady state,
significant temperature jump can be observed for rarefied gases. The
parameters of this problem include
\begin{itemize}
\item $L$: distance between two plates;
\item $T_a^{W}$, $T_b^{W}$: the temperature of the left and right plates;
\item $\rho_0$: the average density of the fluid.
\end{itemize}
In our tests, we always choose $T_a^{W} = \SI{273.15}{\kelvin}$ and
$\rho_0 = \SI{9.282e-6}{\kg\per\meter\cubed}$. The computational
domain is defined by $a = -L/2$ and $b = L/2$ with the accommodation
coefficients in the boundary condition being
$\chi_{a} = \chi_{b} = 1$. The domain is decomposed into $10$ uniform
grid cells, and the fourth-order nodal discontinuous Galerkin method
\cite{Hesthaven} is used for spatial discretization. For velocity space,
we discretize it using $\lu = 0$ and
$\lth = (\theta_{a}^{W} + \theta_{b}^{W} ) / 2$, where
$\theta_a ^{W}= k_B T_a^{W} / \mm$ and
$\theta_b^{W}= k_B T_b^{W} / \mm$. We compute the steady state by
starting from the initial condition
\begin{displaymath}
f(0,x,\bv) = \frac{\rho_0}{\mm (2\pi \lth)^{3/2}} \exp
  \left( -\frac{|\bv|^2}{2\lth} \right),
\end{displaymath}
and the stopping criterion is again \eqref{eq:stop_crit} with
$\epsilon = 10^{-6}$. In order to compare our results, the DSMC method
\cite{Bird1994} is also employed to produce the reference
solution. Below we are going to consider two different choices of
$T_b^{W}$.

(1) $T_b^{W} = 4T_a^{W}$. We first set the temperature ratio of two
plates to be $4$.
Three distances $L = \SI{0.092456}{\meter}$, $\SI{0.018491}{\meter}$
and $\SI{0.003698}{\meter}$ are considered. They correspond to Knudsen
number $0.1$, $0.5$ and $2.5$, respectively. For numerical results
presented in this paper, we adopt $M=20$, $30$ and $40$, as the
Knudsen number increases from $0.1$ to $2.5$.

Numerical solutions of density $\rho$, temperature $T$, normal stress
$\sigma_{xx}$ and heat flux $q_{x}$, obtained by our method with three
choices of $M_{0}$, i.e., $M_{0}=5$, $10$ and $15$, are presented in
\figurename~\ref{fig:fourier-Twr4Kn01-binaryOM20}-\ref{fig:fourier-Twr4Kn25-binaryOM40}
for $\mathit{Kn} = 0.1$, $0.5$ and $2.5$ respectively. It can be seen
that all our results agree well with the DSMC solution. The relative
deviation of our solution away from the DSMC solution is actually
quite small. Moreover, our solution becomes much closer and closer to
the DSMC solution, as $M_{0}$ increases.

(2) $T_b^{W} = 10T_a^{W}$. Now we set the temperature ratio to be
$10$, which is obviously tougher to simulate due to the wide spread of
the distribution functions.
Two distances $L = \SI{0.018491}{\meter}$ and $\SI{0.003698}{\meter}$,
with the corresponding Knudsen number $0.5$ and $2.5$ respectively,
are considered. As the previous case, we set $M=30$ for
$\mathit{Kn}=0.5$ and $M=40$ for $\mathit{Kn}=2.5$ in our tests.

Numerical solutions obtained by our method with $M_{0}=5$, $10$ and
$15$, and the DSMC method, are given in
\figurename~\ref{fig:fourier-Twr10Kn05-binaryOM30} and
\ref{fig:fourier-Twr10Kn25-binaryOM40} for $\mathit{Kn}=0.5$ and
$\mathit{Kn}=2.5$, respectively. The results still show a good
agreement between our solutions and the DSMC solutions, although more
obvious deviation can be observed, especially for the case with
$\mathit{Kn}=2.5$. However, as more moments are modelled accurately by
the quadratic collision model in our method, which indicates $M_{0}$
is increased, remarkable improvement of our results can be obtained as
shown in these figures.

At last, the run-time data of partial simulations, which are obtained
on the same cluster with 10 threads for each simulation as shock
structure simulations, are also provided in Table
\ref{tab:fourier-cputime} to show the efficiency of our method.

\begin{figure}[!htb]
  \centering
  \subfloat[Density, $\rho~({\rm kg\cdot m^{-3}})$]{\includegraphics[width=0.49\textwidth,clip]{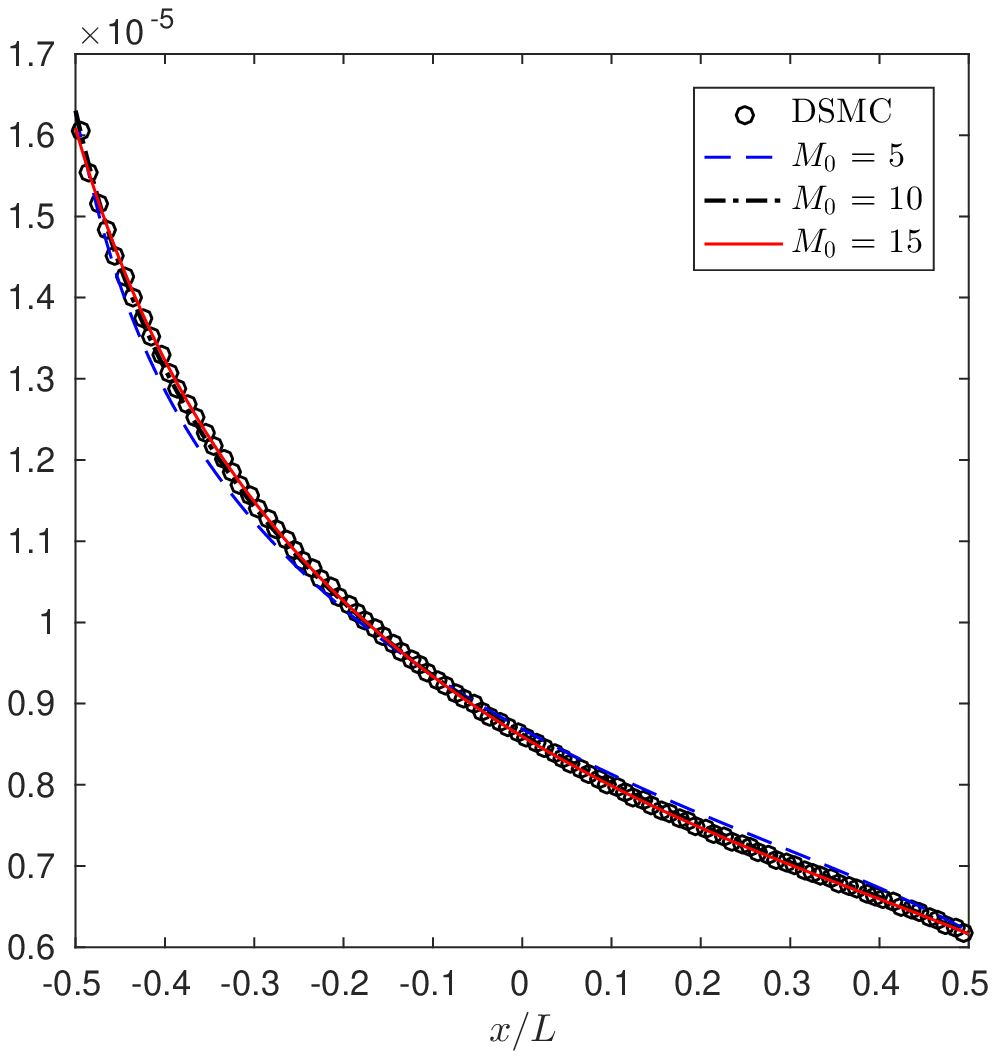}}\hfill
  \subfloat[Temperature, $T~({\rm K})$]{\includegraphics[width=0.49\textwidth,clip]{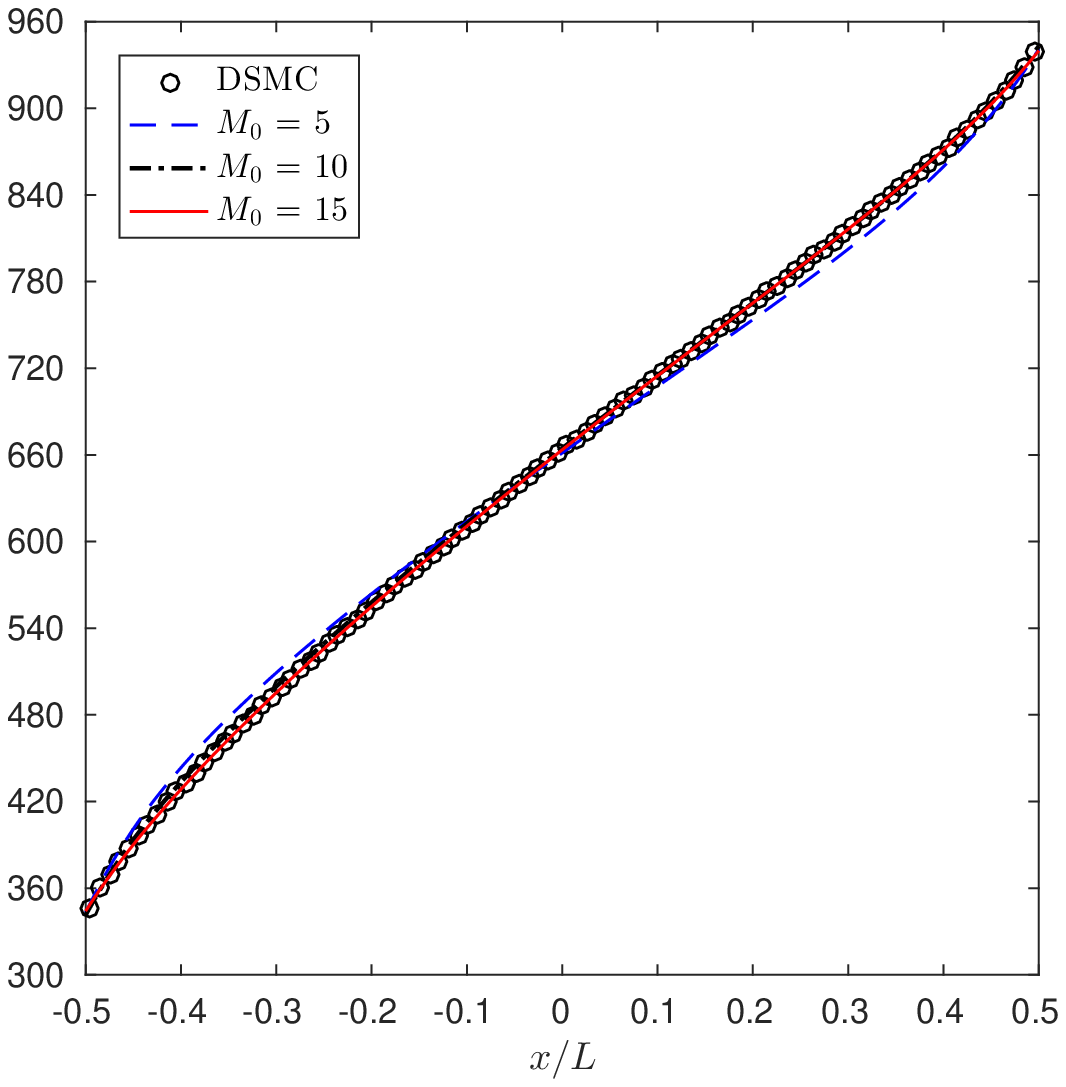}} \\
  \subfloat[Normal stress, $\sigma_{xx}~({\rm kg \cdot m^{-1}\cdot s^{-2}})$]{\includegraphics[width=0.49\textwidth,clip]{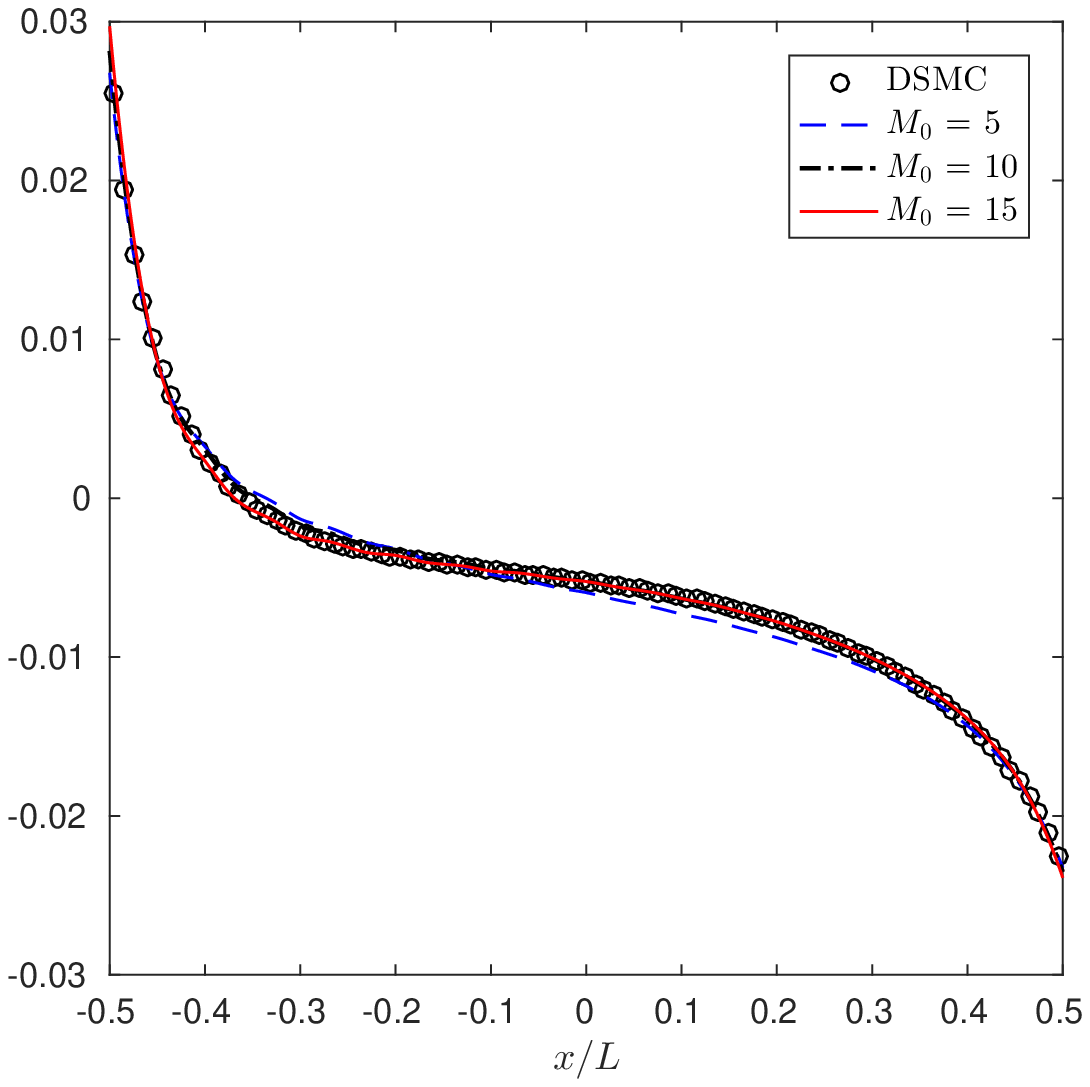}} \hfill
  \subfloat[Heat flux, $q_x~(\rm kg/s^{3})$]{\includegraphics[width=0.49\textwidth,clip]{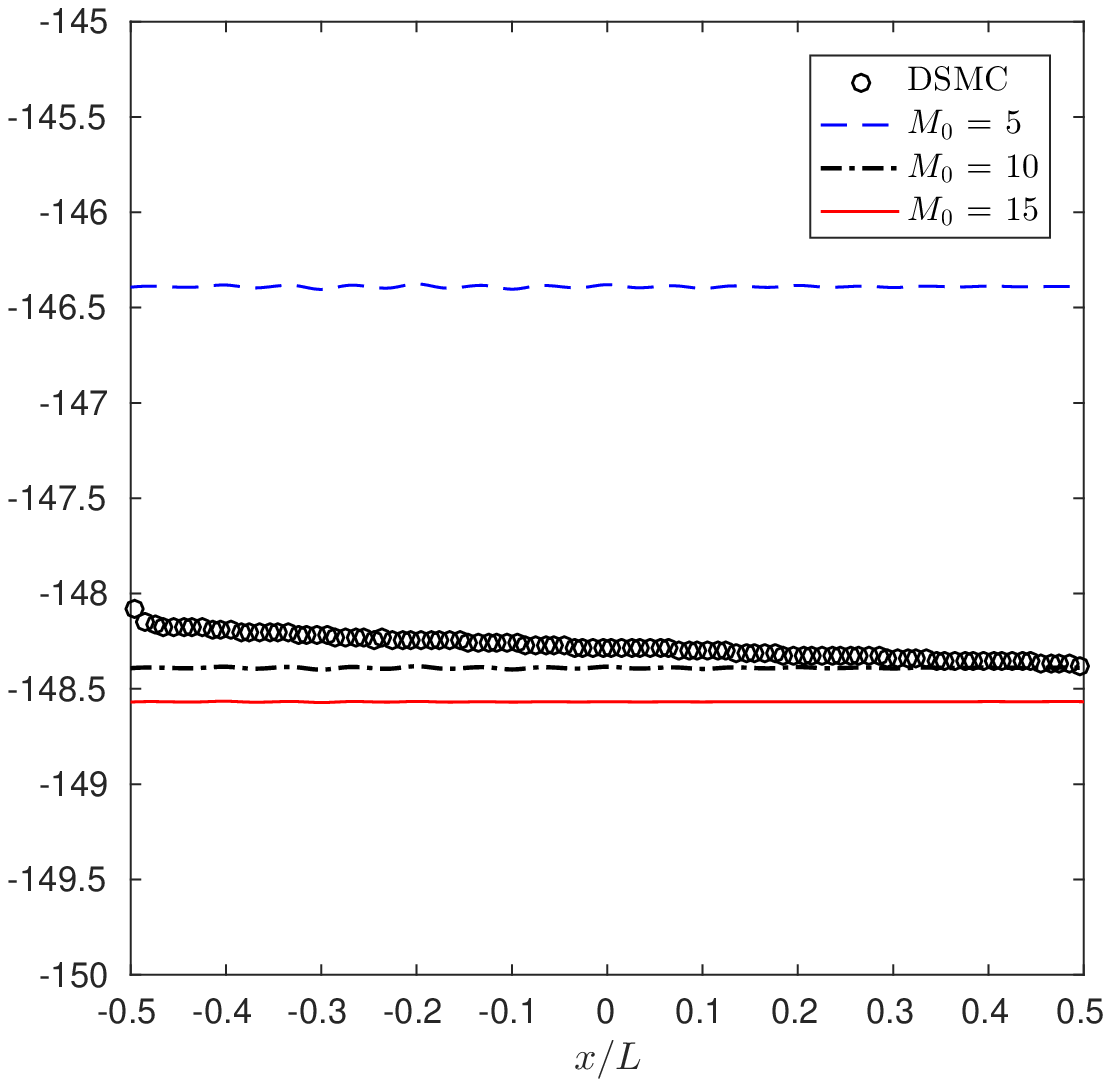}} 
  \caption{Solution of the Fourier flow for $T_{b}^{W}=4T_{a}^{W}$ with $\Kn =0.1$ and $M=20$.}
  \label{fig:fourier-Twr4Kn01-binaryOM20}
\end{figure}

\begin{figure}[!htb]
  \centering
  \subfloat[Density, $\rho~({\rm kg\cdot m^{-3}})$]{\includegraphics[width=0.49\textwidth,clip]{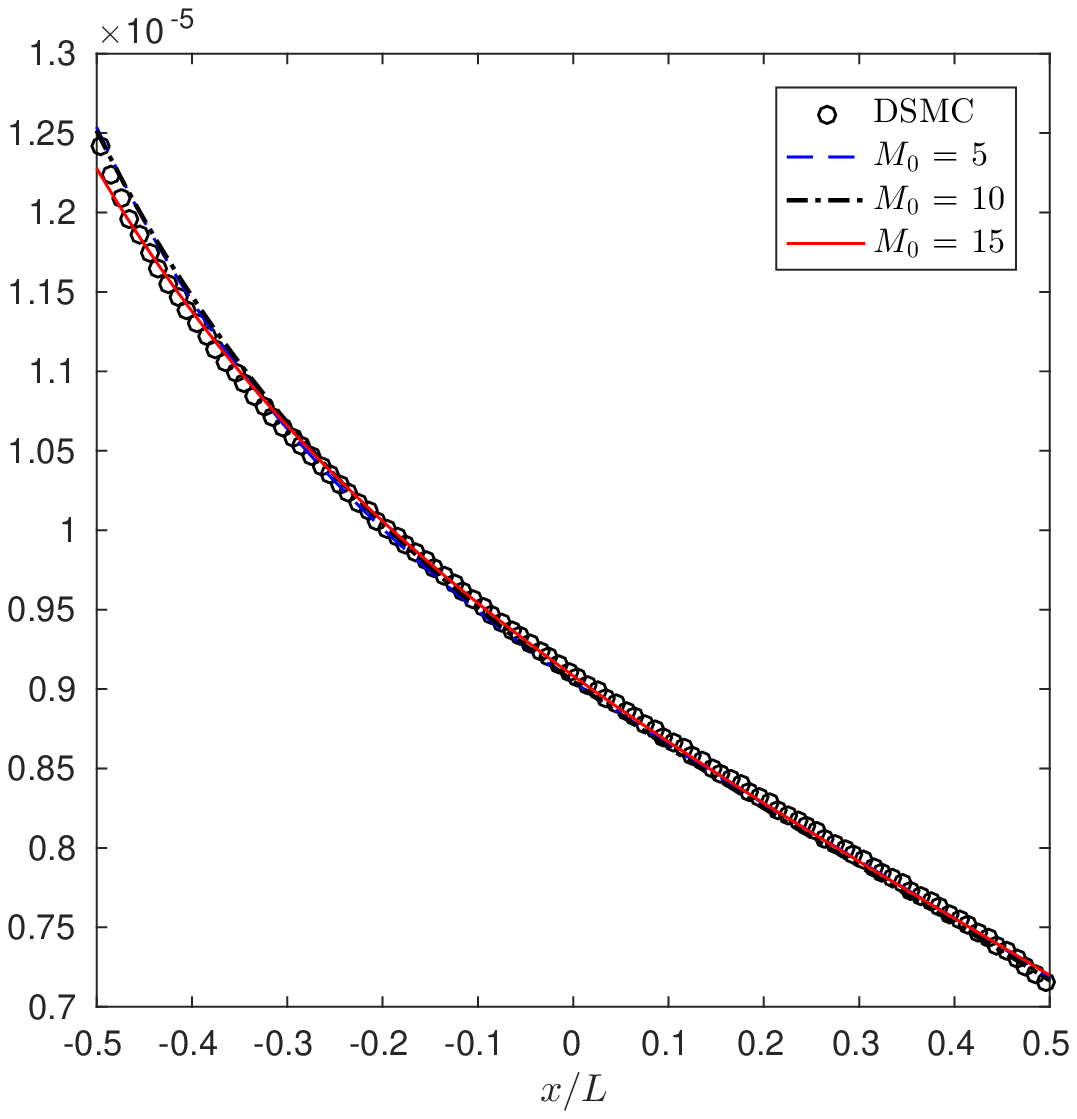}}\hfill
  \subfloat[Temperature, $T~({\rm K})$]{\includegraphics[width=0.49\textwidth,clip]{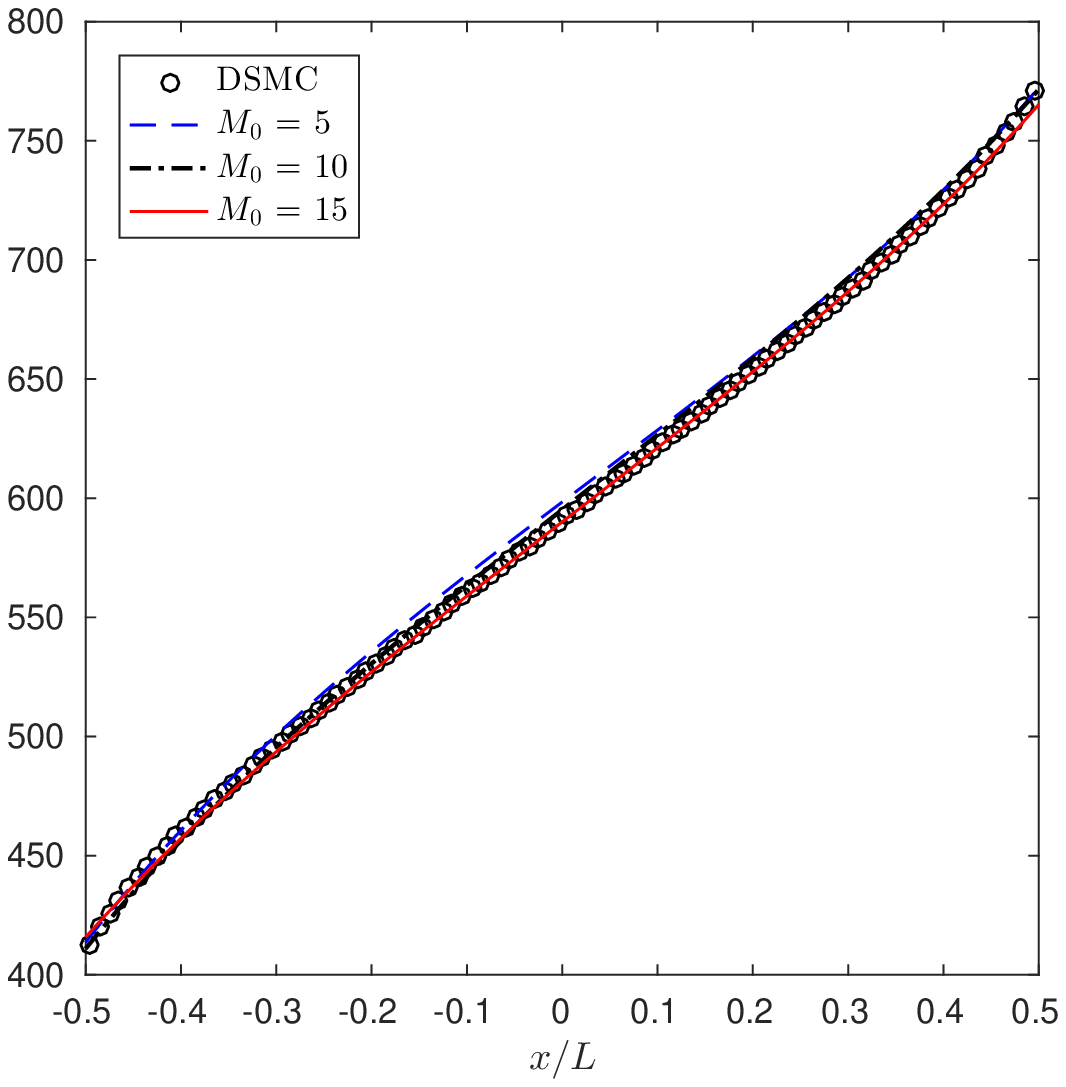}} \\
  \subfloat[Normal stress, $\sigma_{xx}~({\rm kg \cdot m^{-1}\cdot s^{-2}})$]{\includegraphics[width=0.49\textwidth,clip]{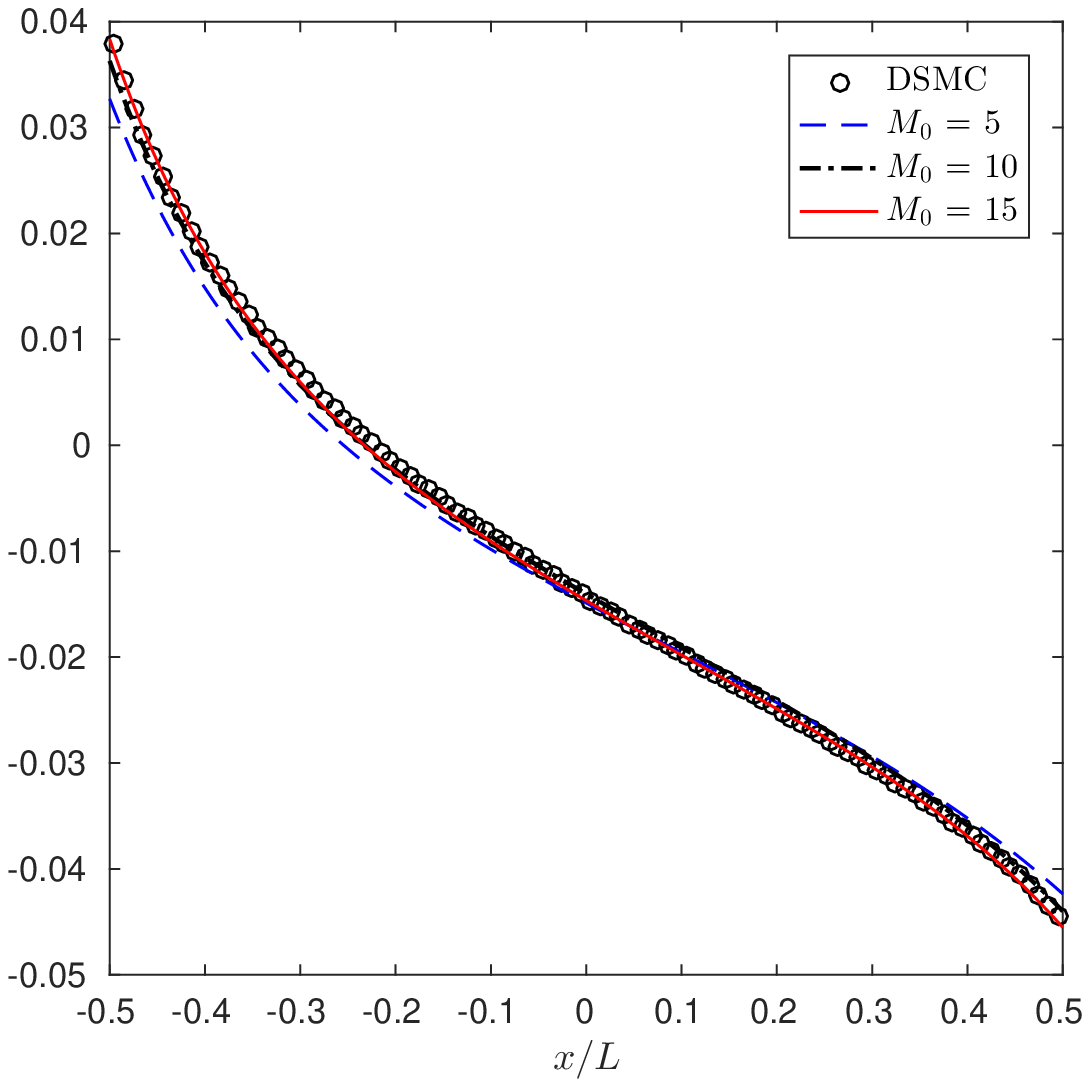}} \hfill
  \subfloat[Heat flux, $q_x~(\rm kg/s^{3})$]{\includegraphics[width=0.49\textwidth,clip]{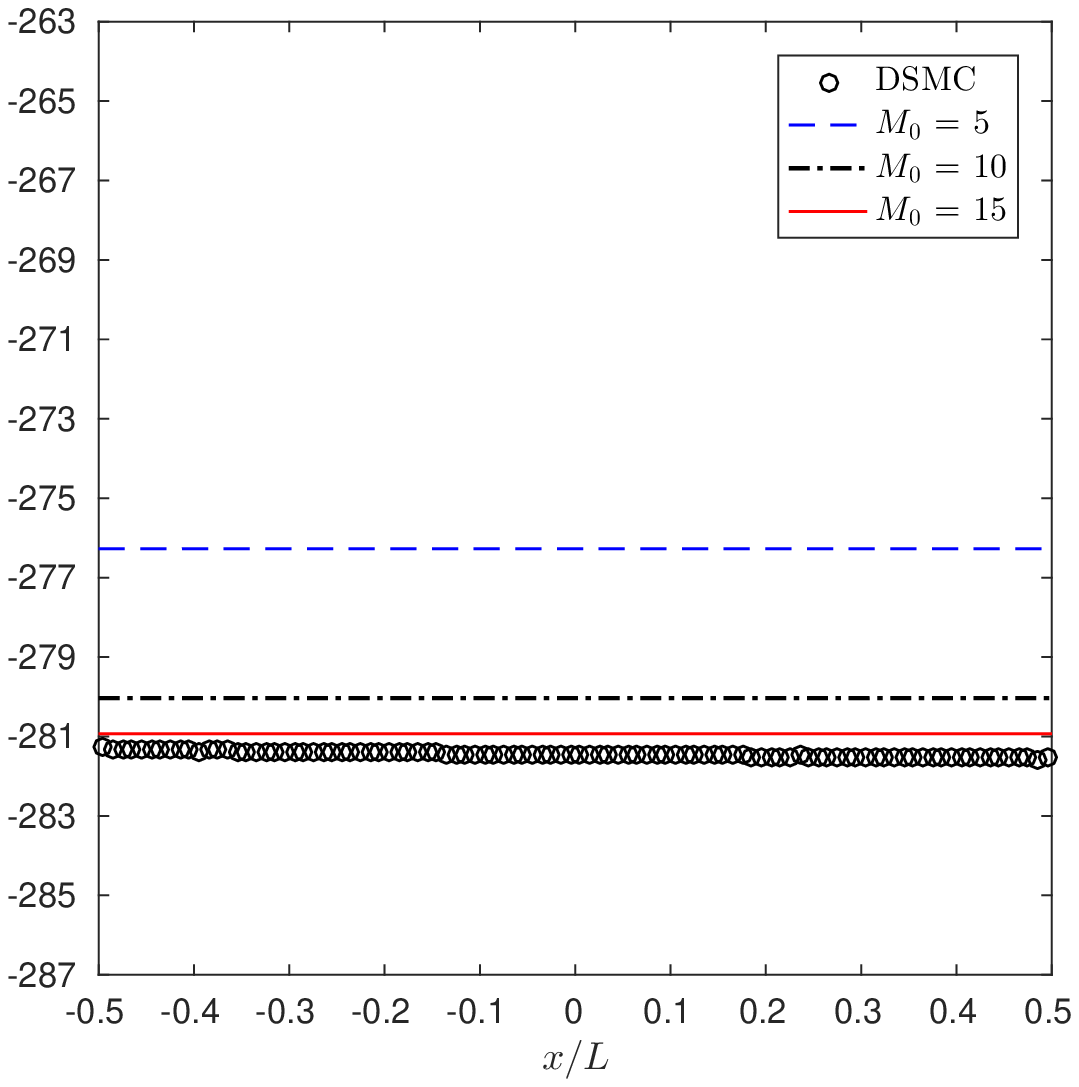}} 
  \caption{Solution of the Fourier flow for $T_{b}^{W}=4T_{a}^{W}$ with $\Kn =0.5$ and $M=30$.}
  \label{fig:fourier-Twr4Kn05-binaryOM30}
\end{figure}

\begin{figure}[!htb]
  \centering
  \subfloat[Density, $\rho~({\rm kg\cdot m^{-3}})$]{\includegraphics[width=0.49\textwidth,clip]{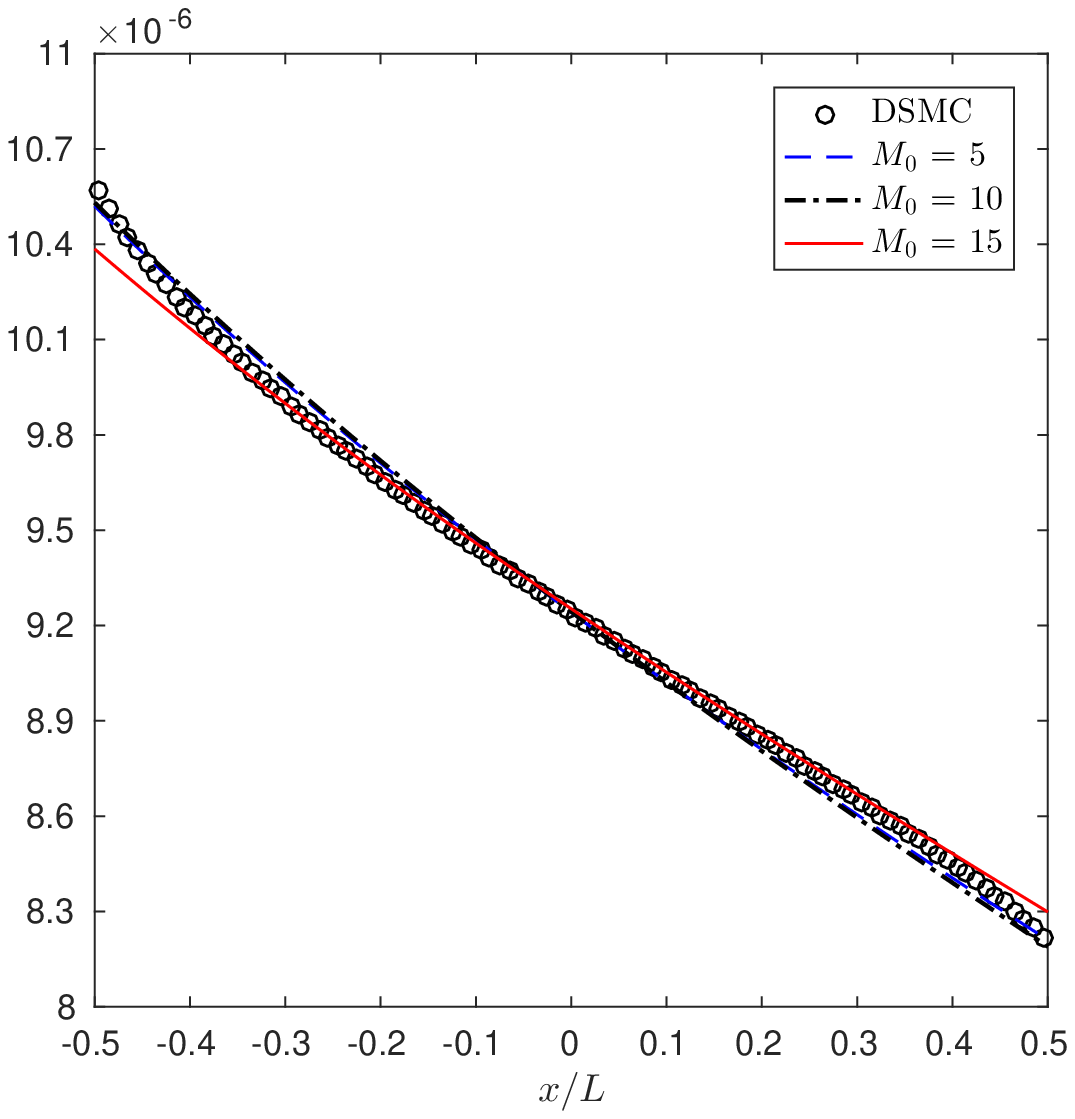}}\hfill
  \subfloat[Temperature, $T~({\rm K})$]{\includegraphics[width=0.49\textwidth,clip]{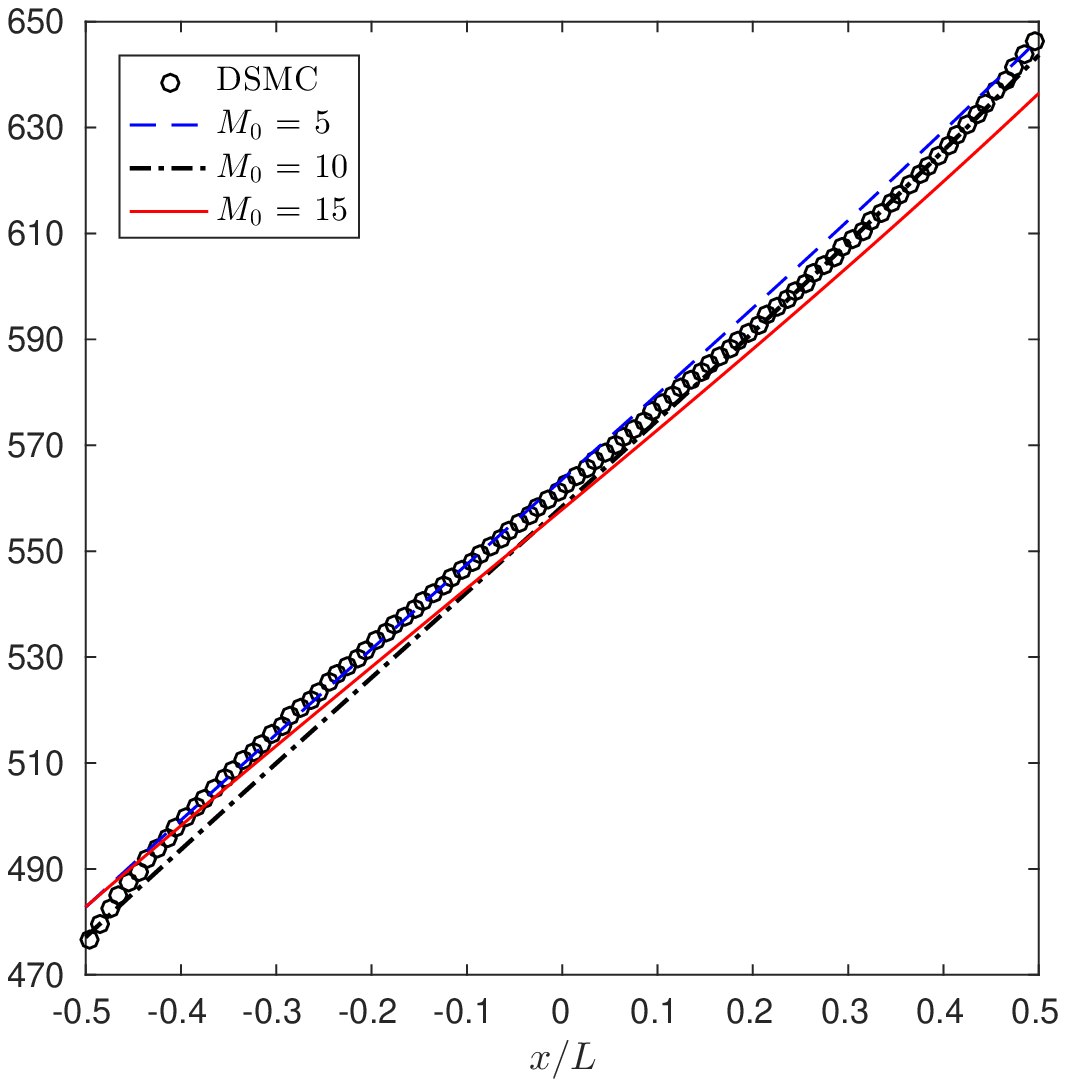}} \\
  \subfloat[Normal stress, $\sigma_{xx}~({\rm kg \cdot m^{-1}\cdot s^{-2}})$]{\includegraphics[width=0.49\textwidth,clip]{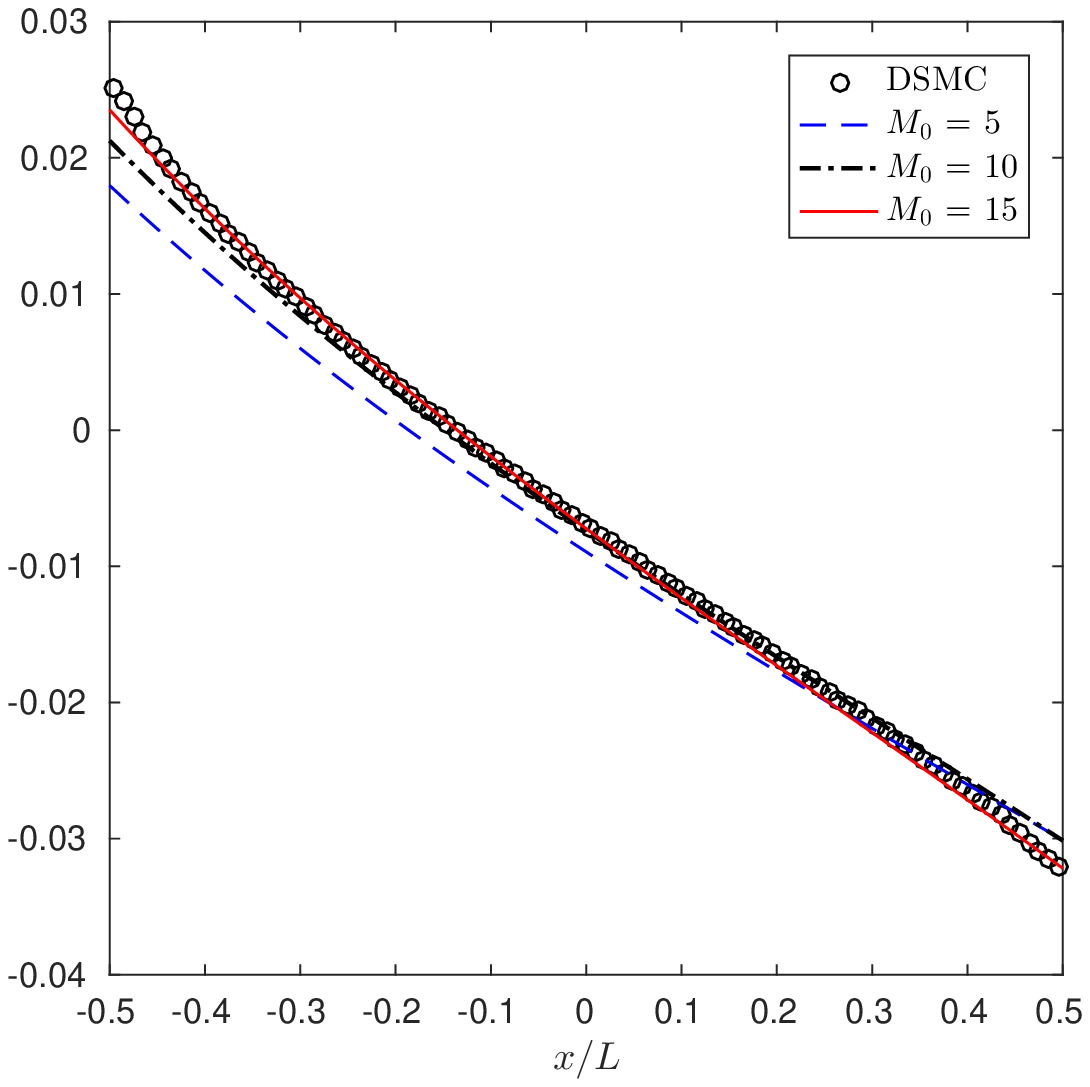}} \hfill
  \subfloat[Heat flux, $q_x~(\rm kg/s^{3})$]{\includegraphics[width=0.49\textwidth,clip]{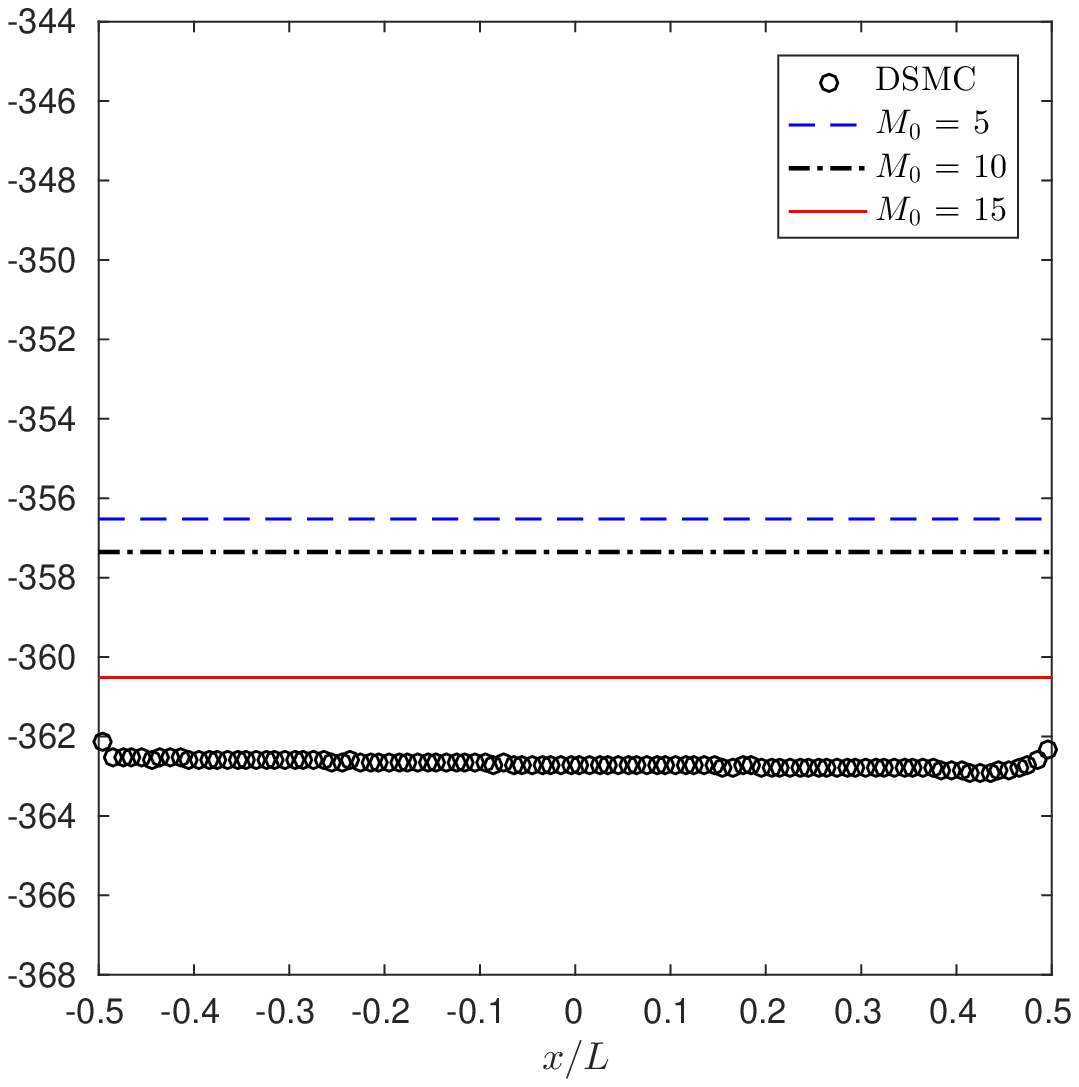}} 
  \caption{Solution of the Fourier flow for $T_{b}^{W}=4T_{a}^{W}$ with $\Kn =2.5$ and $M=40$.}
  \label{fig:fourier-Twr4Kn25-binaryOM40}
\end{figure}

\begin{figure}[!htb]
  \centering
  \subfloat[Density, $\rho~({\rm kg\cdot m^{-3}})$]{\includegraphics[width=0.49\textwidth,clip]{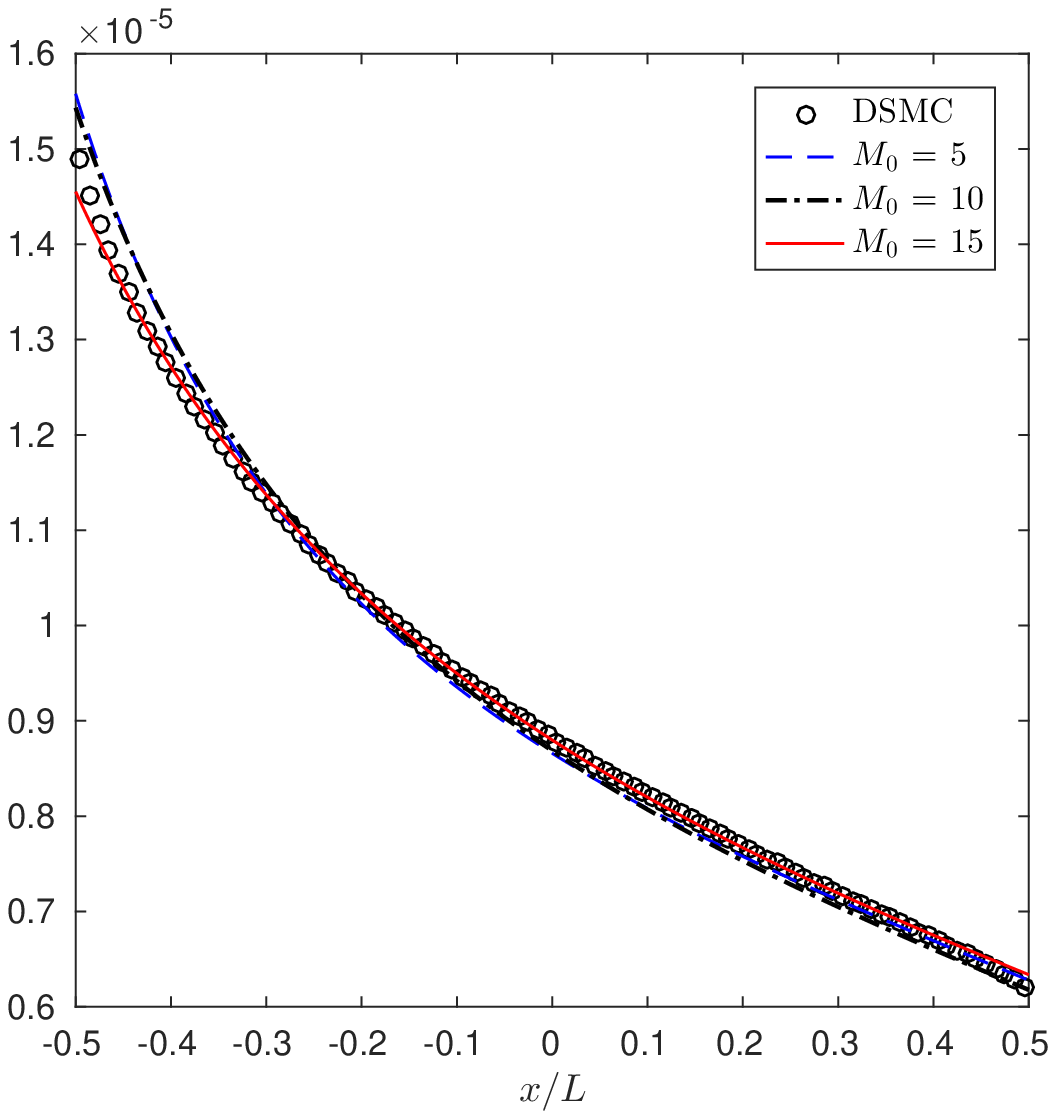}}\hfill
  \subfloat[Temperature, $T~({\rm K})$]{\includegraphics[width=0.49\textwidth,clip]{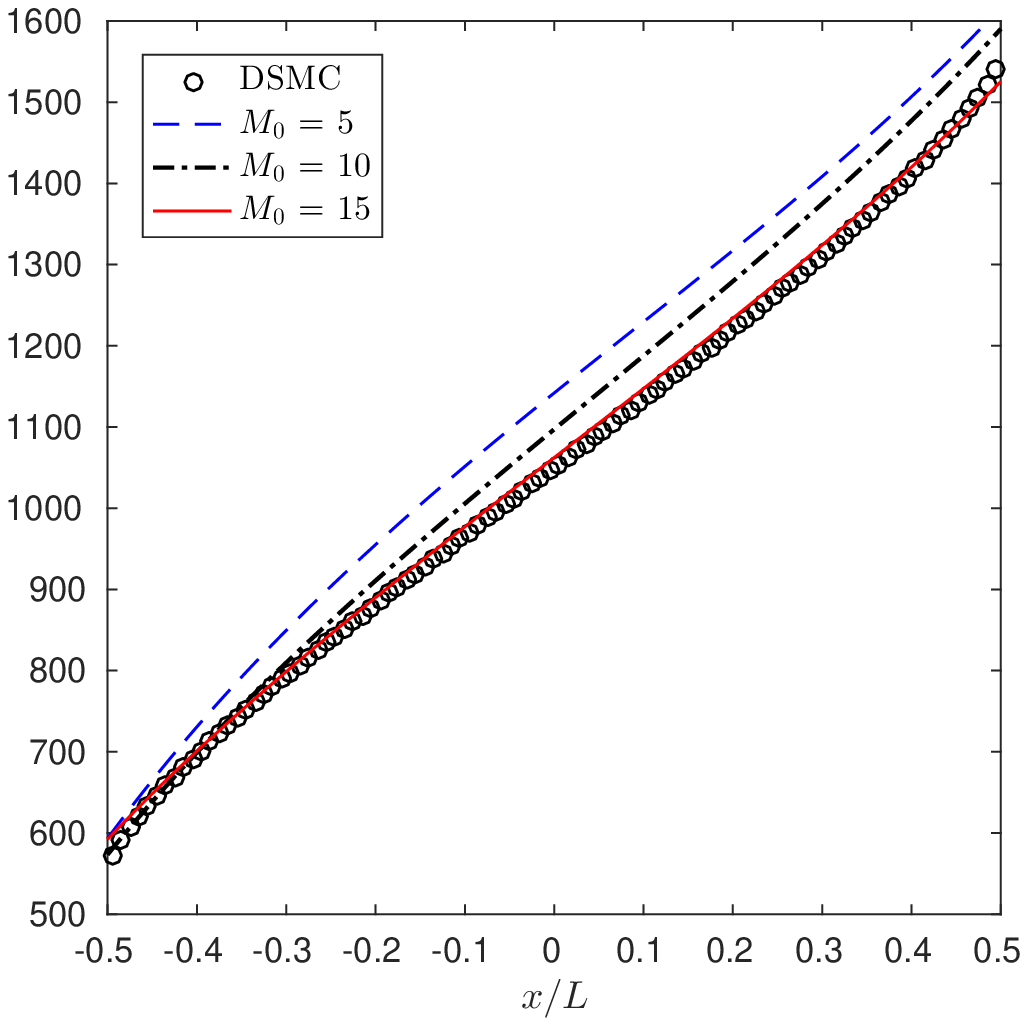}} \\
  \subfloat[Normal stress, $\sigma_{xx}~({\rm kg \cdot m^{-1}\cdot s^{-2}})$]{\includegraphics[width=0.49\textwidth,clip]{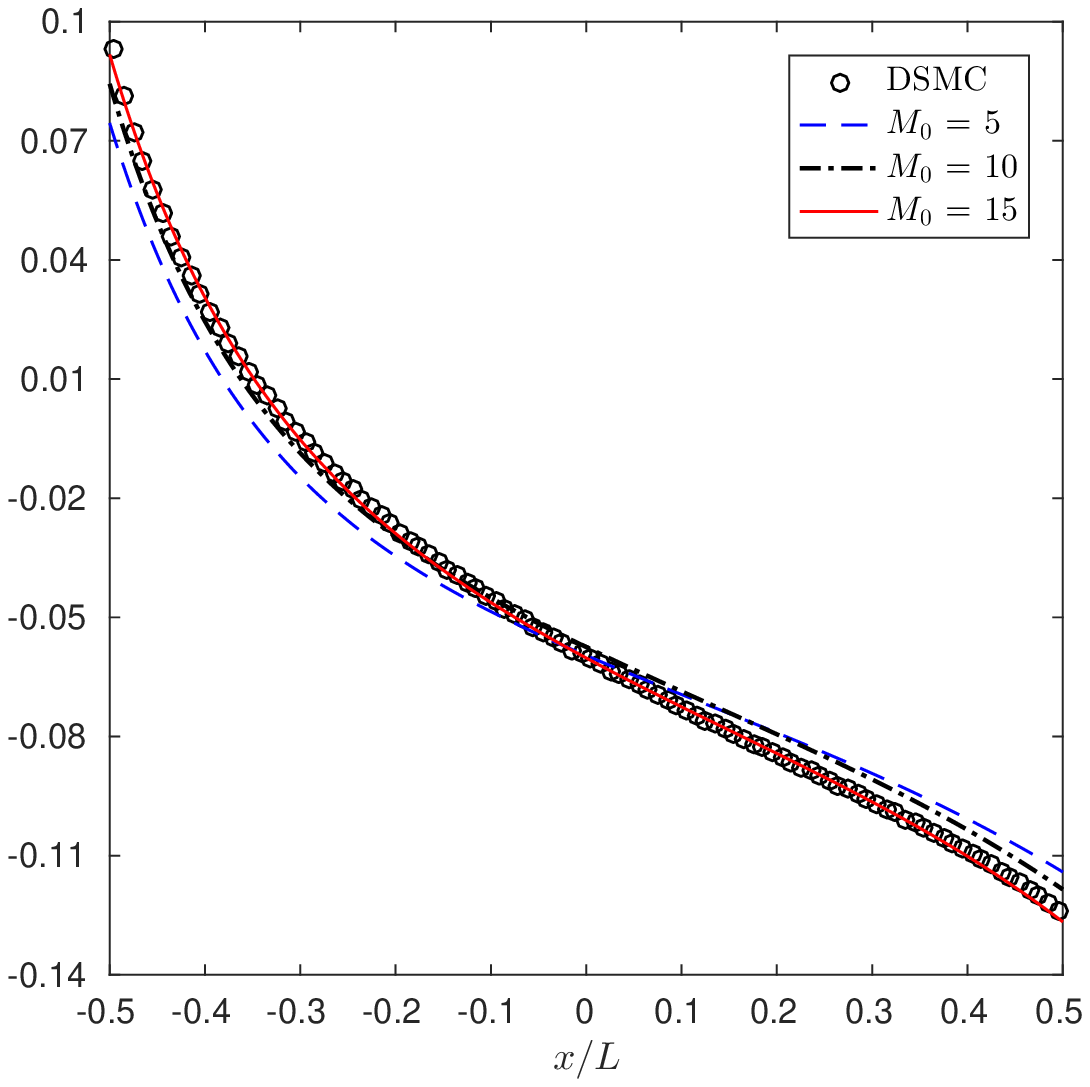}} \hfill
  \subfloat[Heat flux, $q_x~(\rm kg/s^{3})$]{\includegraphics[width=0.49\textwidth,clip]{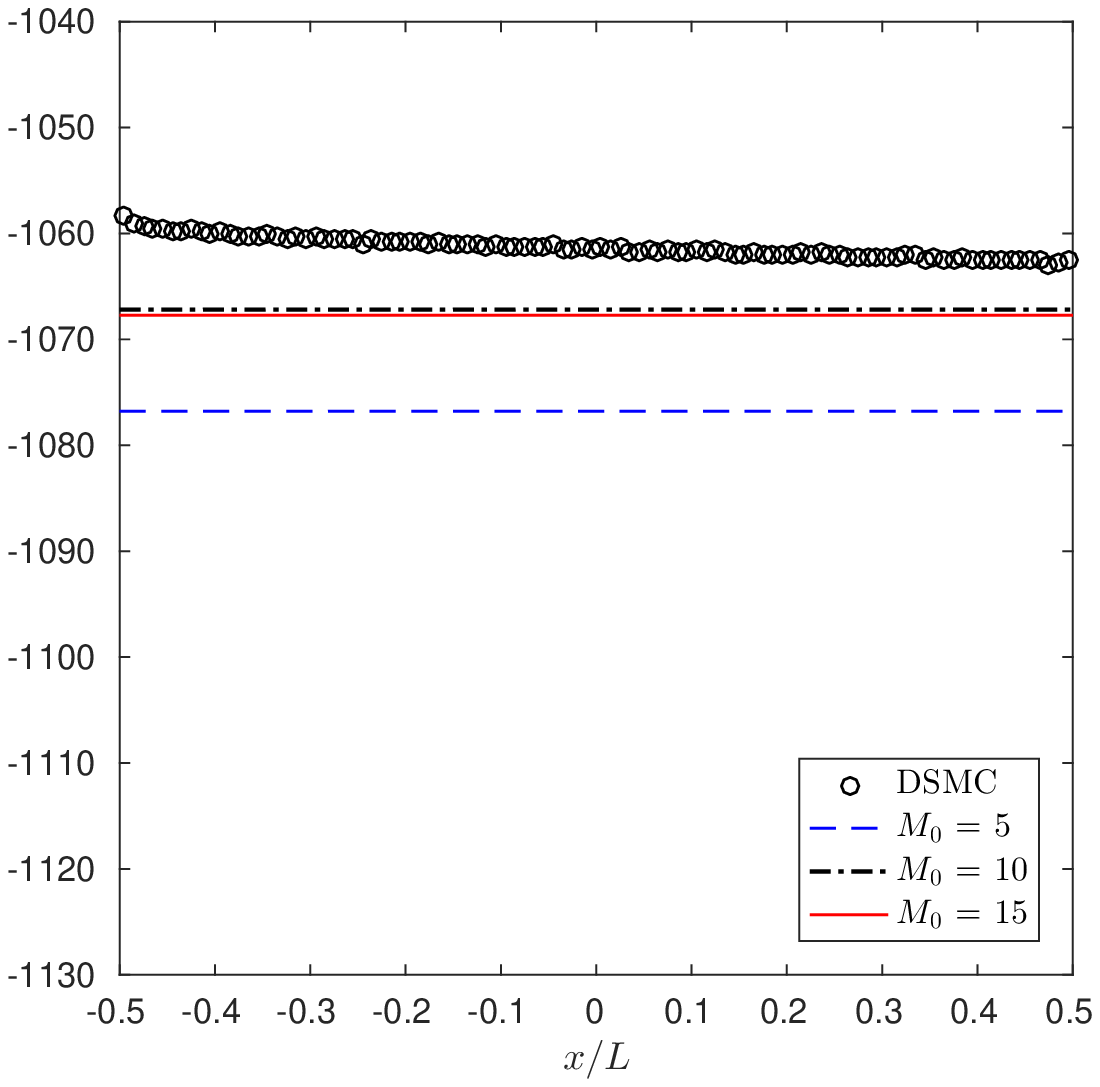}} 
  \caption{Solution of the Fourier flow for $T_{b}^{W}=10T_{a}^{W}$ with $\Kn =0.5$ and $M=30$.}
  \label{fig:fourier-Twr10Kn05-binaryOM30}
\end{figure}

\begin{figure}[!htb]
  \centering
  \subfloat[Density, $\rho~({\rm kg\cdot m^{-3}})$]{\includegraphics[width=0.49\textwidth,clip]{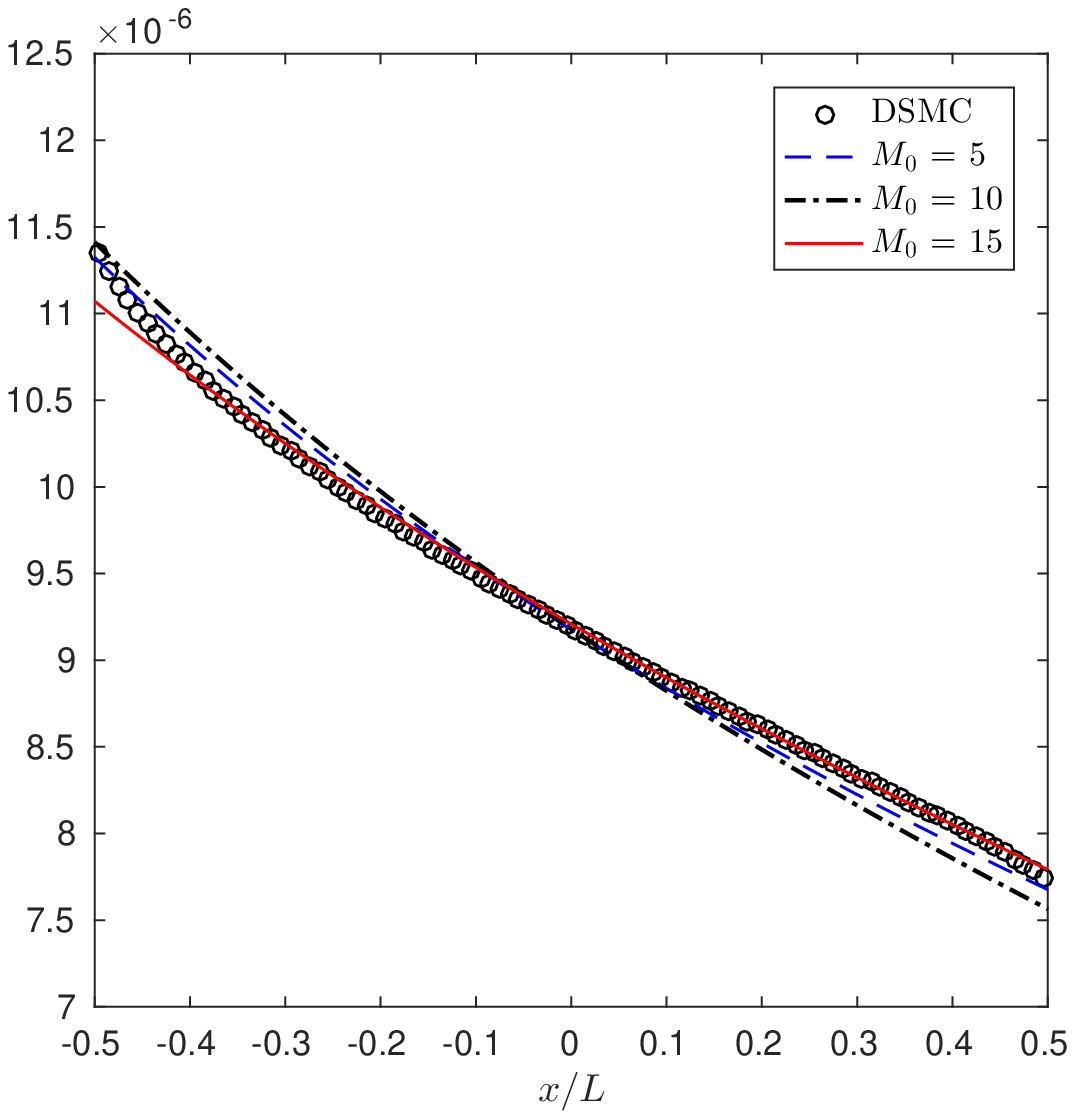}}\hfill
  \subfloat[Temperature, $T~({\rm K})$]{\includegraphics[width=0.49\textwidth,clip]{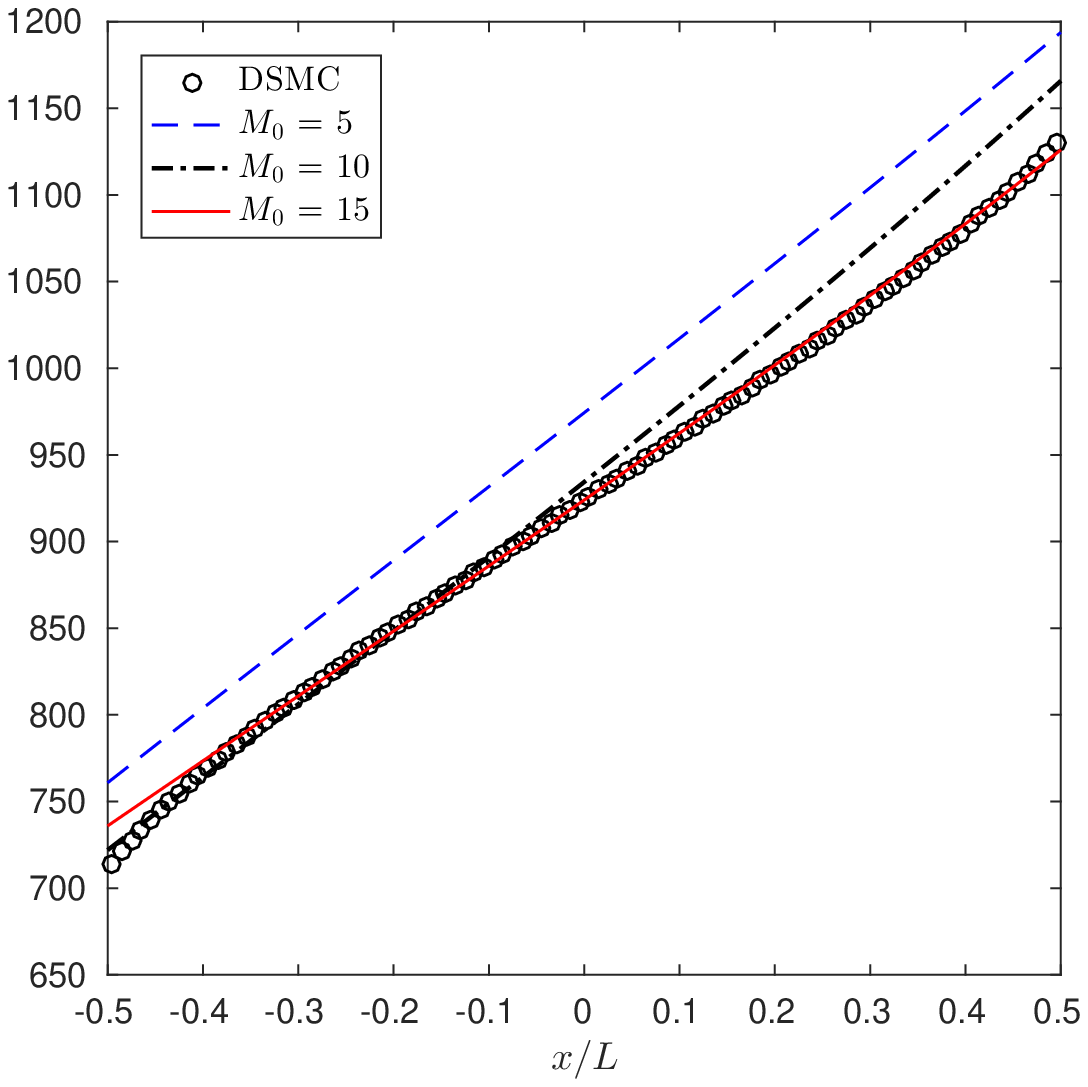}} \\
  \subfloat[Normal stress, $\sigma_{xx}~({\rm kg \cdot m^{-1}\cdot s^{-2}})$]{\includegraphics[width=0.49\textwidth,clip]{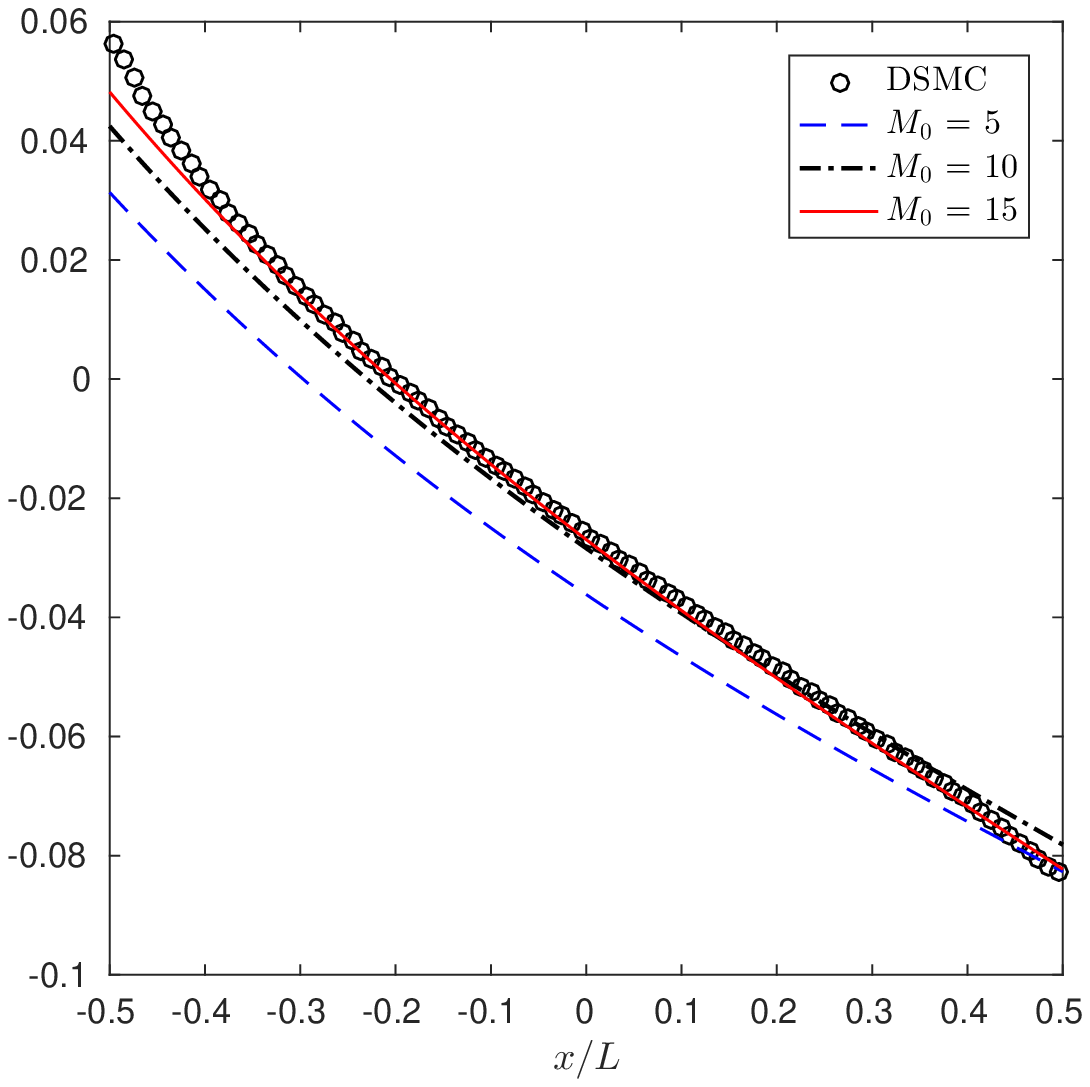}} \hfill
  \subfloat[Heat flux, $q_x~(\rm kg/s^{3})$]{\includegraphics[width=0.49\textwidth,clip]{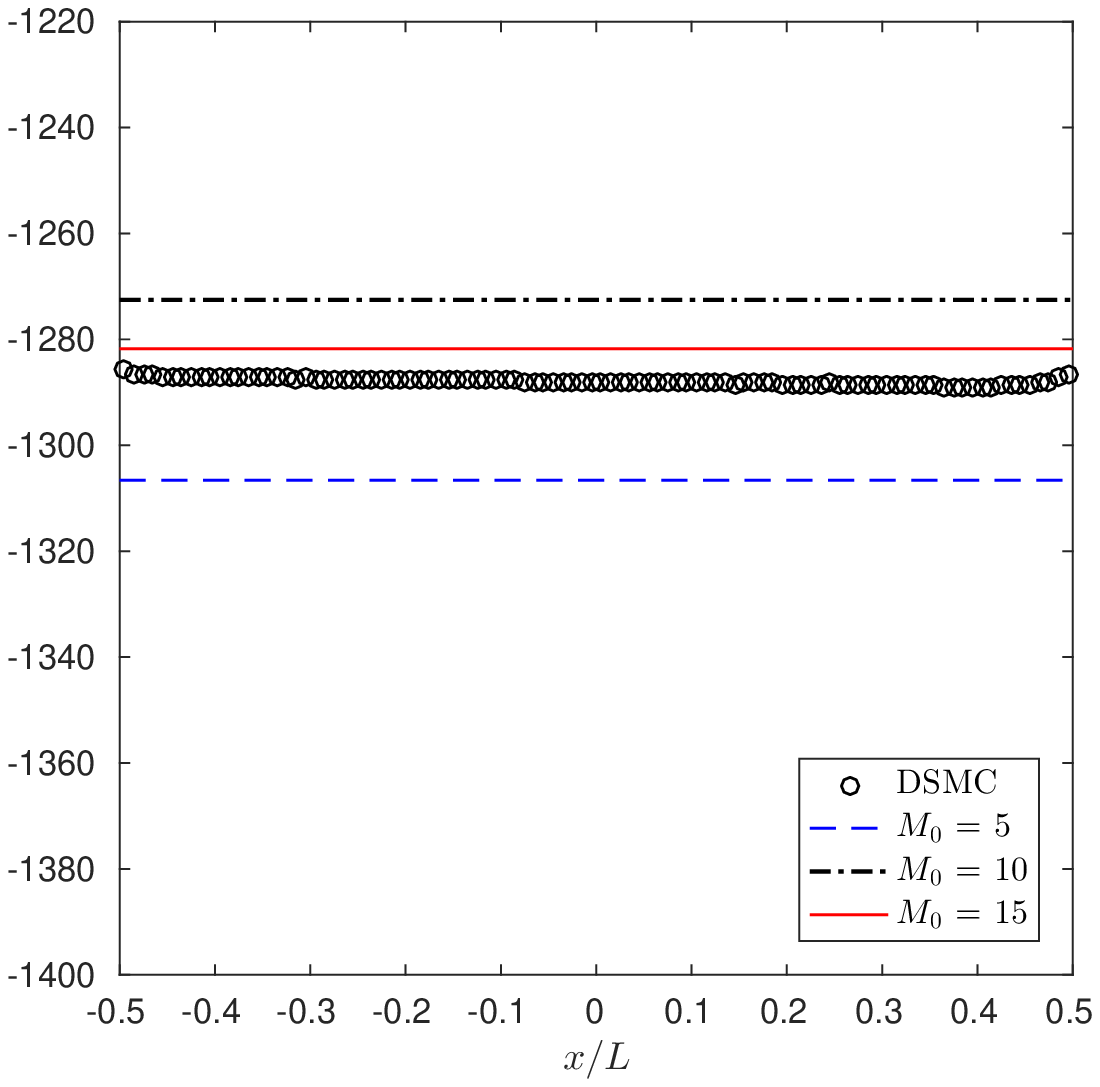}} 
  \caption{Solution of the Fourier flow for $T_{b}^{W}=10T_{a}^{W}$ with $\Kn =2.5$ and $M=40$.}
  \label{fig:fourier-Twr10Kn25-binaryOM40}
\end{figure}

\begin{table}
\centering
\caption{Run-time data for Fourier simulations with $10$ grid cells.}
\label{tab:fourier-cputime}
\begin{tabular}{c|ccc|cc}
\hline
Test case & \multicolumn{3}{c|}{$T_{b}^{W}=4T_{a}^{W}$} & \multicolumn{2}{c}{$T_{b}^{W}=10T_{a}^{W}$} \\
\hline
  Knudsen number ($\Kn$) & $0.1$ & $0.5$ & $2.5$ & $0.5$ & $2.5$ \\
  $M_{0}$ and $M$ & $10$, $20$ & $10$, $30$ & $10$, $40$ & $15$, $30$ & $15$, $40$ \\
  Number of coefficients & $1771$ & $5456$ & $12341$ & $5456$ & $12341$ \\ 
  Time step ($\Delta t$) & $\SI{1.73e-7}{\second}$ & $\SI{2.74e-8}{\second}$ & $\SI{4.67e-9}{\second}$& $\SI{1.85e-8}{\second}$ & $\SI{3.15e-9}{\second}$\\
  Number of time steps & $8754$ & $13230$ & $31899$ & $15403$ & $35888$ \\ 
  Total elapsed time & $\SI{164.68}{\second}$ & $\SI{251.13}{\second}$ & $\SI{1360.74}{\second}$ & $\SI{4367.15}{\second}$ & $\SI{13266}{\second}$ \\ 
  Elapsed time per time step & $\SI{1.88e-2}{\second}$ & $\SI{1.90e-2}{\second}$ & $\SI{4.27e-2}{\second}$ & $\SI{0.284}{\second}$ & $\SI{0.370}{\second}$ \\
\hline
\end{tabular}
\end{table}

\subsection{Simulation of unsteady flows}
Now we use another two numerical examples to study the evolution of the flow.
In both cases, we need to employ the steady shock structure simulated in Section
\ref{sec:shock} in the initial condition, while in this section, we expect that
the shock wave moves at the given Mach number. This can be achieved by the
following steps:
\begin{itemize}
\item Perform the same simulation as in Section \ref{sec:shock} using $\lu =
  (u_a, 0, 0)^T$.
\item In the numerical results, we perform a translation of the distribution
  function such that the fluid state in front of the shock has velocity zero.
  Such a translation can be simply implemented by setting $\lu = 0$.
\end{itemize}
The second step is in fact a transform of the frame of reference, which turns a
steady shock structure to a moving shock structure. Below we are going to study
the collision of two shock structures and the interaction between the shock and
the solid wall.

\subsubsection{Collision of two shocks}
We first consider the collision of two shock waves which move in
opposite directions with the Mach numbers $\mathit{Ma}=3.8$ and $6.5$
respectively. Precisely speaking, the shock with $\mathit{Ma}=3.8$ is
on the left of the domain and moves to the right, while the shock with
$\mathit{Ma}=6.5$ is on the right of the domain and moves to the
left. In our method, the parameters $\lu$ and $\lth$ must be constants
for all grid cells. Thus, the initial shock profiles are obtained by
the simulations in Section \ref{sec:shock} using $\lu = \bu_{a}$ with
the corresponding Mach numbers and $\lth = 0.8 \theta_{b}$, where
$\theta_{b}$ is computed with $\mathit{Ma}=6.5$. After re-setting
$\lu = 0$ and reversing the velocity of the shock with
$\mathit{Ma}=3.8$, we then obtain the initial state of this test. Due
to the existence of two shocks in opposite directions, the
distribution function can spread widely after the merge two shocks,
which makes the problem highly challenging.

Our numerical solutions of density $\rho$, temperature $T$, normal
stress $\sigma_{xx}$ and heat flux $q_{x}$ with $M_{0}=10$ and $M=20$
are presented in \figurename~\ref{fig:shock-col-binaryOM20} for
various time instants from $t=0$ to $1.2$. As shown in the figure, the
black solid line gives the corresponding initial state of these
quantities. After the collision of two shocks around the time $t=0.6$,
two new shock waves will be generated with a rarefaction wave standing
between them. The left shock wave moves to left from right with the
speed smaller than $\mathit{Ma}=6.5$, while the right shock wave moves
from left to right with the speed smaller than
$\mathit{Ma}=3.8$. Besides, the left shock wave moves much faster then
the right shock wave. These solution structures are consistent with
the corresponding solutions obtained by the classical Euler equations,
whose numerical results at $t=1.2$ are also given in
\figurename~\ref{fig:shock-col-binaryOM20} by gray solid line.

\begin{figure}[!htb]
  \centering
  \subfloat[Density, $\rho~({\rm kg\cdot m^{-3}})$]{\includegraphics[width=0.48\textwidth,clip]{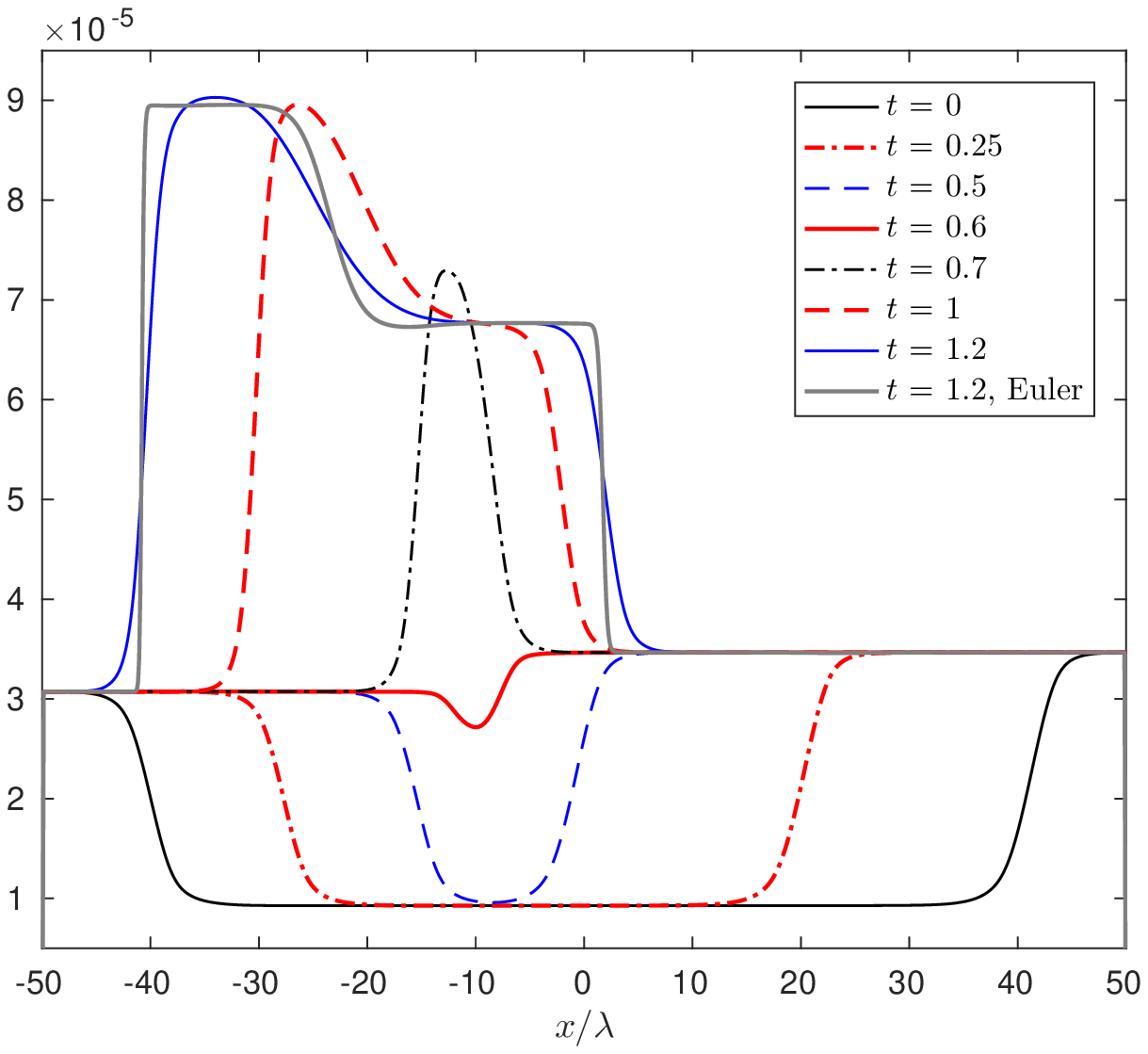}}\hfill
  \subfloat[Temperature, $T~({\rm K})$]{\includegraphics[width=0.5\textwidth,clip]{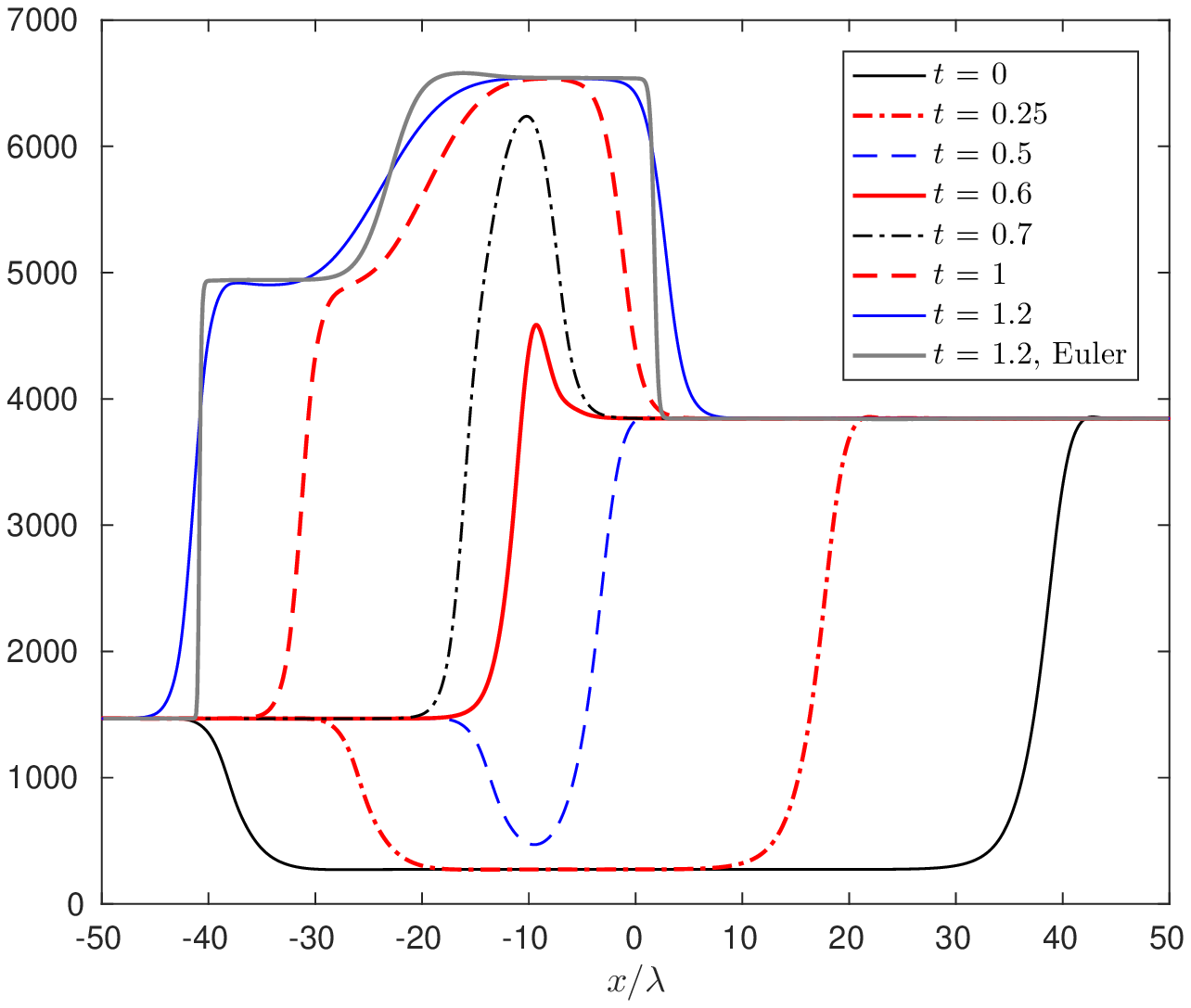}} \\
  \subfloat[Normal stress, $\sigma_{xx}~({\rm kg \cdot m^{-1}\cdot s^{-2}})$]{\includegraphics[width=0.48\textwidth,clip]{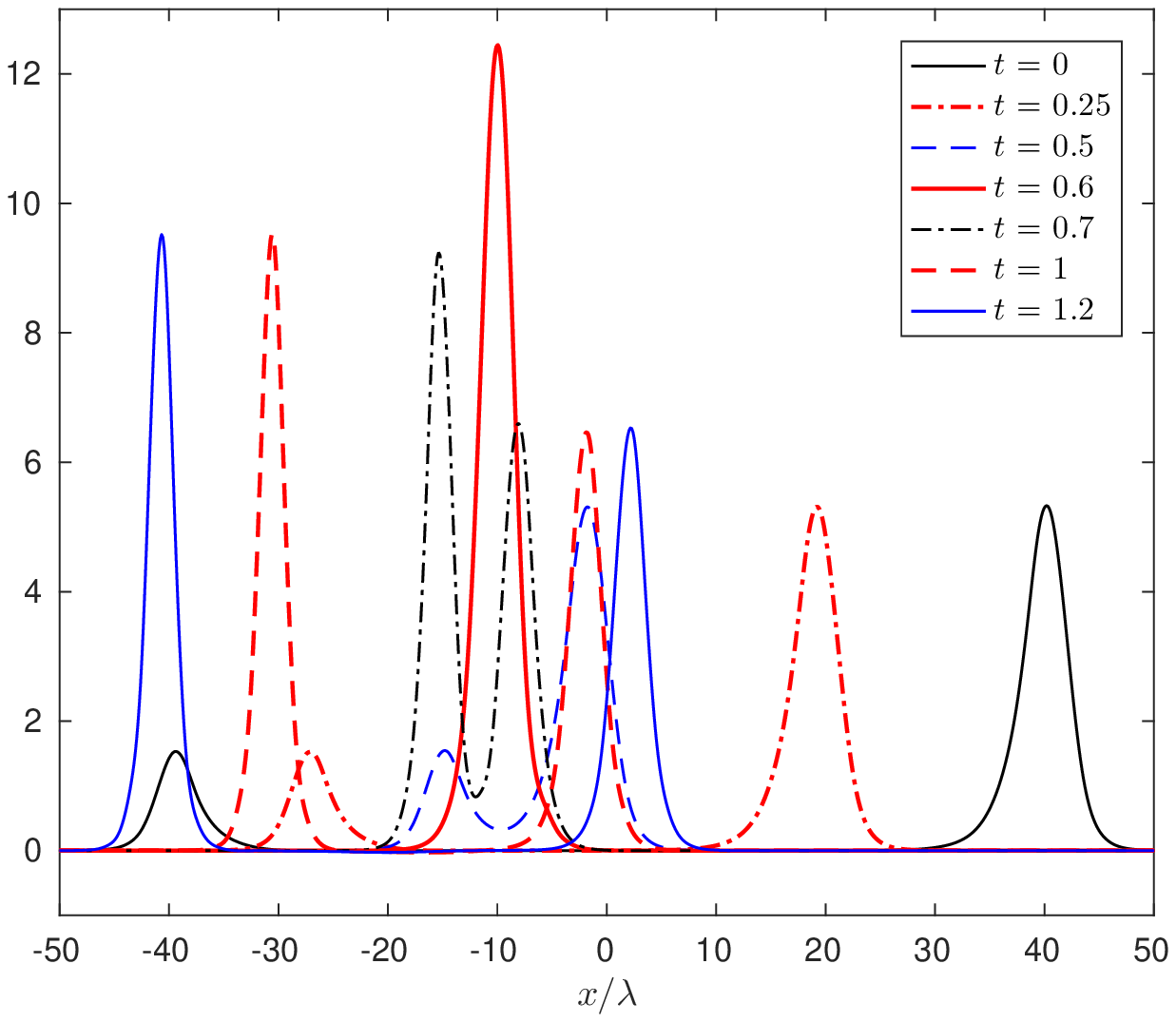}} \hfill
  \subfloat[Heat flux, $q_x~(\rm kg/s^{3})$]{\includegraphics[width=0.49\textwidth,clip]{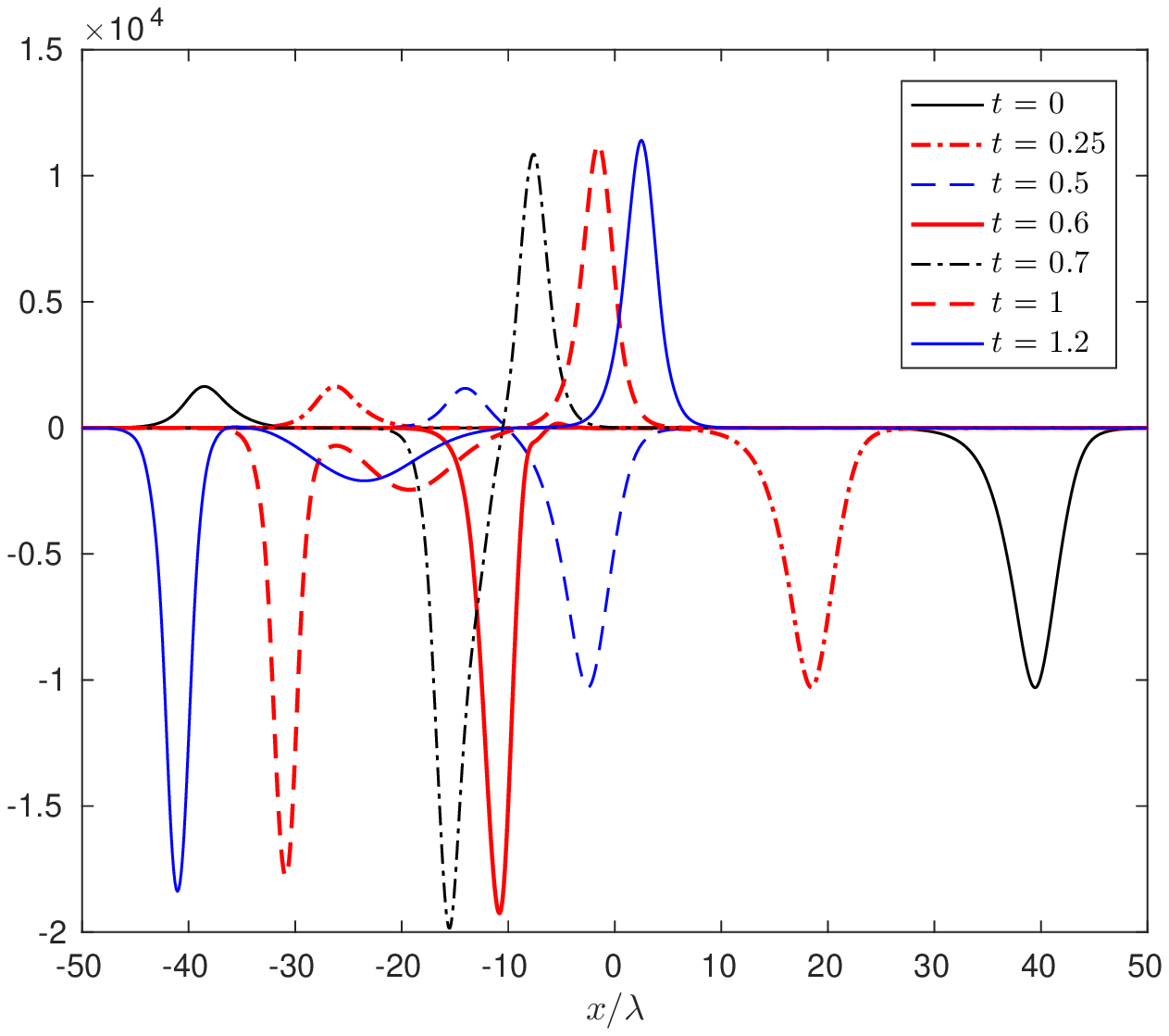}} 
  \caption{Solution for the collision of two shocks with
    $\mathit{Ma}=3.8$ and $6.5$ respectively at various time with
    $M_{0}=10$ and $M=20$. The gray solid line is the corresponding
    results from the Euler equations at $t=1.2$.}
  \label{fig:shock-col-binaryOM20}
\end{figure}

\subsubsection{Interaction of a shock and a solid wall}
This test investigates the interaction of a shock wave and a solid
wall. The shock wave moves from right to left, and the solid wall is
fixed on the left of the computational domain $[-30\lambda,
30\lambda]$ with the wall temperature $T_{a}^{W} =
\SI{273.15}{\kelvin}$ and the accommodation coefficient $\chi_{a}=1$.
It is observed that the temperature would be doubled after interacting
with the wall. So we use $\lu = \bu_{a}$ and $\lth = \theta_{b}$
together with $M_{0}=10$ and $M=40$ to prepare the initial shock
profile. Then the shock wave will move to left with the expected speed
by setting $\lu = 0$. For easier processing of the boundary condition,
a linearly reconstructed finite volume method with $600$ uniform grid
cells is performed, instead of the fifth-order WENO finite volume
method used in Section \ref{sec:shock}.

This example is even more challenging due to the existence of the
solid wall, which introduces discontinuity into the distribution
function. Numerical solutions of density $\rho$, temperature $T$,
normal stress $\sigma_{xx}$ and heat flux $q_{x}$ are presented in
\figurename~\ref{fig:shock-wall-Ma38} at various time instants from
$t=0$ to $1.2$ with $\mathit{Ma}=3.8$. The black solid line in the
figure represents the corresponding initial state of these
quantities. It is shown that the shock collides with the wall around
the time $t=0.6$. By the interaction with the wall, the particles
accumulate in a small region around the wall, resulting a significant
increase in density. A shock wave is then bounced back, with the speed
much slower than the original shock. Similar results can be obtained
for the case $\mathit{Ma}=6.5$, as given in
\figurename~\ref{fig:shock-wall-Ma65} at various time instants from
$t=0$ to $0.8$.

\begin{figure}[!htb]
  \centering
  \subfloat[Density, $\rho~({\rm kg\cdot m^{-3}})$]{\includegraphics[width=0.49\textwidth,clip]{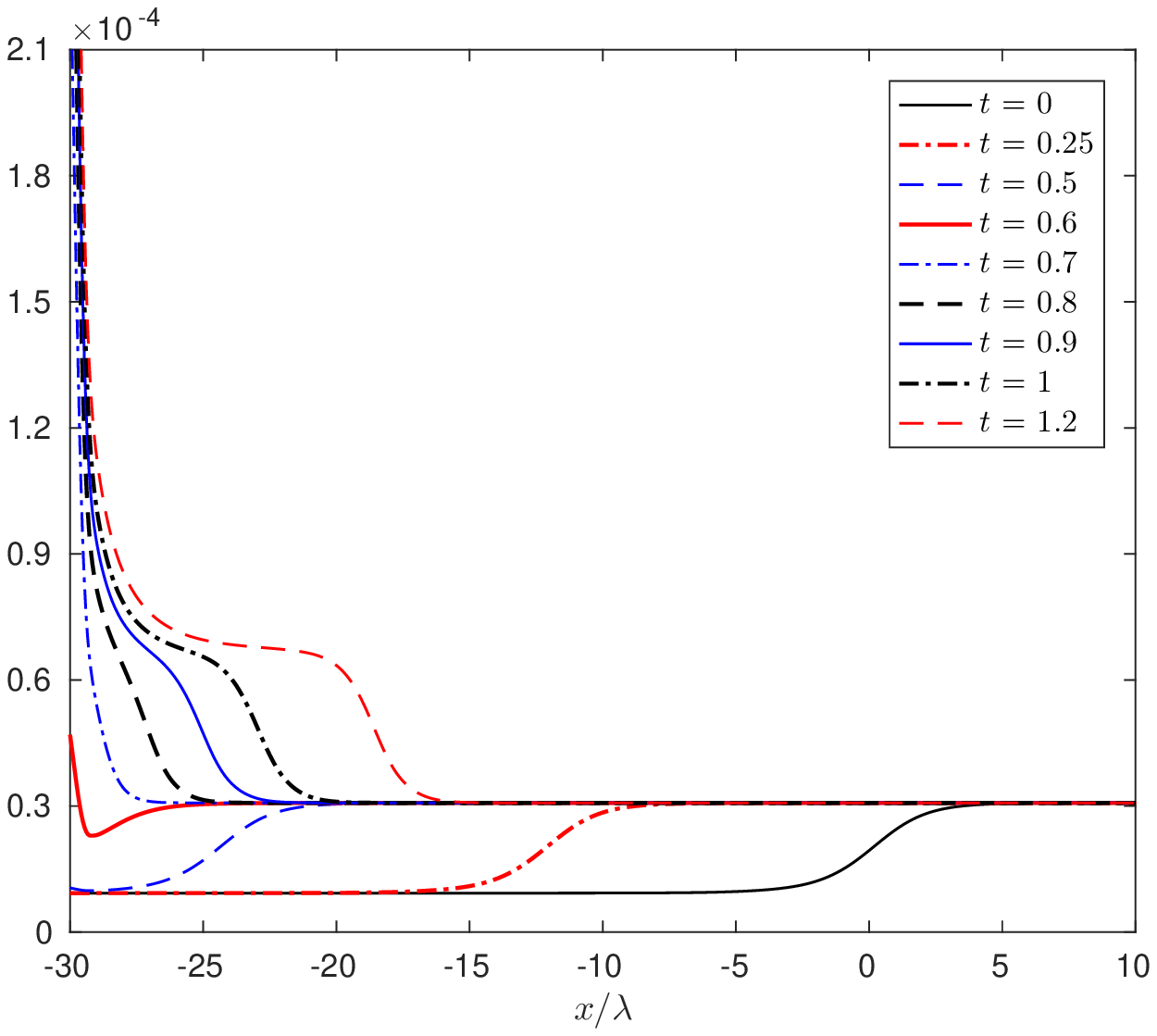}}\hfill
  \subfloat[Temperature, $T~({\rm K})$]{\includegraphics[width=0.5\textwidth,clip]{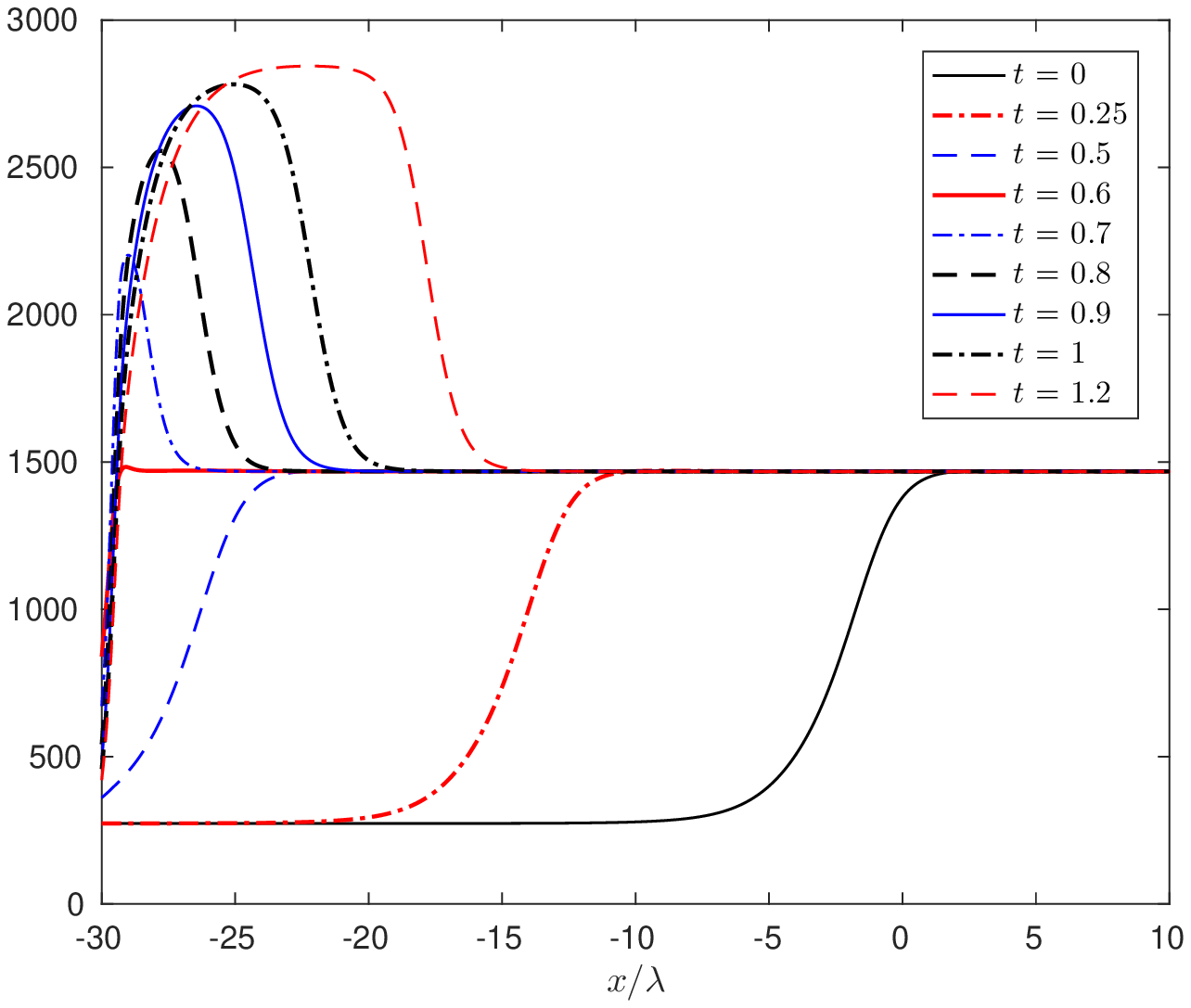}} \\
  \subfloat[Normal stress, $\sigma_{xx}~({\rm kg \cdot m^{-1}\cdot s^{-2}})$]{\includegraphics[width=0.48\textwidth,clip]{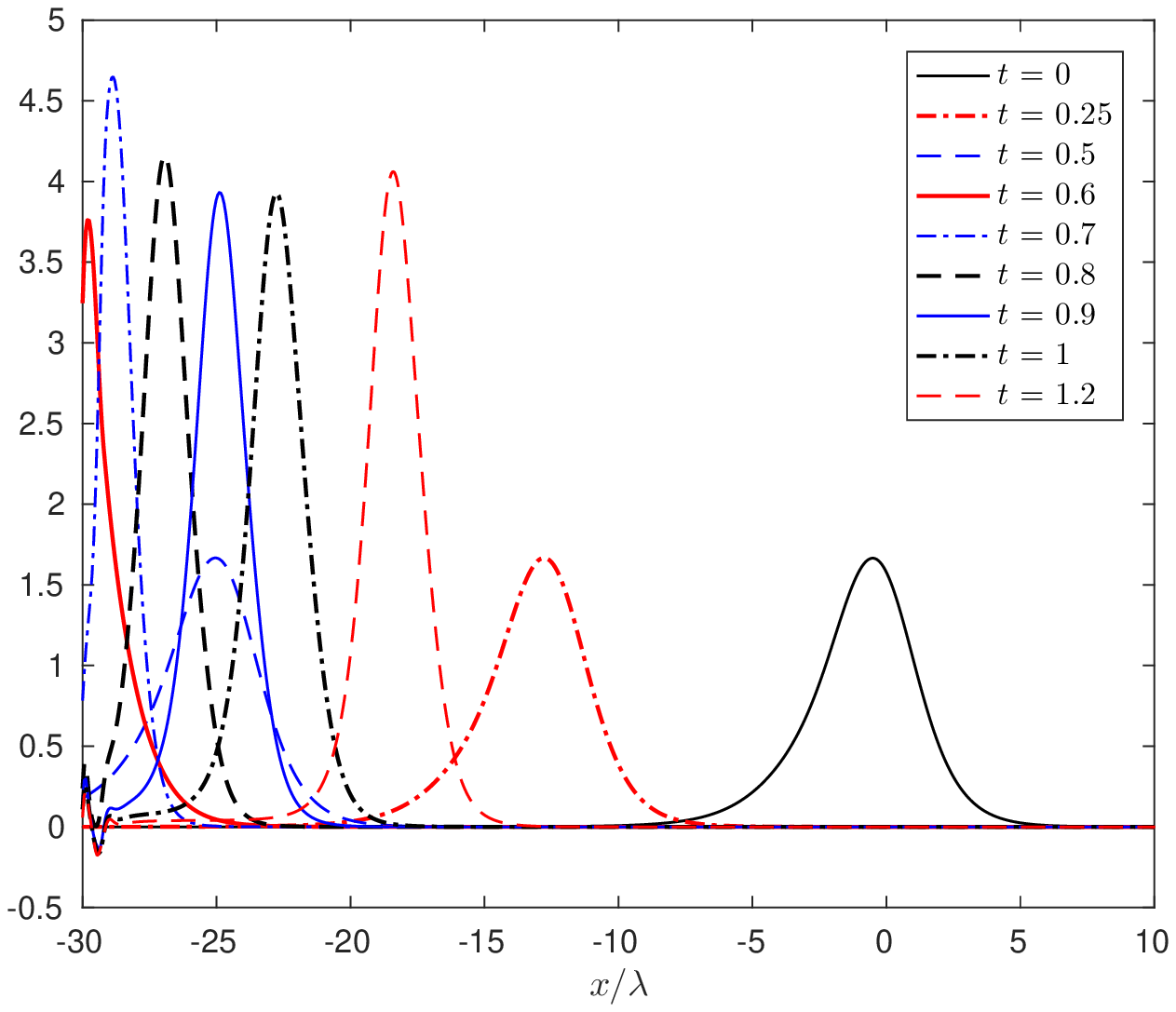}} \hfill
  \subfloat[Heat flux, $q_x~(\rm kg/s^{3})$]{\includegraphics[width=0.5\textwidth,clip]{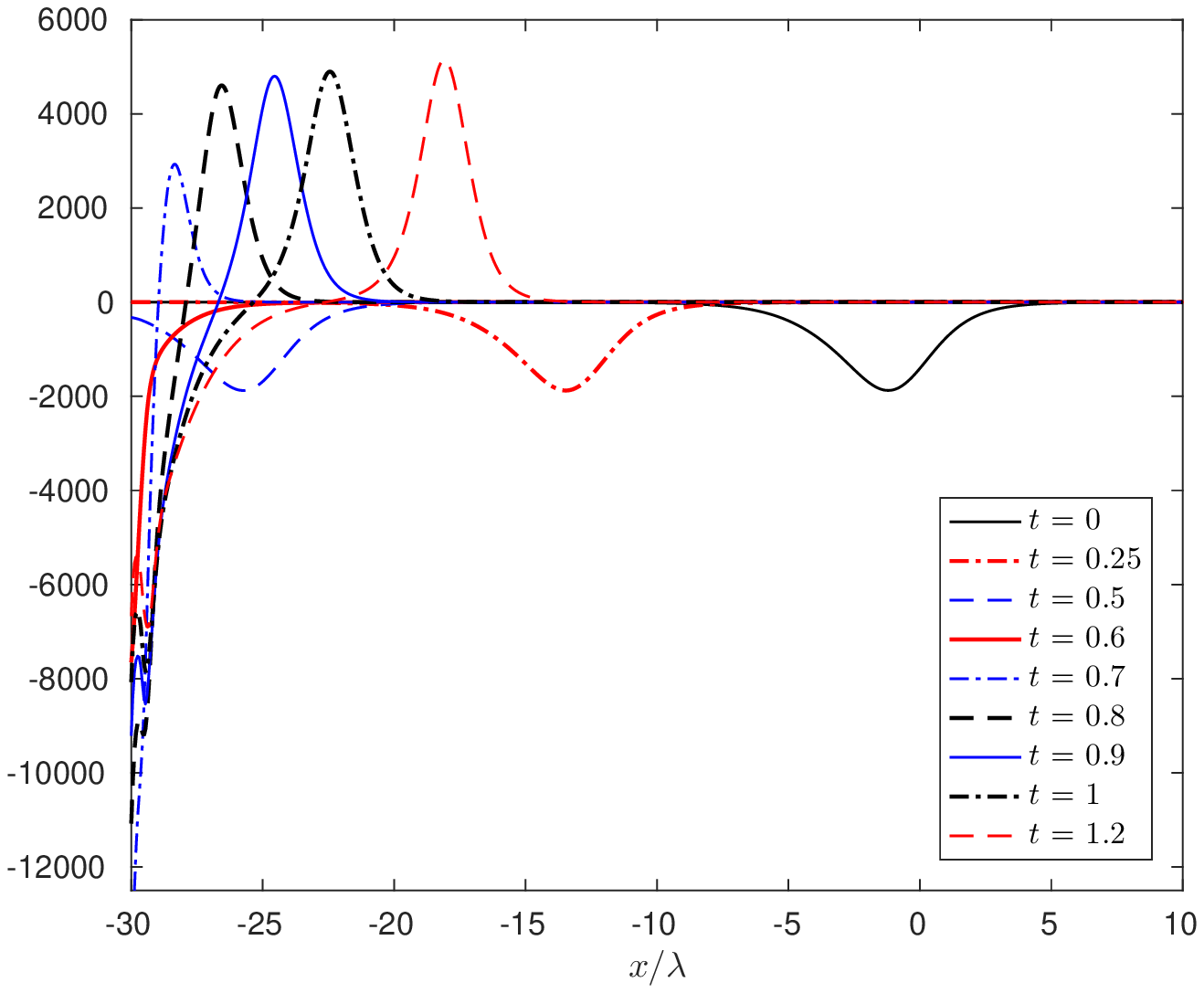}} 
  \caption{Solution for the interaction of a shock with
    $\mathit{Ma}=3.8$ and a solid wall at various time with $M_{0}=10$
    and $M=40$.}
  \label{fig:shock-wall-Ma38}
\end{figure}

\begin{figure}[!htb]
  \centering
  \subfloat[Density, $\rho~({\rm kg\cdot m^{-3}})$]{\includegraphics[width=0.49\textwidth,clip]{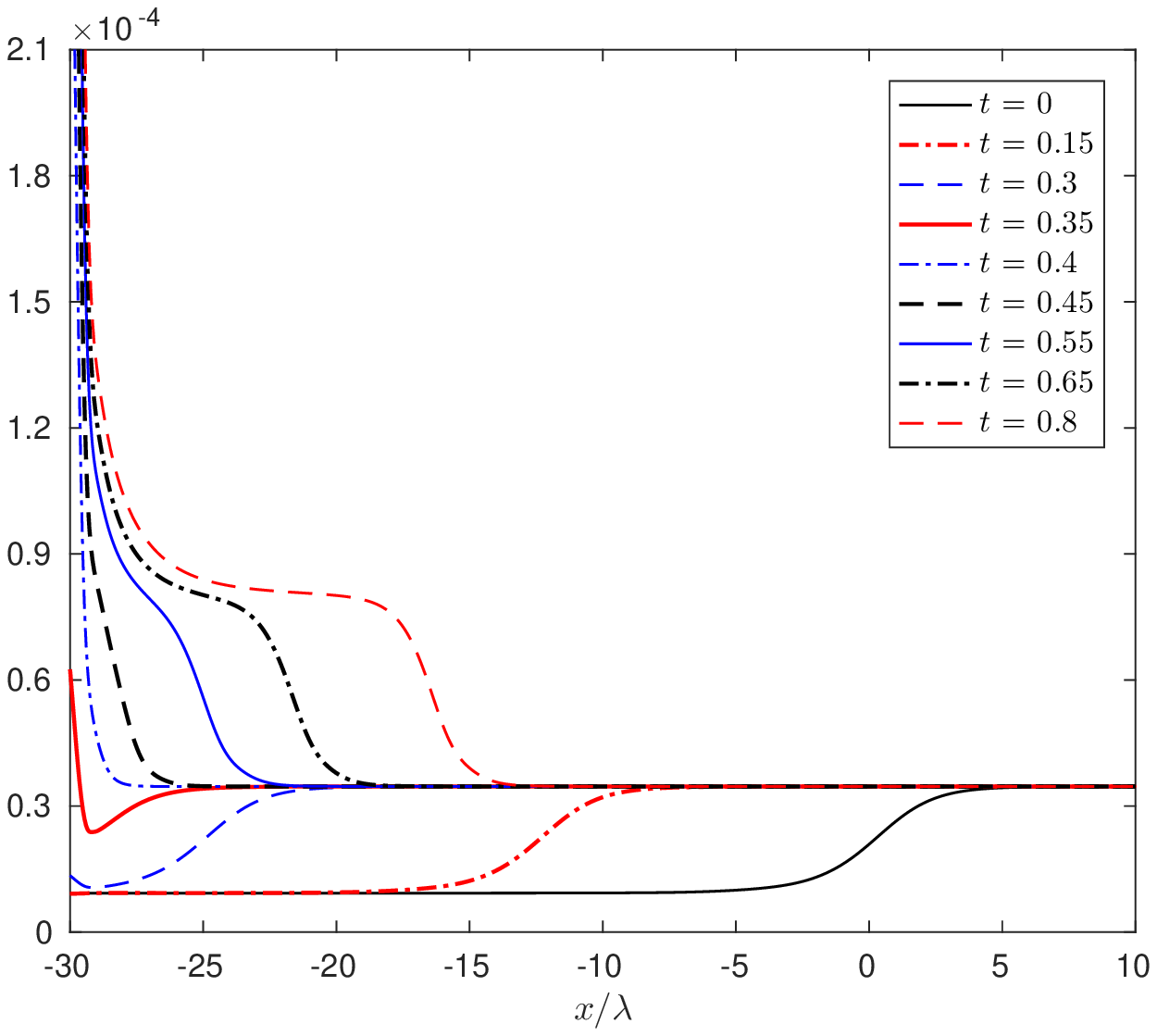}}\hfill
  \subfloat[Temperature, $T~({\rm K})$]{\includegraphics[width=0.5\textwidth,clip]{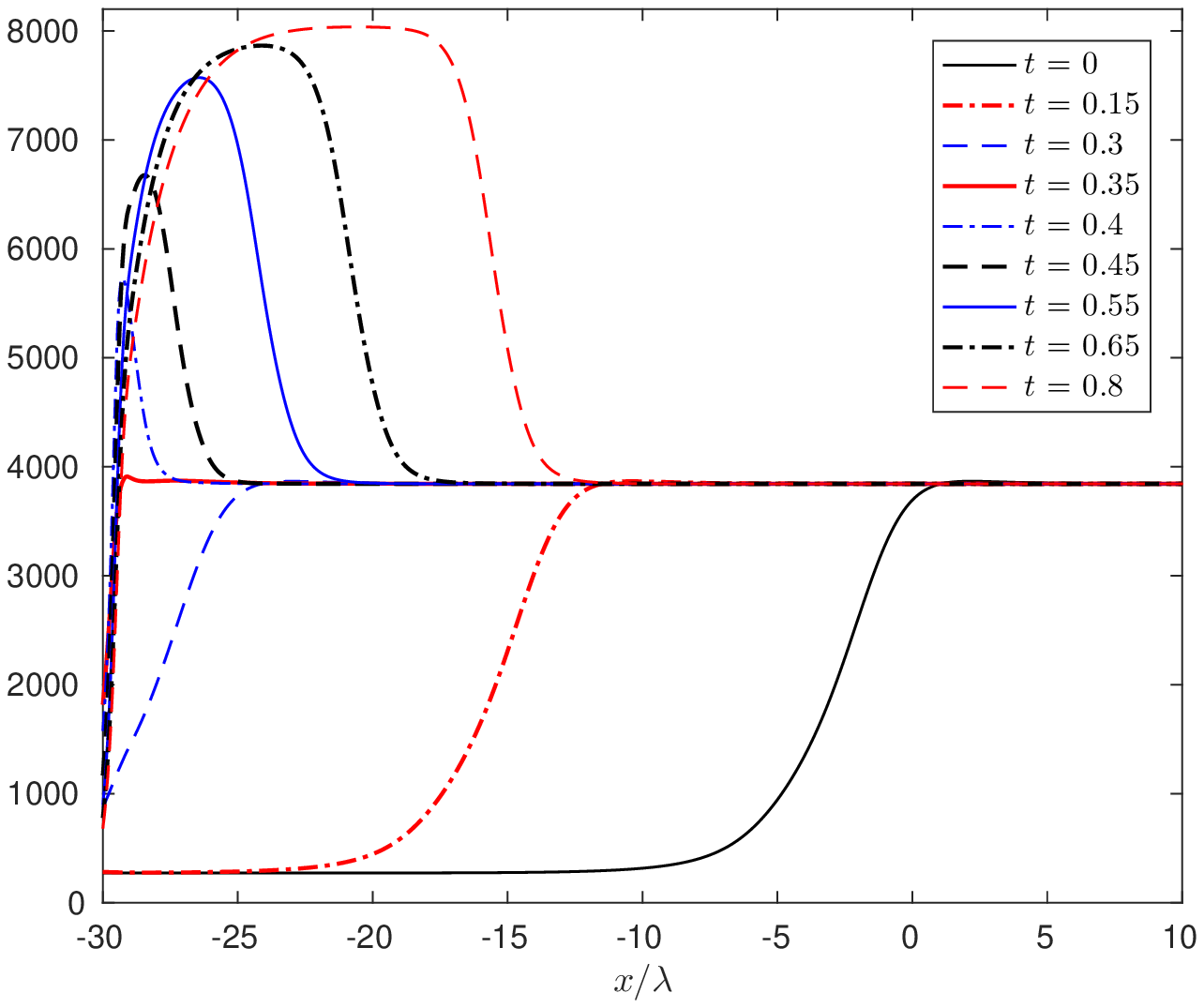}} \\
  \subfloat[Normal stress, $\sigma_{xx}~({\rm kg \cdot m^{-1}\cdot s^{-2}})$]{\includegraphics[width=0.49\textwidth,clip]{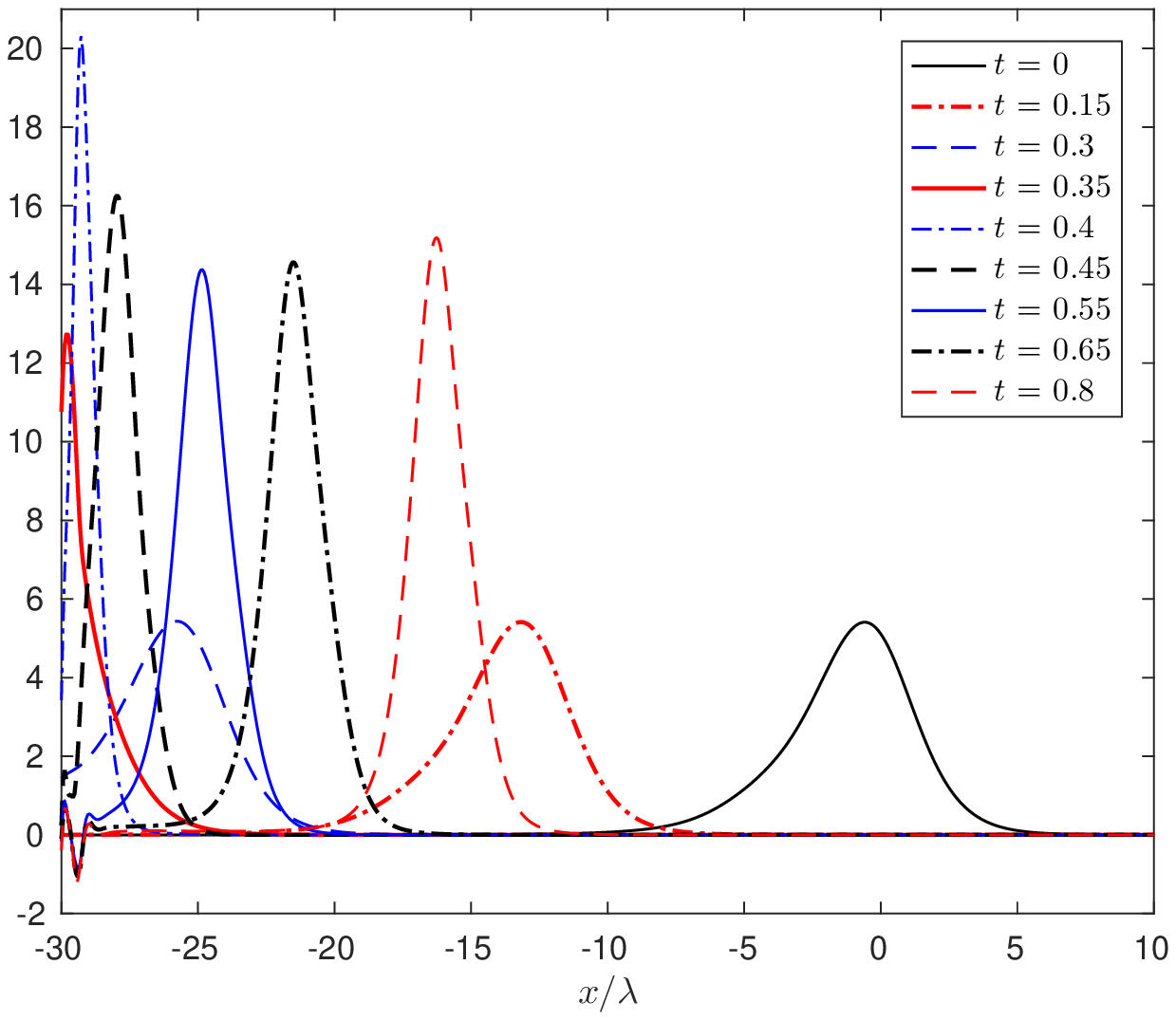}} \hfill
  \subfloat[Heat flux, $q_x~(\rm kg/s^{3})$]{\includegraphics[width=0.49\textwidth,clip]{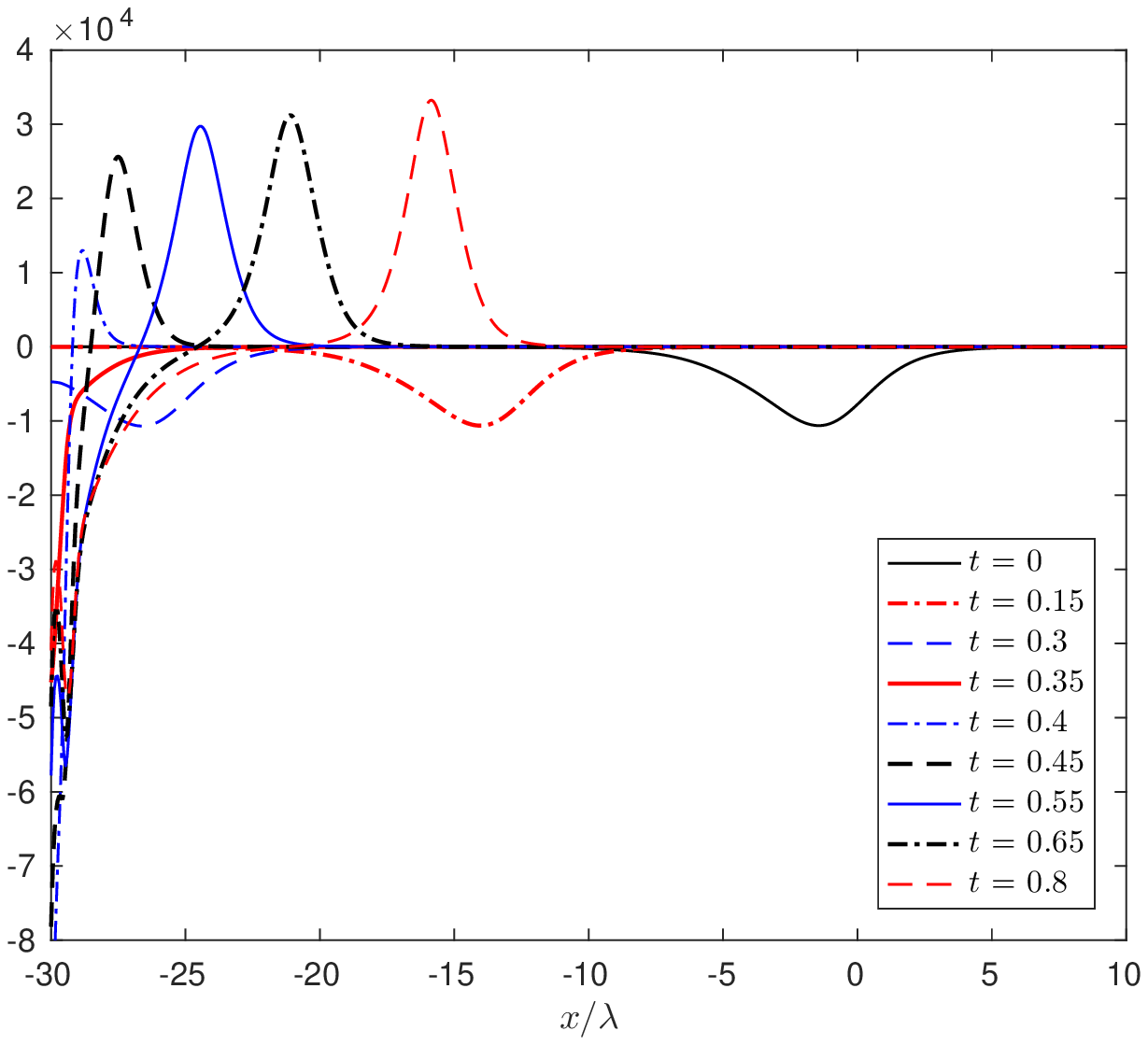}} 
  \caption{Solution for the interaction of a shock with
    $\mathit{Ma}=6.5$ and a solid wall at various time with $M_{0}=10$
    and $M=40$.}
  \label{fig:shock-wall-Ma65}
\end{figure}


\section{Conclusion} \label{sec:conclusion}
We have developed an efficient numerical scheme for the spatially inhomogeneous
Boltzmann equation based on the Burnett spectral method. Two major numerical
strategies are used: (1) we coupled the quadratic collision operator and the BGK
collision operator to balance the computational cost and the modelling
accuracy; (2) we introduced a zero term in the collision term to preserve the
steady state. Both steady and unsteady flows are solved as numerical examples.
Although only one spatial dimension (with three velocity dimensions) is
considered in the simulation, all the examples include large temperature
contrast or fast flow speed, which are still quite challenging. The
computational time shows high numerical efficiency of our method.

\appendix

\section{Inverse power law model}
In this section we provide a full description of the collision kernel of the
inverse power law model. For given $\kappa$, $\mm$ and $\eta$, we have
\begin{displaymath}
B(\bv - \bv_1, \bn) = \left( \frac{2\kappa}{\mm} \right)^{\frac{2}{\eta-1}}
  |\bv - \bv_1|^{\frac{\eta-5}{\eta-1}}
  \frac{W_0(\vartheta) W_0'(\vartheta)}{\sin \vartheta},
\end{displaymath}
where
\begin{displaymath}
\vartheta = \arccos \frac{|(\bv - \bv_1) \cdot \bn|}{|\bv - \bv_1|},
\end{displaymath}
and the function $W_0(\cdot)$ is defined by
\begin{displaymath}
W_0(\vartheta) = \sqrt{1 - y(\vartheta)}
  \left( \frac{\eta-1}{2} y(\vartheta) \right)^{-\frac{1}{\eta-1}},
\end{displaymath}
with $y(\cdot)$ being defined implicitly as
\begin{equation} \label{eq:y}
\int_0^1 \left( 1 - x^2 [1-y(\vartheta)] - x^{\eta-1} y(\vartheta) \right)^{-1/2}
  \sqrt{1 - y(\vartheta)} \,\mathrm{d}x = \vartheta.
\end{equation}
The viscosity coefficient $\overline{\mu}$ can be derived from the first-order
Chapman-Enskog expansion. In \eqref{eq:mu}, the function $A_2(\eta)$ is defined
by
\begin{equation} \label{eq:A_2}
A_2(\eta) = \int_0^{\pi/2} W_0(\vartheta) W_0'(\vartheta)
  \sin^2 (2\vartheta) \,\mathrm{d} \vartheta.
\end{equation}

When $\eta = 5$, the corresponding inverse power law model is also called the
Maxwell model. In this case, for any $l,m,n$, the function $p_{lmn}(\bv^*)
\omega(\bv^*)$ is the eigenfunction of the linearized collision operator $L[f]
= Q(f,\omega) + Q(\omega,f)$. Here we complete the definition of $p_{lmn}$ by
providing the definition of the Laguerre polynomials and spherical harmonics,
which appear in \eqref{eq:p_lmn}:
\begin{equation} \label{eq:LY}
L_n^{(\alpha)}(x) = \frac{x^{-\alpha} \exp(x)}{n!}
  \frac{\mathrm{d}^n}{\mathrm{d}x^n} [x^{n+\alpha} \exp(-x)], \quad
Y_l^m(\bn) = Y_l^m(\theta,\varphi) =
  \sqrt{\frac{2l+1}{4\pi} \frac{(l-m)!}{(l+m)!}}
  P_l^m(\cos \vartheta) \exp(\mathrm{i} m \varphi),
\end{equation}
where $(\vartheta,\varphi)$ is the spherical coordinates of $\bn$, i.e. $\bn =
(\cos \vartheta, \sin \vartheta \cos \varphi, \sin \vartheta \sin \varphi)^T$,
and $P_l^m(\cdot)$ is the associated Legendre function defined by
\begin{displaymath}
P_l^m(x) = \frac{(-1)^m}{2^l l!} (1-x^2)^{m/2}
  \frac{\mathrm{d}^{l+m}}{\mathrm{d}x^{l+m}} [(x^2-1)^l].
\end{displaymath}
When $m = 0$, the polynomial $P_l^0(x)$ is the Legendre polynomial of degree
$l$, which is often denoted by $P_l(x)$.

\section{Coefficients in the collision operator for inverse power law models}
To complete the description of the ODE system \eqref{eq:ode}, we summarize in
this section the results in \cite{Wang2019, Cai2019}, and provide the complete
process to compute these coefficients. The computational cost for computing all
these coefficients is $O(M^{14})$ (see \cite{Cai2019} for details), but this
needs to be done only once before the numerical simulation.

Suppose the index $\eta$ in the power potential is given, below we are going to
provide a sequence of formulas, by which the coefficients $A_{lmn}^{l_1 m_1 n_1,
l_2 m_2 n_2}$ can be computed step by step:
\begin{itemize}
\item Compute the following integral $\mathcal{I}_k$ for $k \leqslant 2M$:
  \begin{displaymath}
  \mathcal{I}_k = \int_0^1 \big[ P_k \big({-}\cos 2\vartheta(y) \big) - 1 \big]
    [2(1-y) + (\eta-1)y] [(\eta-1)y]^{-\frac{\eta+1}{\eta-1}} \,\mathrm{d} y,
  \end{displaymath}
  where $\vartheta(y)$ is the inverse function of $y(\vartheta)$ defined in
  \eqref{eq:y}, and $P_k$ is the Legendre polynomial of degree $k$.
\item Compute $K_{mn}^{kl}$ for $k \leqslant 2M$, $l \leqslant M$, $m\leqslant
  \lfloor k/2 \rfloor$, $n\leqslant \lfloor l/2 \rfloor$, $k -2m = l - 2n$ by
  \begin{displaymath}
  \begin{split}
  K_{mn}^{kl} &= (-1)^{m+n} 2^{\frac{\eta-3}{\eta-1} + k-2m}
    \Gamma \left( k-2m+2 - \frac{2}{\eta-1} \right) C(\eta)
    \mathcal{I}_{k-2m} \times {} \\
  & \qquad \sum_{i=0}^{\min(m,n)}
      \begin{pmatrix} \frac{1}{2} - \frac{2}{\eta-1} \\ m-i \end{pmatrix}
      \begin{pmatrix} \frac{1}{2} - \frac{2}{\eta-1} \\ n-i \end{pmatrix}
      \begin{pmatrix} k-2m+i+1 - \frac{2}{\eta-1} \\ i \end{pmatrix},
  \end{split}
  \end{displaymath}
  where
  \begin{displaymath}
  C(\eta) = \frac{5}{4^{\frac{3}{2} - \frac{2}{\eta-1}}
    \sqrt{\pi} A_2(\eta)\Gamma(4-2(\eta-1))}.
  \end{displaymath}
\item For all $k \leqslant M$, define the homogeneous polynomial $S_k(\bv,\bw)$
  for $\bv,\bw \in \mathbb{R}^3$ by the following recursive formulas:
  \begin{displaymath}
  \begin{gathered}
    S_0(\bv, \bw) = 1, \qquad S_1(\bv, \bw) = \bv\cdot \bw,\\
    S_{k+1}(\bv, \bw) = \frac{2k+1}{k+1}(\bv\cdot \bw)S_k(\bv, \bw) -
    \frac{k}{k+1}(|\bv||\bw|)^2S_{k-1}(\bv, \bw), \quad k \geqslant 2.
  \end{gathered}
  \end{displaymath}
  By the above definition, write $S_k(\cdot, \cdot)$ as
  \begin{displaymath}
  S_k(\bv, \bw) = \sum_{k_1 + k_2 + k_3 = k} \sum_{l_1 + l_2 + l_3 = k}
    S_{k_1 k_2 k_3}^{l_1 l_2 l_3}
      v_1^{k_1} v_2^{k_2} v_3^{k_3} w_1^{l_1} w_2^{l_2} w_3^{l_3},
  \end{displaymath}
  and find all the coefficients $S_{k_1 k_2 k_3}^{l_1 l_2 l_3}$.
\item Compute $C_{m_1 m_2 m_3}^{k_1 k_2 k_3}$ for $k_1 + k_2 + k_3 \leqslant 2M$
  and $m_1 + m_2 + m_3 \leqslant M$ by
  \begin{displaymath}
  C_{m_1 m_2 m_3}^{k_1 k_2 k_3} =
    \frac{(-1)^{m_1 + m_2 + m_3} 4\pi (m_1 + m_2 + m_3)!}
      {(2(k_1 + k_2 + k_3 - m_1 - m_2 - m_3) + 1)!}
    \frac{k_1! k_2! k_3!}{m_1! m_2! m_3!}.
  \end{displaymath}
\item Compute $\gamma_{k_1 k_2 k_3}^{l_1 l_2 l_3}$ for $k_1 + k_2 + k_3
  \leqslant 2M$ and $l_1 + l_2 + l_3 \leqslant M$ by
  \begin{displaymath}
  \begin{split}
  \gamma_{k_1 k_2 k_3}^{l_1 l_2 l_3} &=
    \sum_{m_1=0}^{\lfloor k_1/2 \rfloor}
    \sum_{m_2=0}^{\lfloor k_2/2 \rfloor}
    \sum_{m_3=0}^{\lfloor k_3/2 \rfloor}
    \sum_{n_1=0}^{\lfloor l_1/2 \rfloor}
    \sum_{n_2=0}^{\lfloor l_2/2 \rfloor}
    \sum_{n_3=0}^{\lfloor l_3/2 \rfloor}(2k-4m+1) \times \\
  & \qquad C_{m_1 m_2 m_3}^{k_1 k_2 k_3} C_{n_1 n_2 n_3}^{l_1 l_2 l_3}
    S_{k_1-2m_1,k_2-2m_2,k_3-2m_3}^{l_1-2n_1,l_2-2n_2,l_3-2n_3}
    K_{mn}^{kl},
  \end{split}
  \end{displaymath}
  where
  \begin{displaymath}
  k = k_1 + k_2 + k_3, \quad
  l = l_1 + l_2 + l_3, \quad
  m = m_1 + m_2 + m_3, \quad
  n = n_1 + n_2 + n_3.
  \end{displaymath}
\item Compute $a_{i'j'}^{ij}$ for $i\leqslant M$, $j \leqslant M$, $i'\leqslant
  2M$, $j'\leqslant 2M$, $i+j = i'+j'$ by
  \begin{displaymath}
  a_{i'j'}^{ij} = 2^{-(i'+j')/2} i! j!
  \sum_{s=\max(0,i'-j)}^{\min(i',i)}
  \frac{(-1)^{j'-i+s}}{s!(i-s)!(i'-s)!(j'-i+s)!}.
  \end{displaymath}
\item Compute $B_{k_1 k_2 k_3}^{i_1 i_2 i_3, j_1 j_2 j_3}$ for $i_1 + i_2 + i_3
  \leqslant M$, $j_1 + j_2 + j_3 \leqslant M$, $k_1 + k_2 + k_3 \leqslant M$ by
  \begin{displaymath}
  B_{k_1k_2k_3}^{i_1 i_2 i_3, j_1 j_2 j_3}
    = \sum_{i'_1=0}^{\min(i_1+j_1,k_1)}
      \sum_{i'_2=0}^{\min(i_2+j_2,k_3)}
      \sum_{i'_3=0}^{\min(i_3+j_3,k_3)}
      \frac{ 2^{-k/2}}{2^3\pi^{3/2}}\frac{1}{l'_1!l'_2!l'_3!}
      a_{i'_1j'_1}^{i_1j_1}a_{i'_2j'_2}^{i_2j_2}
      a_{i'_3j'_3}^{i_3j_3}\gamma_{j'_1j'_2j'_3}^{l'_1l'_2l'_3},
  \end{displaymath}
  where
  \begin{displaymath}
  j_s' = i_s + j_s - i_s', \quad l_s' = k_s - i_s', \qquad s=1,2,3.
  \end{displaymath}
\item Given that $C_{000}^{000} = 1$ and
  \begin{displaymath}
  C_{lmn}^{k_1 k_2 k_3} = 0, \qquad |m| > l \text{ or } l < 0 \text{ or } n < 0,
  \end{displaymath}
  compute the coefficients $C_{lmn}^{k_1 k_2 k_3}$ for $k_1 + k_2 + k_3
  \leqslant M$, $l+2n = k_1 + k_2 + k_3$, $-l \leqslant m \leqslant l$ by
  solving the following equations:
  \begin{equation} \label{eq:C}
  \begin{aligned}
    a_{l,m+1,n}^{(-1)}C_{l+1, m, n}^{k_1 k_2 k_3} + b_{l,m+1,n}^{(-1)}C_{l-1, m,
        n+1}^{k_1 k_2 k_3} &= \frac{1}{2}k_1C_{l,m+1,n}^{k_1-1,k_2,k_3} -
      \frac{\mathrm{i}}{2} k_2 C_{l,m+1,n}^{k_1,k_2-1,k_3}, \\
    a_{l,m,n}^{(0)}C_{l+1, m, n}^{k_1 k_2 k_3} + b_{l,m,n}^{(0)}C_{l-1, m,
        n+1}^{k_1 k_2 k_3} &= k_3C_{l,m,n}^{k_1,k_2,k_3-1}, \\
    a_{l,m-1,n}^{(1)}C_{l+1, m, n}^{k_1 k_2 k_3} + b_{l,m-1,n}^{(1)}C_{l-1, m,
        n+1}^{k_1 k_2 k_3} &= -\frac{1}{2}k_1C_{l,m-1,n}^{k_1-1,k_2,k_3} -
      \frac{\mathrm{i}}{2} k_2C_{l,m-1,n}^{k_1,k_2-1,k_3},
  \end{aligned}     
  \end{equation}
  where
  \begin{align*}
  a_{lmn}^{(\mu)} &= \frac{1}{2^{|\mu|}}
    \sqrt{\frac{(2(n+l) +3)[l+(2\delta_{1,\mu}-1)m + \delta_{1,\mu} + 1]
      [l+(2\delta_{1,\mu}-1)m + \delta_{1,\mu} + 1]}{(2l+1)(2l+3)}}, \\
  b_{lmn}^{(\mu)} &= \frac{(-1)^{\mu+1}}{2^{|\mu|}}
    \sqrt{\frac{2(n+1)[l-(2\delta_{1,\mu}-1)m - \delta_{1,\mu}]
      [l-(2\delta_{1,\mu}-1)m - \delta_{1,\mu}]}{(2l-1)(2l+1)}},
  \quad \mu = -1, 0, 1.
  \end{align*}
  The equations \eqref{eq:C} can be applied recursively. Note that \eqref{eq:C}
  includes three equations. They are always consistent so that we can use
  \eqref{eq:C} to solve two coefficients $C_{l+1, m, n}^{k_1 k_2 k_3}$ and
  $C_{l-1,m,n+1}^{k_1 k_2 k_3}$ based on the knowledge of
  $C_{l,m-1,n}^{k_1-1,k_2,k_3}$, $C_{l,m-1,n}^{k_1,k_2-1,k_3}$ and
  $C_{l,m-1,n}^{k_1,k_2,k_3-1}$.
\item Compute the coefficients $A_{lmn}^{l_1 m_1 n_1, l_2 m_2 n_2}$ for $l +
  2n \leqslant M$, $l_1 + 2n_1 \leqslant M$, $l_2 + 2n_2 \leqslant M$, $-l_1
  \leqslant m_1 \leqslant l_1$, $-l_2 \leqslant m_2 \leqslant l_2$, $m = m_1 +
  m_2$ by
  \begin{displaymath}
  \begin{split}
    A_{lmn}^{l_1m_1n_1, l_2m_2n_2} = \sum_{k_1 + k_2 + k_3 = l+2n}
      \sum_{i_1 + i_2 + i_3 = l} \sum_{j \in I_{l_2 + 2n_2}}
      \overline{C_{lmn}^{k_1k_2k_3}} C_{l_1m_1n_1}^{i_1 i_2 i_3}
        C_{l_2m_2n_2}^{j_1 j_2 j_3} B_{k_1 k_2 k_3}^{i_1 i_2 i_3, j_1 j_2 j_3}.
  \end{split}
  \end{displaymath}
\end{itemize}

\section{Some lemmas}
In this section we present some lemmas as the preparation for the proof of
Theorem \ref{thm:Maxwellian}. In the equations appearing in the lemmas below,
we always assume that the indices $l,m,n$ satisfy $(l,m,n) \in \mathbb{N}
\times \mathbb{Z} \times \mathbb{N}$, and $-l \leqslant m \leqslant l$. Any
quantity with indices $l',m',n'$ is considered as zero if $|m'| > l'$ or $l' <
0$ or $n' < 0$ (e.g. $p_{l+1,m,n}$ with $l = m$).

\begin{lemma} \label{lem:p_property}
Burnett polynomials $p_{lmn}(\cdot)$ satisfy the following properties:
\begin{gather}
\label{eq:recurrence}
\begin{split}
v_x^* p_{lmn}(\bv^*) &= 
  \sqrt{n+l+3/2} \gamma_{l+1,m} p_{l+1,m,n}(\bv^*)
  - \sqrt{n} \gamma_{l+1,m} p_{l+1,m,n-1}(\bv^*) \\
& \qquad + \sqrt{n+l+1/2} \gamma_{lm} p_{l-1,m,n}(\bv^*)
  - \sqrt{n+1} \gamma_{lm} p_{l-1,m,n+1}(\bv^*),
\end{split} \\
\label{eq:recurrence0}
v_x^* p_{lm0}(\bv^*) = \sqrt{l+3/2} \gamma_{l+1,m} p_{l+1,m,0}(\bv^*) +
  \sqrt{\frac{1}{l+1/2}} \gamma_{lm} \frac{|\bv^*|^2}{2} p_{l-1,m,0}(\bv^*), \\
\label{eq:derivative}
\frac{\mathrm{d}p_{lmn}(\bv^*)}{\mathrm{d}v_x^*} =
  \sqrt{n+l+1/2} \gamma_{lm} p_{l-1,m,n}(\bv^*) -
    \sqrt{n} \gamma_{l+1,m} p_{l+1,m,n-1}(\bv^*),
\end{gather}
where the $\gamma$ symbol is defined in \eqref{eq:gamma}.
\end{lemma}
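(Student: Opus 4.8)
The plan is to reduce each of the three identities to one-dimensional recurrences for the ingredients of $p_{lmn}$. I write $r = |\bv^*|$ and use the spherical coordinates $(\vartheta,\varphi)$ of $\bv^*$ fixed in the appendix, so that the polar axis is the $v_x^*$-axis and $v_x^* = r\cos\vartheta$. Writing $N_{ln} = \sqrt{2^{1-l}\pi^{3/2}n!/\Gamma(n+l+3/2)}$ for the constant in \eqref{eq:p_lmn}, I regard $p_{lmn}$ as the product $N_{ln}\,L_n^{(l+1/2)}(r^2/2)\,r^l\,Y_l^m(\vartheta,\varphi)$ of a radial and an angular factor. First I would record two angular recurrences. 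From the three-term recurrence of the associated Legendre functions and the normalization in \eqref{eq:LY} one gets $\cos\vartheta\,Y_l^m = \tfrac{1}{\sqrt2}\gamma_{l+1,m}Y_{l+1}^m + \tfrac{1}{\sqrt2}\gamma_{lm}Y_{l-1}^m$, and from the derivative relation for $(1-\mu^2)\,dP_l^m/d\mu$ with $\mu=\cos\vartheta$ one gets $\sin\vartheta\,\partial_\vartheta Y_l^m = \tfrac{l}{\sqrt2}\gamma_{l+1,m}Y_{l+1}^m - \tfrac{l+1}{\sqrt2}\gamma_{lm}Y_{l-1}^m$. The essential check here is that the coefficients $\sqrt{(l+1-m)(l+1+m)/[(2l+1)(2l+3)]}$ and its $l\to l-1$ analogue are exactly $\gamma_{l+1,m}/\sqrt2$ and $\gamma_{lm}/\sqrt2$ for $\gamma$ as in \eqref{eq:gamma}.

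On the radial side I would use three classical Laguerre identities: $\frac{d}{dx}L_n^{(\alpha)}(x) = -L_{n-1}^{(\alpha+1)}(x)$, the contiguous relation $L_n^{(\alpha)}(x)-L_{n-1}^{(\alpha)}(x) = L_n^{(\alpha-1)}(x)$, and $x L_n^{(\alpha)}(x) = (n+\alpha)L_n^{(\alpha-1)}(x) - (n+1)L_{n+1}^{(\alpha-1)}(x)$. I will also need the ratios of the normalization constants, which collapse to the clean forms $N_{ln}/N_{l+1,n} = \sqrt{2(n+l+3/2)}$, $N_{ln}/N_{l+1,n-1} = \sqrt{2n}$, $N_{ln}/N_{l-1,n} = 1/\sqrt{2(n+l+1/2)}$ and $N_{ln}/N_{l-1,n+1} = 1/\sqrt{2(n+1)}$, each following from $\Gamma(z+1)=z\Gamma(z)$.

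To prove \eqref{eq:recurrence} I would write $v_x^*p_{lmn} = r\cos\vartheta\,p_{lmn}$, split the angular factor by the first angular recurrence, and convert the radial factors: on the $Y_{l+1}^m$ side the contiguous relation rewrites $L_n^{(l+1/2)} = L_n^{(l+3/2)} - L_{n-1}^{(l+3/2)}$, producing $p_{l+1,m,n}$ and $p_{l+1,m,n-1}$; on the $Y_{l-1}^m$ side the factor $r^2 = 2\,(r^2/2)$ together with the $xL_n^{(\alpha)}$ identity produces $p_{l-1,m,n}$ and $p_{l-1,m,n+1}$. Inserting the four normalization ratios yields exactly the coefficients in \eqref{eq:recurrence}. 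Identity \eqref{eq:recurrence0} is the case $n=0$: since $L_0^{(\alpha)}\equiv 1$ the radial manipulation is trivial and the factor $|\bv^*|^2/2$ simply remains on the $Y_{l-1}^m$ term, so no Laguerre identity is needed.

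For \eqref{eq:derivative} I would write the Cartesian derivative in spherical coordinates as $\partial_{v_x^*} = \cos\vartheta\,\partial_r - (\sin\vartheta/r)\,\partial_\vartheta$ (the $\varphi$-term vanishes since $\varphi$ is independent of $v_x^*$) and apply it to $N_{ln}R(r)Y_l^m$ with $R(r) = L_n^{(l+1/2)}(r^2/2)r^l$. Feeding in both angular recurrences and collecting the $Y_{l\pm1}^m$ components, the radial coefficients become $R'-lR/r$ and $R'+(l+1)R/r$; the identity $\frac{d}{dx}L_n^{(\alpha)} = -L_{n-1}^{(\alpha+1)}$ collapses the former to $-r^{l+1}L_{n-1}^{(l+3/2)}$, a single multiple of $p_{l+1,m,n-1}$. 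The delicate step is the latter: one must combine $-2\,(r^2/2)L_{n-1}^{(l+3/2)}$ with $(2l+1)L_n^{(l+1/2)}$ and show, via the $xL_n^{(\alpha)}$ identity followed by the contiguous relation, that the $L^{(l+1/2)}$ terms telescope into the single term $(2n+2l+1)L_n^{(l-1/2)}$. This cancellation is exactly what makes the final formula contain two terms rather than a sum over the radial index, and it is the place where I expect the real work; the remaining normalization bookkeeping then delivers the coefficients $\sqrt{n+l+1/2}\,\gamma_{lm}$ and $\sqrt n\,\gamma_{l+1,m}$. Besides this, the other obstacle is verifying the $\sin\vartheta\,\partial_\vartheta Y_l^m$ recurrence and matching every constant back to the $\gamma$ symbol of \eqref{eq:gamma}.
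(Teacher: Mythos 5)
Your proposal is correct. For \eqref{eq:recurrence} and \eqref{eq:recurrence0} you follow essentially the same route as the paper: spherical coordinates with the polar axis along $v_x^*$, the three-term recurrence for $P_l^m$ to split the angular factor, and the two Laguerre contiguous relations to absorb the extra powers of $r$; your normalization ratios $N_{ln}/N_{l\pm1,n'}$ are the right ones and reproduce the stated coefficients. Where you genuinely diverge is \eqref{eq:derivative}. The paper does not differentiate anything: it observes that $\mathrm{d}p_{lmn}/\mathrm{d}v_x^*$ is a polynomial of degree $l+2n-1$, computes its expansion coefficients by integrating against $[p_{l'm'n'}]^{\dagger}\omega$, integrates by parts so that the derivative lands on $\omega$ (producing a factor $v_x^*$) while the term with the derivative on $p_{l'm'n'}$ dies by orthogonality, and then reads off the answer by substituting the already-proven \eqref{eq:recurrence}. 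You instead compute directly via $\partial_{v_x^*}=\cos\vartheta\,\partial_r-(\sin\vartheta/r)\partial_\vartheta$, which forces you to verify the additional angular recurrence for $\sin\vartheta\,\partial_\vartheta Y_l^m$ and to carry out the radial telescoping $-2xL_{n-1}^{(l+3/2)}(x)+(2l+1)L_n^{(l+1/2)}(x)=(2n+2l+1)L_n^{(l-1/2)}(x)$; both of these check out (the latter follows from $xL_{n-1}^{(\alpha+1)}=(n+\alpha)L_{n-1}^{(\alpha)}-nL_n^{(\alpha)}$ and the contiguous relation, exactly as you anticipate), and the normalization bookkeeping gives $\sqrt{n+l+1/2}\,\gamma_{lm}$ and $\sqrt{n}\,\gamma_{l+1,m}$ as required. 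The trade-off: the paper's duality argument is shorter and reuses \eqref{eq:recurrence} so that no new special-function identities are needed, but it is indirect and relies on knowing a priori that the derivative lies in the span of lower-degree Burnett polynomials; your computation is longer and has two extra identities to certify, but it is self-contained and makes the origin of each coefficient explicit.
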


\begin{proof}
The proof of this lemma requires the following identities \cite[see equations
(8.5.3)(22.7.31)(22.7.30)]{Abramowitz}:
\begin{align*}
(l-m+1) P_{l+1}^m(x) &= (2l+1) x P_l^m(x) - (l+m) P_{l-1}^m(x), \\
x L_n^{(\alpha+1)}(x) &= (n+\alpha+1) L_n^{(\alpha)}(x) - (n+1) L_{n+1}^{(\alpha)}(x), \\
L_n^{(\alpha-1)}(x) &= L_n^{(\alpha)}(x) - L_{n-1}^{(\alpha)}(x).
\end{align*}
Representing $\bv^*$ by spherical coordinates $(r \cos \vartheta, r \sin
\vartheta \cos \varphi, r \sin \vartheta \sin \varphi)$, we get
\begin{displaymath}
\begin{split}
& v_x^* p_{lmn}(\bv^*) = r \cos \vartheta \cdot
  \sqrt{\frac{2^{1-l} \pi^{3/2} n!}{\Gamma(n+l+3/2)}}
  L_n^{(l+1/2)} \left( \frac{r^2}{2} \right) r^l \cdot
  \sqrt{\frac{2l+1}{4\pi} \frac{(l-m)!}{(l+m)!}}
  P_l^m(\cos \vartheta) \exp(\mathrm{i} m \varphi) \\
& \quad = \sqrt{\frac{2^{1-l} \pi^{3/2} n!}{\Gamma(n+l+3/2)}}
  \sqrt{\frac{2l+1}{4\pi} \frac{(l-m)!}{(l+m)!}}
  L_n^{(l+1/2)} \left( \frac{r^2}{2} \right) r^{l+1} \times {} \\
& \quad \hspace{50pt} \left(
    \frac{l-m+1}{2l+1} P_{l+1}^m(\cos \vartheta) +
    \frac{l+m}{2l+1} P_{l-1}^m(\cos \vartheta)
  \right) \exp(\mathrm{i} m \varphi) \\
& \quad = \sqrt{\frac{2^{1-l} \pi^{3/2} n!}{\Gamma(n+l+3/2)}}
  \sqrt{\frac{2l+1}{4\pi} \frac{(l-m)!}{(l+m)!}} \Bigg(
  \frac{l-m+1}{2l+1} \left[
    L_n^{(l+3/2)} \left( \frac{r^2}{2} \right) -
    L_{n-1}^{(l+3/2)} \left( \frac{r^2}{2} \right)
  \right] r^{l+1} P_{l+1}^m(\cos \vartheta) \\
& \quad \qquad + \frac{2(l+m)}{2l+1} \left[
    (n+l+1/2) L_n^{(l-1/2)} \left( \frac{r^2}{2} \right)
    - (n+1) L_{n+1}^{(l-1/2)}\left( \frac{r^2}{2} \right)
  \right] r^{l-1} P_{l-1}^m(\cos \vartheta) \Bigg) \exp(\mathrm{i} m \varphi) \\
& \quad = \sqrt{\frac{2(l-m+1)(l+m+1)(n+l+3/2)}{(2l+1)(2l+3)}} p_{l+1,m,n}(\bv^*)
  - \sqrt{\frac{2(l-m+1)(l+m+1)n}{(2l+1)(2l+3)}} p_{l+1,m,n}(\bv^*) \\
& \quad \qquad + \sqrt{\frac{2(l-m)(l+m)(n+l+1/2)}{(2l+1)(2l-1)}} p_{l-1,m,n}(\bv^*)
  - \sqrt{\frac{2(l-m)(l+m)(n+1)}{(2l+1)(2l-1)}} p_{l-1,m,n+1}(\bv^*).
\end{split}
\end{displaymath}
Equation \eqref{eq:recurrence} is a direct result of the above equality by
inserting the definitions of $\gamma_{lm}$ \eqref{eq:gamma}. When $n=0$,
\begin{displaymath}
\begin{split}
v_x^* p_{lm0}(\bv^*) &= \sqrt{\frac{2^{1-l} \pi^{3/2}}{\Gamma(l+3/2)}}
  \sqrt{\frac{2l+1}{4\pi} \frac{(l-m)!}{(l+m)!}} r^{l+1} \left(
    \frac{l-m+1}{2l+1} P_{l+1}^m(\cos \vartheta) +
    \frac{l+m}{2l+1} P_{l-1}^m(\cos \vartheta)
  \right) \exp(\mathrm{i} m \varphi) \\
&= \sqrt{\frac{(l-m+1)(l+m+1)(l+3/2)}{(2l+1)(2l+3)}} p_{l+1,m,0}(\bv^*) +
  \sqrt{\frac{(l-m)(l+m)}{2(l+1/2)(2l+1)(2l-1)}} r^2 p_{l-1,m,0}(\bv^*).
\end{split}
\end{displaymath}
Again, the equality \eqref{eq:recurrence0} can be obtained by inserting the
definition of $\gamma_{lm}$.

Now we prove \eqref{eq:derivative}. It is clear that the equality holds for $l
= n = 0$. If $l > 0$ or $n > 0$, we just need to compute the following integral
for any $l',m',n'$ satisfying $l'+2n' < l+2n$:
\begin{displaymath}
\begin{split}
& \int_{\mathbb{R}^3} [p_{l'm'n'}(\bv^*)]^{\dagger}
  \frac{\mathrm{d} p_{lmn}(\bv^*)}{\mathrm{d} v_x^*} \omega(\bv^*)
  \,\mathrm{d}\bv^* \\
={} & -\int_{\mathbb{R}^3}
  \left( \frac{\mathrm{d} p_{l'm'n'}(\bv^*)}{\mathrm{d}v_x^*} \right)^{\dagger}
  p_{lmn}(\bv^*) \omega(\bv^*) \,\mathrm{d}\bv^* -
  \int_{\mathbb{R}^3} [p_{l'm'n'}(\bv^*)]^{\dagger} p_{lmn}(\bv^*)
    \frac{\mathrm{d} \omega(\bv^*)}{\mathrm{d}v_x^*} \,\mathrm{d}\bv^*,
\end{split}
\end{displaymath}
where the first term on the right-hand side is zero since $p_{l'm'n'}(\cdot)$
is an orthogonal polynomial. Thus,
\begin{displaymath}
\begin{split}
\int_{\mathbb{R}^3} [p_{l'm'n'}(\bv^*)]^{\dagger}
  \frac{\mathrm{d} p_{lmn}(\bv^*)}{\mathrm{d} v_x^*} \omega(\bv^*)
  \,\mathrm{d}\bv^* &=
-\int_{\mathbb{R}^3} [p_{l'm'n'}(\bv^*)]^{\dagger} p_{lmn}(\bv^*)
  \frac{\mathrm{d} \omega(\bv^*)}{\mathrm{d}v_x^*} \,\mathrm{d}\bv^* \\
&= \int_{\mathbb{R}^3} [p_{l'm'n'}(\bv^*)]^{\dagger} v_x^* p_{lmn}(\bv^*)
  \omega(\bv^*) \,\mathrm{d}\bv^*.
\end{split}
\end{displaymath}
Now we insert \eqref{eq:recurrence} to the above equation. By the orthogonality
of $p_{lmn}$, it is not difficult to see that
\begin{displaymath}
\int_{\mathbb{R}^3} [p_{l'm'n'}(\bv^*)]^{\dagger}
  \frac{\mathrm{d} p_{lmn}(\bv^*)}{\mathrm{d} v_x^*} \omega(\bv^*)
  \,\mathrm{d}\bv^* =
\sqrt{n+l+1/2} \gamma_{lm} \delta_{l-1,l'} \delta_{mm'} \delta_{nn'} -
  \sqrt{n} \gamma_{l+1,m} \delta_{l+1,l'} \delta_{mm'} \delta_{n+1,n'},
\end{displaymath}
which implies the equality \eqref{eq:derivative}.
\end{proof}

An immediate corollary of the above lemma is the parallel properties for
$p_{lmn}^{[\lu,\lth]}(\cdot)$:
\begin{lemma}
The polynomials $p_{lmn}^{[\lu,\lth]}$ defined in \eqref{eq:p_omega} satisfy
the following properties:
\begin{gather}
\label{eq:vp}
\begin{split}
v_x p_{lmn}^{[\lu,\lth]}(\bv) = \overline{u}_x p_{lmn}^{[\lu,\lth]}(\bv) &+
  \lth \left(
    \sqrt{n+l+3/2} \gamma_{l+1,m} p_{l+1,m,n}^{[\lu,\lth]}(\bv) -
    \sqrt{n+1} \gamma_{-l,m} p_{l-1,m,n+1}^{[\lu,\lth]}(\bv)
  \right) \\
  & +\sqrt{n+l+1/2} \gamma_{-l,m} p_{l-1,m,n}^{[\lu,\lth]}(\bv) -
    \sqrt{n} \gamma_{l+1,m} p_{l+1,m,n-1}^{[\lu,\lth]}(\bv).
\end{split} \\
\label{eq:vp0}
v_x p_{lm0}^{[\lu,\lth]}(\bv) =
  \sqrt{l+3/2} \gamma_{l+1,m} p_{l+1,m,0}^{[\lu,\lth]}(\bv) +
  \sqrt{\frac{1}{l+1/2}} \gamma_{lm} \frac{|\bv|^2}{2}
    p_{l-1,m,0}^{[\lu,\lth]}(\bv), \\
\label{eq:differential}
\frac{\mathrm{d} p_{lmn}^{[\lu,\lth]}(\bv)}{\mathrm{d}v_x} =
  \lth^{-1} \left[ \sqrt{n+l+1/2} \gamma_{lm} p_{l-1,m,n}^{[\lu,\lth]}(\bv)
    -\sqrt{n} \gamma_{l+1,m} p_{l+1,m,n-1}^{[\lu,\lth]}(\bv) \right].
\end{gather}
\end{lemma}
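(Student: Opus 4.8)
The plan is to obtain \eqref{eq:vp}--\eqref{eq:differential} directly from their unscaled counterparts \eqref{eq:recurrence}, \eqref{eq:recurrence0} and \eqref{eq:derivative} in Lemma~\ref{lem:p_property} by exploiting the affine rescaling built into the definition \eqref{eq:p_omega}. First I would fix the change of variables $\bv^{*} = (\bv - \lu)/\sqrt{\lth}$, so that $v_x^{*} = (v_x - \overline{u}_x)/\sqrt{\lth}$, and invert \eqref{eq:p_omega} to read, for any admissible triple, $p_{l'm'n'}(\bv^{*}) = \lth^{(l'+2n')/2} p_{l'm'n'}^{[\lu,\lth]}(\bv)$. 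The essential observation is that each Burnett polynomial carries a power of $\lth$ governed by its own total degree $l'+2n'$, so the whole proof reduces to substituting this relation into the identities of Lemma~\ref{lem:p_property} and collecting the resulting powers of $\lth$.

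For \eqref{eq:vp}, I would take \eqref{eq:recurrence}, replace every $p_{l'm'n'}(\bv^{*})$ by $\lth^{(l'+2n')/2}p_{l'm'n'}^{[\lu,\lth]}(\bv)$, and use $v_x^{*}p_{lmn}(\bv^{*}) = \lth^{(l+2n-1)/2}(v_x - \overline{u}_x)p_{lmn}^{[\lu,\lth]}(\bv)$ on the left. The four terms on the right have total degrees $l+2n\pm 1$, so after dividing through by $\lth^{(l+2n-1)/2}$ the two degree-raising terms ($p_{l+1,m,n}$ and $p_{l-1,m,n+1}$) acquire a factor $\lth$ while the two degree-lowering terms ($p_{l-1,m,n}$ and $p_{l+1,m,n-1}$) acquire $\lth^{0}$; moving the $\overline{u}_x p_{lmn}^{[\lu,\lth]}$ contribution to the right then reproduces \eqref{eq:vp} verbatim. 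The only cosmetic point is that the coefficients there appear as $\gamma_{-l,m}$ rather than $\gamma_{lm}$, which I would dispatch by checking directly from \eqref{eq:gamma} that $\gamma_{-l,m} = \gamma_{lm}$.

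The remaining two identities follow the same template. For \eqref{eq:vp0} I would start from \eqref{eq:recurrence0} and additionally rewrite the quadratic factor via $|\bv^{*}|^{2} = |\bv - \lu|^{2}/\lth$ before collecting the powers of $\lth$ generated by the two terms of degrees $l\pm1$. For \eqref{eq:differential} I would instead differentiate \eqref{eq:p_omega} by the chain rule, which produces the extra factor $(\sqrt{\lth})^{-1}$ from $\mathrm{d}v_x^{*}/\mathrm{d}v_x$, then insert \eqref{eq:derivative} and re-express the two resulting polynomials in the $[\lu,\lth]$-scaling; the net exponent works out to $\lth^{-1}$, matching \eqref{eq:differential}. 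The main (and essentially only) obstacle is bookkeeping: keeping the half-integer exponents of $\lth$ straight as the total degree shifts by $\pm 1$ from term to term, together with the affine shift by $\lu$ on the left-hand side. Once the degree counting is organized as above, no estimates or further structural arguments are needed, which is why the statement is an immediate corollary of Lemma~\ref{lem:p_property}.
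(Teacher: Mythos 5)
Your proposal is correct and is exactly the paper's (omitted) argument: the paper's entire proof of this lemma is the one-line remark that the identities ``can be directly obtained from Lemma \ref{lem:p_property} by replacing $\bv^*$ with $(\bv - \lu)/\sqrt{\lth}$,'' and your degree-counting via $p_{l'm'n'}(\bv^*) = \lth^{(l'+2n')/2}\,p_{l'm'n'}^{[\lu,\lth]}(\bv)$ is precisely the bookkeeping that remark leaves to the reader; your checks that $\gamma_{-l,m}=\gamma_{lm}$ and that the chain rule supplies the $\lth^{-1/2}$ in \eqref{eq:differential} are both right, and the exponent counts for \eqref{eq:vp} and \eqref{eq:differential} work out exactly as you describe.

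One caveat on the middle identity: if you actually carry out your substitution for $n=0$ you obtain
$(v_x-\overline{u}_x)\,p_{lm0}^{[\lu,\lth]}(\bv) = \lth\,\sqrt{l+3/2}\,\gamma_{l+1,m}\,p_{l+1,m,0}^{[\lu,\lth]}(\bv) + \sqrt{1/(l+1/2)}\,\gamma_{lm}\,\tfrac{|\bv-\lu|^2}{2\lth}\,p_{l-1,m,0}^{[\lu,\lth]}(\bv)$,
which is not \eqref{eq:vp0} as printed (the printed version drops the $\overline{u}_x$ shift and the factor $\lth$ on the degree-raising term, and writes $|\bv|^2/2$ instead of $|\bv-\lu|^2/(2\lth)$). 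This is a typo in the paper's statement rather than a flaw in your method --- the way \eqref{eq:vp0} is subsequently used to derive \eqref{eq:tildeM0} is consistent with the corrected form above --- but your claim that the substitution reproduces \eqref{eq:vp0} verbatim is not literally accurate, and you should record the corrected identity rather than the printed one.
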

This equations can be directly obtained from Lemma \ref{lem:p_property} by
replacing $\bv^*$ with $(\bv - \lu) / \sqrt{\lth}$. The detail of the proof is
omitted. Note that the equation \eqref{eq:vp} is the same as
\eqref{eq:recursion}.

\begin{lemma}
For
\begin{displaymath}
\mathcal{M}(\bv) = \frac{\rho}{\mm(2\pi \theta)^{3/2}}
  \exp \left( -\frac{|\bv - \bu|^2}{2\theta} \right),
\end{displaymath}
it holds that
\begin{equation} \label{eq:n=0}
\mm \int_{\mathbb{R}^3} \left[ p_{lm0}^{[\lu,\lth]}(\bv) \right]^{\dagger}
  \mathcal{M}(\bv) \,\mathrm{d}\bv
= \rho \left[ p_{lm0}^{[\lu,\lth]}(\bu) \right]^{\dagger}.
\end{equation}
\end{lemma}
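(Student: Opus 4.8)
The plan is to exploit the degeneracy of the Burnett polynomial at $n=0$: it collapses to a solid spherical harmonic, which is a \emph{harmonic} homogeneous polynomial, and then to invoke the mean value property of harmonic functions against the isotropic Gaussian $\mathcal{M}$.

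First I would simplify $p_{lm0}$. Since $L_0^{(l+1/2)}\equiv 1$, the definition \eqref{eq:p_lmn} gives $p_{lm0}(\bw)=c_l\,|\bw|^l Y_l^m(\bw/|\bw|)$ with $c_l=\sqrt{2^{1-l}\pi^{3/2}/\Gamma(l+3/2)}$, and this is precisely the solid harmonic of degree $l$: a homogeneous polynomial of degree $l$ satisfying $\Delta_{\bw} p_{lm0}=0$. By the definition \eqref{eq:p_omega} together with homogeneity, this yields $p_{lm0}^{[\lu,\lth]}(\bv)=\lth^{-l}\,p_{lm0}(\bv-\lu)$, and in particular $p_{lm0}^{[\lu,\lth]}(\bu)=\lth^{-l}p_{lm0}(\bu-\lu)$. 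Substituting into \eqref{eq:n=0} and cancelling the common factors $\lth^{-l}$ and $\rho/\mm$, the assertion reduces to the mean value identity
\[
\frac{1}{(2\pi\theta)^{3/2}}\int_{\mathbb{R}^3}
  \big[p_{lm0}(\bv-\lu)\big]^{\dagger}
  \exp\!\left(-\frac{|\bv-\bu|^2}{2\theta}\right)\,\mathrm{d}\bv
= \big[p_{lm0}(\bu-\lu)\big]^{\dagger}.
\]

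Next I would establish this identity. Setting $\bw=\bv-\bu$ and $\bc=\bu-\lu$, and noting that $[p_{lm0}]^{\dagger}$ (the complex conjugate polynomial) is again harmonic, the left-hand side is the isotropic Gaussian average of the harmonic polynomial $\bw\mapsto [p_{lm0}]^{\dagger}(\bw+\bc)$; the goal is to show it equals the value at $\bw=0$, namely $[p_{lm0}]^{\dagger}(\bc)$. Passing to spherical coordinates $\bw=r\bn$ with $\bn\in\mathbb{S}^2$, the inner angular integral $\int_{\mathbb{S}^2}[p_{lm0}]^{\dagger}(\bc+r\bn)\,\mathrm{d}\bn$ is the spherical average of a harmonic function over the sphere of radius $r$ centred at $\bc$, which by the classical mean value theorem equals $4\pi\,[p_{lm0}]^{\dagger}(\bc)$, independent of $r$. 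Pulling this constant out and using $\int_0^{\infty}e^{-r^2/(2\theta)}r^2\,\mathrm{d}r=(2\pi\theta)^{3/2}/(4\pi)$ then gives exactly the claimed value. Equivalently, one may invoke the heat-semigroup identity stating that the isotropic Gaussian average of a polynomial $P$ equals $(e^{(\theta/2)\Delta}P)(\bc)$, which collapses to $P(\bc)$ the moment $\Delta p_{lm0}=0$ is used.

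The only genuinely substantive step is the harmonicity of $p_{lm0}$ and the ensuing vanishing of all $\theta$-dependent corrections; this is what forces the right-hand side to be free of $\theta$, as it must be. The remaining care points are purely bookkeeping: tracking the powers of $\lth$ in \eqref{eq:p_omega} so that the $\lth^{-l}$ factors on the two sides cancel exactly, and checking that taking the complex conjugate preserves harmonicity (it does, since conjugation only conjugates the polynomial's coefficients, and translation commutes with the Laplacian). I do not anticipate any analytic difficulty beyond these observations.
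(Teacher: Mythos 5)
Your proof is correct, but it takes a genuinely different route from the paper's. The paper proves this lemma by invoking an addition (translation) theorem for solid harmonics from the literature, which expands $p_{lm0}^{[\lu,\lth]}(\bv)$ as a finite sum of products $p_{\lambda\mu 0}^{[\lu,\lth]}(\bu)\, p_{l-\lambda,m-\mu,0}^{[\bu,\lth]}(\bv)$; it then shows by the angular orthogonality $\int_{\mathbb{S}^2} Y_{l'}^{m'}(\bn)\,\mathrm{d}\bn = \sqrt{4\pi}\,\delta_{l'0}\delta_{m'0}$ that integrating a solid harmonic centred at $\bu$ against the Maxwellian centred at $\bu$ kills every term except $\lambda=l$, $\mu=m$, which yields exactly $\rho\,[p_{lm0}^{[\lu,\lth]}(\bu)]^{\dagger}$. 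You instead observe that $p_{lm0}$ is a harmonic homogeneous polynomial and reduce the claim to the mean value property of harmonic functions under an isotropic Gaussian average (equivalently, $e^{(\theta/2)\Delta}P = P$ when $\Delta P = 0$). Both arguments ultimately rest on the same fact --- spherical averages of degree-$l$ solid harmonics vanish for $l\geqslant 1$ --- but yours is more self-contained, needing no external addition formula, and it isolates the structural reason the answer is independent of $\theta$. The paper's route has the advantage that the explicit translation formula it cites is a reusable identity in its own right. Your bookkeeping of the $\lth^{-l}$ homogeneity factors, the radial integral $\int_0^{\infty} e^{-r^2/(2\theta)}r^2\,\mathrm{d}r = (2\pi\theta)^{3/2}/(4\pi)$, and the remark that complex conjugation and translation preserve harmonicity are all correct, so I see no gap.
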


\begin{proof}
The proof uses the following formula \cite{Caola1978}:
\begin{displaymath}
p_{lm0}^{[\lu,\lth]}(\bv) = \sum_{\lambda=0}^l \sum_{\mu=-\lambda}^{\lambda}
  \begin{pmatrix} l+m \\ \lambda+\mu \end{pmatrix}^{1/2}
  \begin{pmatrix} l-m \\ \lambda-\mu \end{pmatrix}^{1/2}
  \sqrt{\frac{\Gamma(l-\lambda+1/2) \Gamma(\lambda+1/2)}
    {\sqrt{\pi} \Gamma(l+1/2)}} p_{\lambda\mu0}^{[\lu,\lth]}(\bu)
  p_{l-\lambda,m-\mu,0}^{[\bu,\lth]}(\bv).
\end{displaymath}
Since
\begin{displaymath}
\begin{split}
& \mm \int_{\mathbb{R}^3}
  \left[ p_{lm0}^{[\bu,\lth]}(\bv) \right]^{\dagger}
  \mathcal{M}(\bv) \,\mathrm{d}{\bv} \\
={} & \frac{\rho}{(2\pi \theta)^{3/2}}
  \sqrt{\frac{2^{1-l} \pi^{3/2}}{\Gamma(l+3/2)}}
  \int_{\mathbb{R}^3} \left| \frac{\bv - \bu}{\sqrt{\lth}} \right|^l
    \left[ Y_l^m \left( \frac{\bv - \bu}{|\bv - \bu|} \right) \right]^{\dagger}
    \exp \left( -\frac{|\bv - \bu|^2}{2\theta} \right)
  \,\mathrm{d}\bv \\
={} & \frac{\rho}{(2\pi \theta)^{3/2}}
  \sqrt{\frac{2^{1-l} \pi^{3/2}}{\Gamma(l+3/2)}}
  \int_0^{+\infty} \left( \frac{r}{\sqrt{\lth}} \right)^l
    \exp \left(-\frac{r^2}{2\theta} \right) r^2 \,\mathrm{d}r
  \int_{\mathbb{S}^2} Y_l^m(\bn) \,\mathrm{d}\bn \\
={} & \frac{\rho}{\theta^{3/2}} \sqrt{\frac{2^{-l}}{\Gamma(l+3/2) \sqrt{\pi}}}
  \delta_{l0} \delta_{m0} \int_0^{+\infty} \left( \frac{r}{\sqrt{\lth}} \right)^l
    \exp \left(-\frac{r^2}{2\theta} \right) r^2 \,\mathrm{d}r
= \rho \delta_{l0} \delta_{m0},
\end{split}
\end{displaymath}
we obtain
\begin{displaymath}
\begin{split}
& \mm \int_{\mathbb{R}^3}
  \left[ p_{lm0}^{[\lu,\lth]}(\bv) \right]^{\dagger}
  \mathcal{M}(\bv) \,\mathrm{d}{\bv} \\
={} & \sum_{\lambda=0}^l \sum_{\mu=-\lambda}^{\lambda}
  \begin{pmatrix} l+m \\ \lambda+\mu \end{pmatrix}^{1/2}
  \begin{pmatrix} l-m \\ \lambda-\mu \end{pmatrix}^{1/2}
  \sqrt{\frac{\Gamma(l-\lambda+1/2) \Gamma(\lambda+1/2)}{\sqrt{\pi} \Gamma(l+1/2)}}
  \left[ p_{\lambda\mu0}^{[\lu,\lth]}(\bu) \right]^{\dagger}
  \rho \delta_{l-\lambda,0} \delta_{m-\mu,0}
    = \rho \left[ p_{lm0}^{[\lu,\lth]}(\bu) \right]^{\dagger},
\end{split}
\end{displaymath}
which completes the proof.
\end{proof}

\begin{lemma}
For $\mathcal{M}(\bv)$ defined in \eqref{eq:general_Maxwellian}, it holds that
\begin{equation} \label{eq:vpM}
\begin{split}
& \mm \int_{\mathbb{R}^3} v_x \left[ p_{lmn}^{[\lu,\lth]}(\bv) \right]^{\dagger}
  \mathcal{M}(\bv) \,\mathrm{d}\bv \\
={} & \mm \int_{\mathbb{R}^3} \left[
  u_x p_{lmn}^{[\lu,\lth]}(\bv) + 
  \sqrt{n+l+1/2} \gamma_{lm} (\theta / \lth) p_{l-1,m,n}^{[\lu,\lth]}(\bv) -
  \sqrt{n} \gamma_{l+1,m} (\theta / \lth) p_{l+1,m,n-1}^{[\lu,\lth]}(\bv)
\right]^{\dagger} \mathcal{M}(\bv) \,\mathrm{d}\bv.
\end{split}
\end{equation}
\end{lemma}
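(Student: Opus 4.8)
The plan is to move the scalar factor $v_x$ off the test polynomial and onto the Maxwellian, exploiting the elementary differential identity satisfied by $\mathcal{M}$ from \eqref{eq:general_Maxwellian}, namely $\partial_{v_x}\mathcal{M}(\bv) = -\theta^{-1}(v_x - u_x)\mathcal{M}(\bv)$. Rearranging this gives
\[
v_x \mathcal{M}(\bv) = u_x \mathcal{M}(\bv) - \theta\,\partial_{v_x}\mathcal{M}(\bv),
\]
which I would substitute into the left-hand side of \eqref{eq:vpM}. The $u_x$ piece reproduces the first term on the right-hand side immediately, so all the remaining work is concentrated in the contribution $-\theta\,\mm\int_{\mathbb{R}^3}[p_{lmn}^{[\lu,\lth]}]^{\dagger}\,\partial_{v_x}\mathcal{M}\,\mathrm{d}\bv$.

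Next I would integrate this remaining term by parts in $v_x$. This transfers the derivative onto the conjugated polynomial and flips the sign, yielding $\theta\,\mm\int_{\mathbb{R}^3}\partial_{v_x}[p_{lmn}^{[\lu,\lth]}]^{\dagger}\,\mathcal{M}\,\mathrm{d}\bv$. Because $v_x$ is a real variable, differentiation commutes with complex conjugation, and all the coefficients $\gamma_{lm}$ in \eqref{eq:differential} are real; hence the differential recurrence \eqref{eq:differential} can be applied directly under the dagger,
\[
\partial_{v_x}[p_{lmn}^{[\lu,\lth]}]^{\dagger}
= \lth^{-1}\big[\sqrt{n+l+1/2}\,\gamma_{lm}\,p_{l-1,m,n}^{[\lu,\lth]}
  - \sqrt{n}\,\gamma_{l+1,m}\,p_{l+1,m,n-1}^{[\lu,\lth]}\big]^{\dagger}.
\]
Substituting this and collecting the prefactor $\theta\cdot\lth^{-1} = \theta/\lth$ produces exactly the second and third terms on the right-hand side of \eqref{eq:vpM}; adding back the $u_x$ piece then completes the identity.

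I do not anticipate a genuine obstacle here. The only point requiring a word of justification is the vanishing of the boundary term in the integration by parts, which follows from the fact that $p_{lmn}^{[\lu,\lth]}$ grows only polynomially in $|\bv|$ while $\mathcal{M}$ decays like a Gaussian, so the product tends to zero as $v_x\to\pm\infty$. Reality of $u_x$ (since $\bu\in\mathbb{R}^3$) and of the $\gamma$ symbols guarantees that the conjugation passes through cleanly, and everything else is bookkeeping. In particular, no separate treatment of the edge cases $n=0$ or $l=0$ is needed, since the standing convention that polynomials with out-of-range indices are taken to be zero makes \eqref{eq:differential} valid uniformly.
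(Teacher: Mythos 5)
Your proposal is correct and follows essentially the same route as the paper: rewrite $v_x\mathcal{M} = u_x\mathcal{M} - \theta\,\partial_{v_x}\mathcal{M}$, integrate by parts to move the derivative onto the polynomial, and then apply the differential recurrence \eqref{eq:differential}, with the factor $\theta\lth^{-1}$ giving the stated coefficients. The paper compresses this into a single integration-by-parts display, but the substance is identical.
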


\begin{proof}
By integration by parts,
\begin{displaymath}
\begin{split}
\mm \int_{\mathbb{R}^3} v_x \left[ p_{lmn}^{[\lu,\lth]}(\bv) \right]^{\dagger}
  \mathcal{M}(\bv) \,\mathrm{d}\bv
&= \mm \int_{\mathbb{R}^3} u_x \left[ p_{lmn}^{[\lu,\lth]}(\bv) \right]^{\dagger}
  \mathcal{M}(\bv) \,\mathrm{d}\bv
- \mm \theta \int_{\mathbb{R}^3} \left[ p_{lmn}^{[\lu,\lth]}(\bv) \right]^{\dagger}
  \frac{\mathrm{d} \mathcal{M}(\bv)}{\mathrm{d} v_x} \,\mathrm{d}\bv \\
&= \mm \int_{\mathbb{R}^3} \left[
  u_x p_{lmn}^{[\lu,\lth]}(\bv) +
  \theta \frac{\mathrm{d} p_{lmn}^{[\lu,\lth]}(\bv)}{\mathrm{d}v_x}
\right]^{\dagger} \mathcal{M}(\bv) \,\mathrm{d}\bv.
\end{split}
\end{displaymath}
Then \eqref{eq:vpM} can be obtained by inserting \eqref{eq:differential} into
the above equation.
\end{proof}

\section{Proof of Theorem \ref{thm:Maxwellian}}
\begin{proof}
Since $\lth > \theta/2$, by straightforward calculation, we get
\begin{displaymath}
\int_{\mathbb{R}^3} [\mathcal{M}(\bv)]^2
  [\omega^{[\lu,\lth]}(\bv)]^{-1} \,\mathrm{d}\bv
= \frac{\rho^2 \lth^3}{\mm [\theta (2\lth - \theta)]^{3/2}}
  \exp \left( \frac{|\bu - \lu|^2}{2\lth - \theta} \right) < +\infty.
\end{displaymath}
Therefore the expansion \eqref{eq:M_expansion} holds for
\begin{equation} \label{eq:M_lmn}
\tilde{\mathcal{M}}_{lmn} = \mm \lth^{l+2n} \int_{\mathbb{R}^3}
  \left[ p_{lmn}^{[\lu,\lth]} (\bv) \right]^{\dagger} \mathcal{M}(\bv)
\,\mathrm{d}\bv.
\end{equation}
When $n > 0$, we can apply \eqref{eq:vp} to get
\begin{displaymath}
\begin{split}
\tilde{\mathcal{M}}_{lmn} &= \frac{1}{\sqrt{n} \gamma_{l+1,m}} \Bigg(
  \sqrt{n+l+3/2} \gamma_{l+2,m} \tilde{\mathcal{M}}_{l+2,m,n-1} -
  \sqrt{n-1} \gamma_{l+2,m} \lth \tilde{\mathcal{M}}_{l+2,m,n-2} \\
& \qquad + \sqrt{n+l+1/2} \gamma_{l+1,m} \lth \tilde{\mathcal{M}}_{l,m,n-1}
  + \bar{u}_x \tilde{\mathcal{M}}_{l+1,m,n-1}
  - \mm \int_{\mathbb{R}^3}
    v_x \left[ p_{l+1,m,n-1}^{[\lu,\lth]}(\bv) \right]^{\dagger}
    \mathcal{M}(\bv) \,\mathrm{d}\bv \Bigg).
\end{split}
\end{displaymath}
The integral term in the above equation can be calculated by applying
\eqref{eq:vpM}, and the result will be \eqref{eq:recur} after simplification.
Similarly, combining \eqref{eq:vp0}, \eqref{eq:n=0} and \eqref{eq:M_lmn} yields
the iterative formula \eqref{eq:tildeM0}.

The proof of the initial condition \eqref{eq:tildeMll0} requires the following
formula:
\begin{displaymath}
|\bv^*|^l Y_l^{\pm l} \left( \frac{\bv^*}{|\bv^*|} \right) =
  \frac{(\mp 1)^l}{2^l l!} \sqrt{\frac{(2l+1)!}{4\pi}}
  (v_y^* \pm \mathrm{i} v_z^*)^l,
\end{displaymath}
from which we know that
\begin{equation} \label{eq:p_ll0}
p_{l,\pm l, 0}^{[\lu,\lth]}(\bv) = \lth^{-l} \frac{(\pm 1)^l}{2^l l!}
  \sqrt{\frac{2^{1-l} \pi^{3/2}}{\Gamma(l+3/2)}} \sqrt{\frac{(2l+1)!}{4\pi}}
  (v_y^* \pm \mathrm{i} v_z^*)^l =
(\mp 1)^l \sqrt{\frac{1}{2^l l!}} \, \lth^{-l} (v_y^* \pm \mathrm{i} v_z^*)^l,
\end{equation}
where we have used
\begin{displaymath}
\Gamma(l+3/2) = \frac{(2l+2)!}{4^{l+1} (l+1)!} \sqrt{\pi}
\end{displaymath}
in the second equality of \eqref{eq:p_ll0}. Then \eqref{eq:tildeMll0} is an
immediate result of \eqref{eq:n=0}, \eqref{eq:M_lmn} and \eqref{eq:p_ll0}.
\end{proof}

\bibliographystyle{amsplain}
\bibliography{../article}
\end{document}